\documentclass[conference]{IEEEtran}

\usepackage{amsmath}
\usepackage{amsfonts}
\usepackage{amssymb}
\usepackage{amsthm}
\usepackage{xcolor}
\usepackage{array}
\PassOptionsToPackage{hyphens}{url}
\usepackage[hidelinks]{hyperref}

\usepackage{tikz}
\usetikzlibrary{
  arrows.meta,positioning,calc,
  shapes.misc,backgrounds,fit
}
\usepackage{subcaption}
\usepackage{stfloats}
\usepackage{arydshln}

\usepackage{paralist}
\usepackage{relsize}
\usepackage{booktabs}
\usepackage{xstring}
\usepackage{seqsplit}
\usepackage{float}
\newfloat{algorithm}{tbp}{loa}
\floatname{algorithm}{Algorithm}
\newcommand{\addr}[1]{%
  \texttt{\StrLeft{#1}{6}\ldots\StrRight{#1}{4}}%
}
\definecolor{okBlue}{HTML}{0072B2}      % the position being unwound
\definecolor{okOrange}{HTML}{E69F00}    % rewards being sold
\definecolor{okVermillion}{HTML}{D55E00}% the asset that appears to loop
\definecolor{okGreen}{HTML}{009E73}     % where the value ends up
\definecolor{okGrey}{HTML}{555555}      % structure, no semantics
\definecolor{okSkyBlue}{HTML}{56B4E9}
\definecolor{okPurple}{HTML}{CC79A7}
\definecolor{darkgreen}{HTML}{009E73}
\definecolor{darkred}{HTML}{D55E00}
\definecolor{poolblue}{HTML}{0072B2}
\definecolor{acctgray}{RGB}{180,180,180}
\definecolor{scamber}{HTML}{E69F00}

\newtheorem{definition}{Definition}
\newtheorem{property}{Property}
\newtheorem{theorem}{Theorem}

\newtheorem{proposition}{Proposition}

\title{If It Walks Like an Arbitrage: Protocol-Agnostic Detection with Decidable Structural Equivalence}

\author{%
\IEEEauthorblockN{Adam Khayam, Hamid Kolli,
  Mohamed Iguernlala, \c{C}agdas Bozman}
\IEEEauthorblockA{Functori, Paris, France}}

\IEEEoverridecommandlockouts
\makeatletter\def\@IEEEpubidpullup{6.5\baselineskip}\makeatother
\IEEEpubid{\parbox{\columnwidth}{
    Preprint.  Under submission to a peer-reviewed
    security venue.
}
\hspace{\columnsep}\makebox[\columnwidth]{}}

\begin{document}

\maketitle

\IEEEpeerreviewmaketitle

\begin{abstract}
 Whether a transaction performed an arbitrage, and
  by which route, is a question asked of its
  execution trace after the fact.  We conjecture
  that such traces admit a normal form on which
  questions of this kind become queries, and we
  test it by building one and putting it to work.
  Each trace becomes an abstract syntax tree of
  token transfers, grouped by call-frame nesting.
  A term rewriting system of 16~rules reduces it.  Rewriting terminates and
  carries exactly the transfers of the trace,
  whichever order the rules fire in.  Under a
  deterministic kernel scanning the EVM-fixed trace
  order, every trace has a unique normal form, and
  the structural equivalence this induces on fund
  flows is decidable.  Preservation, termination, soundness,
  uniqueness and decidability are mechanized in
  Rocq with zero admitted obligations.  We report
  arbitrage detection in full: cycles emerge at the
  fixpoint and are read off the normal form with no
  protocol-specific patterns.  Detection is the query
  we evaluate at scale; structural equivalence is a
  second query.  The pipeline depends only
  on the standard ERC token and WETH ABIs, so the
  same binary runs unmodified on Arbitrum and
  BSC\@.

  We evaluate on two arbitrary block ranges,
  analysed in full with no transaction excluded:
  220\,000 Ethereum blocks against EigenPhi, a
  widely used MEV detection platform, and 1\,000
  shared blocks against ArbiNet, a graph neural
  network classifier.  These are the only tools and
  label sets in this domain we were able to reuse. 
  We report 469\,801 confirmed arbitrages,
  overlapping 83.5\% with EigenPhi and 81\% with
  ArbiNet, together with 245\,497 attempted
  arbitrages and 60\,199 confirmed detections
  EigenPhi does not report.  99.2\% follow from the
  fixpoint alone and are sound over the decoded
  transfers, and
  manual validation of 500~transactions finds no
  false positive among the confirmed.  We also
  analyse baseline-only detections to characterize
  where the systems disagree.
\end{abstract}

\section{Introduction}\label{sec:intro}

% ── P1: The real problem ──
Every Ethereum transaction leaves a replayable trace of
its internal calls and token movements.  The record is
complete, yet not self-explanatory: the same movements
can be a trade, a repayment, a withdrawal, or an
attack, and which one happened depends on how the
execution produced the transfers, not on the transfers
alone.  The difference now carries legal weight.
Whether a twelve-second interaction with the
block-building pipeline was arbitrage or wire fraud is
the substance of a federal prosecution whose first jury
could not decide~\cite{doj2024peraire}; a fraud
conviction over the Mango Markets exploit, defended as
a trading strategy, was vacated on the ground that an
automated contract cannot be deceived, and is on
appeal~\cite{cftc2023eisenberg}; and MiCA now requires
venues to report manipulative
transaction-ordering~\cite{esma2025mica}, without
saying how.  Surveillance, protocol design, and
incident response ask the same questions of a trace
after the fact: did it perform an arbitrage?  Through
which route?  Do two transactions, under different
contracts and tokens, implement the same strategy?

% ── Figure 1: three arbitrage topologies ──
\begin{figure}[t]
\tikzset{
    acct/.style={rounded rectangle,
                 draw=gray!70, fill=acctgray!30,
                 minimum width=1.2cm,
                 minimum height=0.9cm,
                 text=black, font=\small,
                 align=center},
    sc/.style={rounded rectangle,
               draw=scamber!80, fill=scamber!25,
               minimum width=1.4cm,
               minimum height=0.9cm,
               text=black, font=\small,
               align=center,
               double, double distance=1pt},
    pool/.style={circle, draw=poolblue!80,
                 fill=poolblue!15,
                 minimum size=1.1cm, text=black,
                 font=\footnotesize, align=center},
    xfer/.style={-{Stealth[length=2.5mm]}, thick},
    elbl/.style={font=\footnotesize, fill=white,
                 inner sep=1pt},
}
\centering
\begin{subfigure}[t]{0.48\columnwidth}
\centering
\begin{tikzpicture}[font=\footnotesize, 
  scale=0.79, every node/.style={scale=0.79}]
\node[acct] (eoa) at (0,3.7)    {EOA};
\node[sc]   (sc)  at (0,2.35)   {SC};
\node[pool] (p1)  at (-1.8,0.5) {$\mathcal{P}_1$};
\node[pool] (p2)  at (1.8,0.5)  {$\mathcal{P}_2$};
\draw[xfer] (eoa) -- (sc);
\draw[xfer] (sc) --
  node[elbl,above left=-2pt]
    {\textcolor{darkred}{1 WETH}} (p1);
\draw[xfer] (p1) --
  node[elbl,below]
    {2\,940 USDT} (p2);
\draw[xfer] (p2) --
  node[elbl,above right=-2pt]
    {\textcolor{darkgreen}{1.1 WETH}} (sc);
\end{tikzpicture}
\caption{Pool-to-pool flow.}
\label{fig:arbi:a}
\end{subfigure}\hfill
\begin{subfigure}[t]{0.48\columnwidth}
\centering
\begin{tikzpicture}[font=\footnotesize, 
  scale=0.79, every node/.style={scale=0.79}]
\node[acct] (eoa) at (0,3.7)    {EOA};
\node[sc]   (sc)  at (0,2.35)   {SC};
\node[pool] (p1)  at (-1.8,0.3) {$\mathcal{P}_1$};
\node[pool] (p2)  at (1.8,0.3)  {$\mathcal{P}_2$};
\draw[xfer] (eoa) -- (sc);
\draw[xfer] (sc) to[bend right=18]
  node[elbl,sloped,above,pos=0.4]
    {\textcolor{darkred}{1 WETH}} (p1);
\draw[xfer] (p1) to[bend right=18]
  node[elbl,sloped,above,pos=0.4]
    {\textcolor{darkgreen}
      {2\,940 USDT}} (sc);
\draw[xfer] (sc) to[bend left=18]
  node[elbl,sloped,above,pos=0.4]
    {\textcolor{darkred}
      {2\,940 USDT}} (p2);
\draw[xfer] (p2) to[bend left=18]
  node[elbl,sloped,above,pos=0.4]
    {\textcolor{darkgreen}
      {1.1 WETH}} (sc);
\end{tikzpicture}
\caption{SC-mediated flow.}
\label{fig:arbi:b}
\end{subfigure}

\caption{One arbitrage, two transfer graphs.
  \textcolor{darkred}{Red}: sent;
  \textcolor{darkgreen}{green}: received.}
\label{fig:arbidef}
\end{figure}

% ── P2: What an arbitrage is, and how it is exploited ──
An \textit{arbitrage}, informally, is a round trip that
pays: value leaves an account in some asset and returns
to it, in the same asset, in greater quantity.  On
Ethereum it is an industry: prices for a pair differ
across automated market makers
(AMM)~\cite{uniswap,curve,balancer}, and bots
buy where an asset is cheap and sell where it is dear,
closing the loop in one atomic transaction so the trade
either profits or costs only gas.  This is the core of
\textit{maximal extractable value}
(\textbf{MEV}), and the same mechanics, amplified by
flash
loans~\cite{qin2021flashloan,bzx,palkeo2020bzx,flashloan2026},
have drained millions from DeFi pools.  Both
conditions, returning \emph{to the same account} and
\emph{in the same asset}, are claims about how an
execution unfolded; the transfers alone do not carry
them.

Figure~\ref{fig:arbidef} shows the simplest case
twice.  In each panel, an externally owned account
(EOA) invokes a smart contract (SC) that routes
\textsc{WETH} through pools with different reserve
ratios and recovers more than it sent.  The two panels
are one strategy routed differently: the intermediate
\textsc{USDT} passes directly between pools in~(a) and
through the SC in~(b), two transfers against four.
The strategy is the same and the transfer graphs are
not.  Parallel paths that split and merge go further
still: Figure~\ref{fig:txR} shows a real one,
$\mathit{tx}_R$, the running example of the paper.
These shapes recur across hundreds of pool contracts.

% ── P3: The limits of reading transfers alone ──
Detectors read the transfers and look for the shape.
EigenPhi, the production MEV platform whose labels are
the industry reference,\footnote{EigenPhi's public
service was discontinued in June
2026~\cite{eigenphi_shutdown}; we compare against
labels collected while it operated, which ship with our
artifact (Section~\ref{sec:eval}).} flattens a
transaction into its transfer graph, extracts strongly
connected components, and reports an arbitrage when the
balance nearest the sender is
positive~\cite{eigenphi_method}.  Read this way,
cases~(a) and~(b) are different objects though they are
one strategy, and a withdrawal returning an asset to
its owner is indistinguishable from a trade that
profits (Appendix~\ref{sec:casestudy:tn};
Section~\ref{sec:eval} measures how often).

The ceiling is not specific to one tool.
DeFiRanger~\cite{defirangertdsc}, closest to our work,
matches a hand-maintained catalog of attack patterns
against a \textit{cash flow tree}, with no formal
guarantees.  Graph-theoretic
methods~\cite{mclaughlin2023,zhang2024,wang2022cyclic}
enumerate cycles in aggregate token graphs at scale but
collapse the call hierarchy separating a \textit{swap}
from a \textit{relay}: alike in a flat graph, different
in meaning.  Machine-learning
classifiers~\cite{jin2022,niedermayer2024,arbinet2024}
and richer representations~\cite{qin2025clue} produce
labels without the chains that explain them.  A
survey~\cite{mevsurvey2024} confirms the common crutch,
protocol-specific knowledge, and that detectors degrade
on unknown contracts~\cite{chi2024,li2023actlifter}.
Missing is a \emph{structural account} of what arbitrage
\emph{is} inside a trace.

% ── P4: The insight ──
Our key observation is that an execution trace
admits a normal form, under which structural
equivalence of fund flows is \emph{decidable}.  We
turn each trace into an abstract syntax tree of
token transfers and rewrite it, chaining, merging
and annotating, until fixpoint.  Rewriting invents
nothing and terminates: the result carries exactly
the transfers of the trace, reordered, with every
account's balance unchanged whichever order the
rules fire in.  Our kernel scans the EVM-fixed
execution order left to right, and under it the
trace has a unique normal form.

Arbitrage detection is the demonstrating witness:
cycles are never searched for, they
\emph{emerge}.  If a set of transfers walks like
an arbitrage, the system says
so.\footnote{After the duck test, attributed to
James Whitcomb Riley.}  We claim soundness, not
completeness, and state it of the transaction, not
of our labels: when the system reports an
arbitrage, that transaction's transfer graph
contains a closed walk returning to its origin in
a token-equivalent asset with strictly positive
balance (Definition~\ref{def:arb}), naming neither
the rewriting nor the normal form.  The decoder
uses only the standard ERC token and WETH ABIs, so
designs reusing the \texttt{Transfer} signature need
no protocol-specific decoding.

% ── P5: Contributions ──
We contribute (i)~the \emph{substrate}, a rewriting
system whose normal form makes structural equivalence
decidable; (ii)~a \emph{sender-enriched AST}, in which
each transfer carries a \textit{sender} field~$\sigma$
derived from the call hierarchy that separates a swap
from a relay without event catalogs; (iii)~\emph{Rocq
proofs}, with zero admitted obligations, of
preservation, termination, soundness over the
transaction's own transfer graph, uniqueness and
decidable equivalence; and (iv)~\emph{detection as the
first query}, in which flash-loan arbitrages emerge
structurally and verdicts carry their transfer chains,
profit, and reasons.

% ── P6: Validation hook ──
Section~\ref{sec:eval} evaluates the detector against
EigenPhi and ArbiNet~\cite{arbinet2024}.  The Rocq
sources, evaluation pipeline, and binaries are
available at
\url{https://anonymous.4open.science/r/artifact-submission-2128}.

\section{Preliminaries}\label{sec:scamm}

% CUT: section mini-roadmap (subsection headings do this)
% We begin with the execution model that produces the
% traces our algorithm consumes
% (Section~\ref{sec:bg:exec}), recall how automated
% market makers create arbitrage opportunities
% (Section~\ref{sec:bg:amm}), define the data
% structures used throughout the paper
% (Section~\ref{sec:bg:tg}), and close with a pipeline
% overview (Section~\ref{sec:arch}).

\subsection{Ethereum Execution Model}\label{sec:bg:exec}

Ethereum is a distributed state machine whose state is
advanced by \textit{transactions}.  A transaction is
signed by an \textit{externally owned account}
(\textbf{EOA}), an address controlled by a private
key, and may invoke a \textit{smart contract}
(\textbf{SC}), a program deployed at a distinct address
on the blockchain.  Smart contracts can, in turn, invoke
other contracts via internal calls (\texttt{CALL},
\texttt{DELEGATECALL}, \texttt{STATICCALL}), forming a
nested call tree.

\paragraph{Execution trace}
Replaying a transaction through the Ethereum Virtual
Machine (EVM) tracer produces an \textit{execution
trace}: the ordered, hierarchical record of every
internal call and its side effects.  It preserves the
nesting of the call tree and records, per frame, the
caller, callee, call value, input and return data, and
emitted event logs.

\paragraph{Token transfers}
A \textit{token transfer} is a movement of value from
one address to another, and the trace records three
forms: \textit{native ETH}, as value attached to a
\texttt{CALL} opcode; \textit{ERC-20}, as
\texttt{Transfer(\allowbreak from,\allowbreak
to,\allowbreak amount)} event
logs~\cite{erc20}; and \textit{NFT}, as ERC-721
\texttt{Transfer} or ERC-1155
\texttt{Transfer\-Single}/\texttt{Transfer\-Batch}
events carrying a token identifier rather than a
fungible amount (only fungible amounts contribute to
profit).
We treat them uniformly for cycle construction: a
transfer is a tuple
$t = (s, d, a, \tau, \sigma)$ where
$s$ is the source address, $d$ the destination, $a$
the amount or token identifier, $\tau$ the token type,
and $\sigma$ the \textit{sender}: the address that
initiated the nearest enclosing \texttt{CALL} frame
in which the transfer event was emitted, derived
from the call hierarchy of the execution trace.
We write $s(t)$, $d(t)$, $a(t)$, $\tau(t)$ and
$\sigma(t)$ for the five components of a transfer~$t$.
\texttt{DELEGATE\-CALL} frames do not update~$\sigma$
(delegated code executes in the caller's context),
so proxy contracts are handled correctly:
$\sigma$~reflects the external caller, not the proxy.
$\sigma$ is what separates a swap from a relay: in a
swap both transfers share the external caller that
orchestrated them, whereas in a relay the intermediary
is itself the sender of the onward transfer.  The
transfers look alike; their senders do not.

\paragraph{Transaction costs}
A transaction incurs a burned \textit{base fee}
(EIP-1559), a \textit{priority fee} to the block
builder, and optionally a direct contract-to-coinbase
payment, used by MEV arbitrageurs to secure favorable
ordering.

\paragraph{Automated market makers}\label{sec:bg:amm}
An \textit{automated market maker} (\textbf{AMM}) is a
smart contract holding reserves of two or more tokens in
a \textit{liquidity pool} and pricing them
algorithmically, without an order book.  Each pool
prices from its own reserves, so two pools over the same
pair may offer different rates, creating an
\textit{arbitrage opportunity}: buy cheaply in one, sell
for more in the other.

\subsection{Transfer Graphs and Cash Flow Trees}
\label{sec:bg:tg}

We define the data structures and operators used
in the algorithm and the proofs.

\begin{definition}[Transfer graph]\label{def:tg}
Given a transaction $\mathit{tx}$, its \textit{transfer
graph} $G = (V, E)$ is the directed multigraph whose
vertices $V$ are the addresses involved in $\mathit{tx}$
and whose edges $E$ are the transfers $\mathit{tx}$
performs, one edge per transfer, each edge~$t$ directed
from $s(t)$ to $d(t)$.
\end{definition}

Transfer graphs are flat: they record \textit{what} was
transferred, not the call hierarchy that produced it,
and that hierarchy is what separates a round trip from a
coincidence.  The transaction of
Appendix~\ref{sec:casestudy:tn} makes this concrete.
Its flat graph shows \textsc{WETH} leaving the contract
and \textsc{WETH} coming back, and stops there; the call
hierarchy shows the returning \textsc{WETH} to be the
proceeds of two reward sales in frames of their own,
which is what distinguishes repayment from gain.  To
keep that structure, we introduce the \textit{cash flow
tree}.

\begin{definition}[Cash Flow Tree]\label{def:cft}
A \textit{cash flow tree} (CFT) over a transaction
$\mathit{tx}$ is defined inductively:
\begin{itemize}
  \item a \textit{leaf} $\mathsf{Leaf}(t)$ carries a
    single transfer~$t$;
  \item a \textit{chain node} $\mathsf{Chain}(C)$ carries
    a non-empty sequence $C$ of transfers;
  \item a \textit{call node}
    $\mathsf{Tree}(v, [T_1, \ldots, T_m])$ carries the
    address~$v$ whose code was executing and its children
    $T_1, \ldots, T_m$, themselves CFTs, ordered by their
    position in the execution trace.
\end{itemize}
\end{definition}

The CFT mirrors the EVM call stack: each internal
\texttt{CALL} becomes a $\mathsf{Tree}$ node, each token
transfer a $\mathsf{Leaf}$, and a node's children are
what occurred \textit{within} that frame.  Siblings were
therefore produced by the same contract invocation, and
that is the structure the rewriting exploits.
The tree the decoder builds has leaves and call nodes
only; chain nodes hold the sequences the rewriting
assembles, so they appear once reduction begins.
We call it an \textit{abstract syntax tree}
(\textbf{AST}), since the algorithm rewrites it
syntactically rather than interpreting it as a graph.

The rewriting links sibling transfers into chains, then
asks whether a chain closes on itself and pays.  The
remaining definitions make that precise, beginning with
what a set of transfers does to an account.

\begin{definition}[Balance]\label{def:balance}
Let $t_1, t_2, \ldots, t_k$ be transfers.  Their
\textit{balance} $\delta$ sends an
address~$v$ and a token~$\tau$ to the net signed amount
that $v$ receives in~$\tau$: each $t_i$ carrying exactly
token~$\tau$ credits $d(t_i)$ and debits $s(t_i)$ by
$a(t_i)$, and leaves every other address untouched.  We
write $\delta[v]$ when $\tau$ is clear from the context,
and $\delta_C$ when the transfers are those of a
chain~$C$.  The balance is \textit{gross}: it accounts
for these transfers and for nothing else, in particular
not for execution costs.
\end{definition}

\begin{definition}[Transfer chain]\label{def:chain}
A \textit{transfer chain} $C$ is a sequence of transfers
$t_1, t_2, \ldots, t_k$.  It \textit{originates} at
$s(t_1)$ and \textit{terminates} at $d(t_k)$, and its
\textit{middleman set} is
$\{d(t_1), d(t_2), \ldots, d(t_{k-1})\}$: the addresses
that relay tokens without being the origin or the final
destination.  The chain is \textit{connected} when
consecutive transfers meet, that is $d(t_i) = s(t_{i+1})$
for all $1 \leq i < k$.

Two chains $C_1, C_2$ with $d(C_1) = s(C_2) = j$ join at
$j$ under either \textit{token continuity}, when the last
token of $C_1$ is the first token of $C_2$, or
\textit{balance continuity}, when the junction is left
short in no token:
$\mathit{BalCont}(j) \equiv \forall \tau,\,
\delta_{C_1}[j,\tau] + \delta_{C_2}[j,\tau] \geq 0$.
Balance continuity is the weaker of the two.
\end{definition}

\noindent
We write $s(C)$, $d(C)$ and $M(C)$ for a chain's origin,
destination and middleman set, and lift the token
notation to chains: $\tau_{\mathrm{in}}(C)$ and
$\tau_{\mathrm{out}}(C)$ are the tokens of $t_1$ and
$t_k$, the assets the chain takes in and gives back,
while $\tau_{\mathrm{mid}}(C)$ is the token of
$t_{k-1}$, what the last intermediary passed on.  For a
two-transfer chain $\tau_{\mathrm{mid}}$ and
$\tau_{\mathrm{in}}$ coincide; for longer ones they do
not, and the rules below need both.
In the implementation a chain is a binary tree whose
in-order traversal is the sequence above, preserving the
order of assembly.  Closing a chain into a cycle
requires matching tokens at both ends, but a chain may
start with a native asset and end with its wrapped
form.

\begin{definition}[Token equivalence]
\label{def:tokeq}
Two token types $\tau_1$, $\tau_2$ are
\textit{token-equivalent}, written
$\tau_1 =_\tau \tau_2$, when they are either
identical or denote the same underlying asset
through different representations (e.g., ETH and
WETH).  All other token pairs require strict
equality.  The relation is reflexive and symmetric.
Transitivity is not required, as $=_\tau$ is
checked only at cycle boundaries.
\end{definition}

The last two definitions say what the algorithm looks
for.  Both are stated over the transfer graph alone, so
both can be checked against the transaction's own
transfers.

\begin{definition}[Cycle in $G$]\label{def:cycle}
Let $v$ be an address.  A \textit{cycle at $v$} is a
sequence of connected chains $C_1, \ldots, C_m$ with
$m \geq 1$, whose concatenation $C = C_1 \cdots C_m$
satisfies
\begin{compactenum}
  \item \textit{(maximal)} no two consecutive pieces can
    be joined into one connected chain, so
    $C_1, \ldots, C_m$ is the decomposition of $C$ into
    its longest connected runs;
  \item \textit{(closed)} $C$ originates and terminates
    at $v$;
  \item \textit{(drawn from the transaction)} the
    transfers of $C$, counted with multiplicity, are a
    sub-multiset of $E$.
\end{compactenum}
When $m = 1$ the cycle is a single connected chain, the
classical picture; a larger $m$ means it closes through
parallel legs, with $m - 1$ junctions where consecutive
pieces fail to meet.  The transaction of Appendix~\ref{sec:casestudy:tn}
illustrates the general case: its closed walks are
cycles in this sense, and several close only through
junctions where the token received differs from the
token passed on.
\end{definition}

\begin{definition}[Arbitrage in $G$]\label{def:arb}
A cycle $C$ at $v$ is an \textit{arbitrage in token
$\tau$} when
\begin{compactenum}
  \item \textit{(entry token)} $\tau$ is the token of the
    first transfer of $C$;
  \item \textit{(round trip in one asset)} the first and
    the last transfer of $C$ carry token-equivalent
    tokens;
  \item \textit{(profit)} the balance at the origin is
    strictly positive in that token,
    $\delta_C[v, \tau] > 0$.
\end{compactenum}
No cycle of the transaction in
Appendix~\ref{sec:casestudy:tn} is an arbitrage.  Three
carry token-equivalent tokens at both ends and so
satisfy the first two conditions, but their pieces do
not join: no walk over compatible tokens runs round, and
the balance at the origin is zero.  The same asset
entering and leaving an address is an accounting fact,
not a trade.
\end{definition}

\noindent
Definition~\ref{def:arb} is gross by construction, hence
checkable from the trace alone.  Execution costs are not
part of it: whether a cycle also gained after costs is a
separate query on the algorithm's result.  Requiring the
round trip to close in one asset is the notion of cyclic
arbitrage prior measurement work
uses~\cite{mclaughlin2023,wang2022cyclic}: a walk
returning in a different token leaves the actor holding
something else, and deciding whether that is a gain
needs the price oracle this paper deliberately avoids.

\subsection{Pipeline Overview}\label{sec:arch}

The pipeline has two layers.  The \textit{decode layer} ingests raw traces
(\texttt{debug\_\allowbreak traceTransaction} or a
pre-indexed database).  Transfer extraction uses only
the standard ERC token and \textsc{weth} signatures;
Sourcify ABIs and the 4byte directory serve call-name
and OSINT enrichment, which the fixpoint and the
validation never consume.  It is the only stage
touching external databases.  The
\textit{analysis layer} is pure and in-memory: it builds
the tree of Definition~\ref{def:cft}, rewrites it,
computes balances, and classifies.

\section{Normalization and Detection}\label{sec:algo}

Definition~\ref{def:arb} says what we are looking for.
Finding it by searching the transfer graph means
enumerating cycles in a structure that has already lost
the call hierarchy, which is what lets a chain of
transfers that merely passes through an address look
like one that returns to it.  We work on the tree
instead, rewriting it until the cycles appear as
subterms.

Every object a rule touches is a cash flow tree
(Definition~\ref{def:cft}): a leaf carrying one
transfer, a chain node carrying a labelled sequence, or
a call node carrying an address and its ordered
children.  A rule rewrites the children of one node and
yields a tree of the same kind.  The label records what
the reduction has established about a chain:
\textsc{chaining} for a sequence assembled from adjacent
transfers, \textsc{merging} for one assembled from
parallel branches, \textsc{burn} and \textsc{mint} for
sequences ending at or starting from the null address,
\textsc{cycle} for a closed chain that returns in a
different asset, and \textsc{arb} for one that returns
in the asset it left in.

The rules derive from the chain's own conventions
(swap pairing, null-address mint and burn, wrapping)
and from the composition of walks; R5 is the sole
deployment hint.  The reduction has two rewriting
stages, grouped in Table~\ref{tab:labels} by \emph{locus} rather than by
operation: both stages chain and both merge.
\textit{Leaf manipulation} ($R_1$--$R_{10}$) fires on
siblings within one call frame; when no rule applies at
a level, its chains and leaves are \textit{lifted} into
the enclosing frame, becoming siblings of the call that
produced them, and reduction resumes there.
\textit{Node manipulation} ($R_{11}$--$R_{13}$) fires
once lifting has put operands from different frames side
by side.  Each node rule is the cross-frame counterpart
of a leaf rule with a stricter guard (R11 of R4, R12
of R6, R13 of R7), stricter precisely because lifting
has destroyed the frame evidence the leaf rule could
rely on.  Neither stage names
a query, and neither adds nor drops a transfer.  The
\textit{annotation} rules ($R_{14}$, $R_{15}$) fire
within both stages, and are where the word
\textit{arbitrage} first appears: a chain whose
endpoints coincide becomes \textsc{arb} when it also
leaves and returns in token-equivalent assets, and
\textsc{cycle} otherwise.  Annotation is structural and does not consult amounts,
so \textit{validation} ($R_{16}$) runs once after the
fixpoint and revokes the \textsc{arb} label wherever the
balance at the origin is not positive.

We describe the algorithm through a real Ethereum
transaction~$\mathit{tx}_R$\footnote{%
\texttt{0x\seqsplit{275f9642556a58802eafdd8289aa19a7275e9d40962056095bb9b5c51ac3d246}}} that splits a payment across
parallel paths and recombines them, and that EigenPhi
does not flag as an arbitrage.
Figure~\ref{fig:txR} shows its transfer graph: the
SC splits $0.0575$~\textsc{WETH} into three
parallel paths through six pools, each converting
an intermediate token into \textsc{BEAN}, which
converges on a single \textsc{BEAN/WETH} pool that
returns $0.0598$~\textsc{WETH} (profit:
$+0.0023$~\textsc{WETH}).

\begin{figure*}[t]
\centering
\resizebox{0.79\textwidth}{!}{%
\begin{tikzpicture}[font=\scriptsize,
  acct/.style={draw=gray!70, fill=acctgray!30,
    rounded rectangle, minimum height=6mm,
    font=\scriptsize, inner sep=3pt,
    align=center},
  sc/.style={draw=scamber!80, fill=scamber!25,
    rounded rectangle, minimum height=6mm,
    font=\scriptsize, inner sep=3pt,
    align=center,
    double, double distance=0.8pt},
  pool/.style={draw=poolblue!80, fill=poolblue!15,
    rounded corners, minimum height=6mm,
    font=\scriptsize, inner sep=3pt},
  lbl/.style={font=\scriptsize, fill=white,
    inner sep=0.5pt},
  arr/.style={-{Stealth[length=2.5mm]}, semithick},
]
% --- main actors (left) ---
\node[acct] (from) at (0,4.5) {\textsf{from}\,(EOA)};
\node[acct] (builder) at (3.5,4.5) {blockBuilder};
\node[sc]   (to) at (0,2.2) {\textsf{to}\,(SC)};

% --- hop-1 pools (center-left) ---
\node[pool] (rweth) at (5,3.3)
  {\textsc{chad/weth}};
\node[pool] (pweth) at (5,2.2)
  {\textsc{duck/weth}};
\node[pool] (wmoon) at (5,1.1)
  {\textsc{weth/moon}};

% --- hop-2 pools (center-right) ---
\node[pool] (cbean) at (9.5,3.3)
  {\textsc{chad/bean}};
\node[pool] (dbean) at (9.5,2.2)
  {\textsc{duck/bean}};
\node[pool] (beanmoon) at (9.5,1.1)
  {\textsc{bean/moon}};

% --- convergence pool (right) ---
\node[pool] (beanweth) at (13.5,2.2)
  {\textsc{bean/weth}};

% === edges ===
% cost (dashed)
\draw[arr, densely dashed]
  (from) -- node[lbl,above] {ETH} (builder);
\draw[arr, densely dashed]
  (from) -- node[lbl, left, pos=0.4]
  {ETH} (to);

% to → three WETH paths (blue)
\draw[arr, okBlue]
  (to) -- node[lbl, above, sloped, pos=0.45]
  {0.0125 \textsc{weth}} (rweth);
\draw[arr, okBlue]
  (to) -- node[lbl, above, pos=0.45]
  {0.0333 \textsc{weth}} (pweth);
\draw[arr, okBlue]
  (to) -- node[lbl, below, sloped, pos=0.45]
  {0.0117 \textsc{weth}} (wmoon);

% hop 1 → hop 2 (red)
\draw[arr, okVermillion]
  (rweth) -- node[lbl,above]
  {118.5 \textsc{chad}} (cbean);
\draw[arr, okVermillion]
  (pweth) -- node[lbl,above]
  {805M \textsc{duck}} (dbean);
\draw[arr, okVermillion]
  (wmoon) -- node[lbl,above]
  {171K \textsc{moon}} (beanmoon);

% hop 2 → BEAN/WETH (green)
\draw[arr, okGreen]
  (cbean) -- node[lbl, above, sloped, pos=0.5]
  {382M \textsc{bean}} (beanweth);
\draw[arr, okGreen]
  (dbean) -- node[lbl, above, pos=0.5]
  {970M \textsc{bean}} (beanweth);
\draw[arr, okGreen]
  (beanmoon) -- node[lbl, below, sloped, pos=0.5]
  {353M \textsc{bean}} (beanweth);

% BEAN/WETH → to (WETH return, blue thick, routed below)
\draw[arr, okBlue]
  (beanweth.south) -- (13.5,0.45)
  -- node[lbl, below] {0.0598 \textsc{weth}}
  (0,0.45) -- (to.south);

\end{tikzpicture}%
}
\caption{Transfer graph of~$\mathit{tx}_R$.
  Three parallel paths through six pools;
  dashed edges are cost transfers.}
\label{fig:txR}
\end{figure*}

\subsection{From Traces to ASTs}
\label{sec:algo:ast}

\begin{figure*}[t]
\centering
\tikzset{
  nd/.style={draw, circle, fill=gray!15,
    inner sep=1.5pt, font=\footnotesize,
    minimum size=5mm},
  lf/.style={draw, rectangle, inner sep=2pt,
    font=\scriptsize, minimum height=5.5mm,
    align=center},
  ch/.style={draw, double, double distance=0.8pt,
    rectangle, rounded corners=2pt,
    inner sep=2.5pt, font=\footnotesize,
    minimum height=6mm, align=center},
  mg/.style={draw, double, double distance=0.8pt,
    rectangle, rounded corners=2pt,
    inner sep=2.5pt, font=\footnotesize,
    minimum height=6mm, align=center,
    fill=okGreen!10},
  cy/.style={draw, line width=0.9pt,
    double, double distance=0.8pt,
    rectangle, rounded corners=2.5pt,
    inner sep=3pt, font=\footnotesize,
    minimum height=7mm, align=center,
    fill=okBlue!8},
  e/.style={draw=gray!50, thin},
}
% ════════ (a) Trimmed tree: the callback cascade ════════
\begin{subfigure}[t]{0.52\textwidth}
\centering
\begin{tikzpicture}[font=\scriptsize, scale=1.055,
  every node/.style={anchor=west, inner sep=1.6pt,
    font=\scriptsize},
  fr/.style={draw, rounded corners=2pt, fill=gray!12,
    minimum height=3.8mm},
  lf2/.style={draw, minimum height=3.8mm},
  ed/.style={draw=gray!60, thin}]
\node[fr]  (n0) at (0.00, 0.00) {EOA$\Rightarrow$SC};
\node[fr]  (n1) at (0.45,-0.44) {SC$\Rightarrow$BN/W};
\node[fr]  (n2) at (0.90,-0.88) {BN/W$\Rightarrow$SC};
\node[fr]  (n3) at (1.35,-1.32) {SC$\Rightarrow$C/BN};
\node[fr]  (n4) at (1.80,-1.76) {C/BN$\Rightarrow$SC};
\node[fr]  (n5) at (2.25,-2.20) {SC$\Rightarrow$C/W};
\node[lf2, fill=okOrange!25] (n6) at (2.70,-2.64)
  {C/W$\to$C/BN \textsc{chad}};
\node[lf2, fill=okBlue!15]   (n7) at (2.25,-3.08)
  {SC$\to$C/W \textsc{weth}};
\node[draw, dashed, gray!70, rounded corners=2pt,
      minimum height=3.8mm] (n8) at (2.25,-3.52)
  {\textcolor{gray!85}{$\cdots$ six more levels}};
\node[lf2, fill=okGreen!15]  (n9) at (1.80,-3.96)
  {C/BN$\to$BN/W \textsc{bean}};
\node[lf2, fill=okBlue!15]  (n10) at (0.90,-4.40)
  {BN/W$\to$SC \textsc{weth}};
\foreach \p/\c in {n0/n1,n1/n2,n2/n3,n3/n4,n4/n5,n5/n6,
                   n4/n7,n4/n8,n3/n9,n1/n10}
  \draw[ed] ([xshift=1.6mm]\p.south west) |- (\c.west);
\draw[gray!70] (4.95,-2.50) -- (5.09,-2.50) --
               (5.09,-3.22) -- (4.95,-3.22);
\node[anchor=west] at (5.15,-2.86)
  {\textcircled{\scriptsize 1}\,lift, R1};
\draw[gray!70] (4.95,-3.22) -- (5.09,-3.22) --
               (5.09,-4.10) -- (4.95,-4.10);
\node[anchor=west] at (5.15,-3.66)
  {\textcircled{\scriptsize 2}\,lift, R6};
\end{tikzpicture}
\caption{The trimmed tree, drawn down the \textsc{chad}
  path only.  $\Rightarrow$~is a call frame, $\to$~a
  transfer; the dashed node elides the rest.}
\label{fig:ast:a}
\end{subfigure}%
\hfill
% ════════ (b) The three states that follow ════════
\begin{subfigure}[t]{0.44\textwidth}
\centering
\begin{tikzpicture}[font=\scriptsize, scale=1.135,
  every node/.style={anchor=west, inner sep=1.6pt,
    font=\scriptsize},
  fr/.style={draw, rounded corners=2pt, fill=gray!12,
    minimum height=3.8mm},
  lf2/.style={draw, minimum height=3.8mm},
  cn/.style={draw, double, double distance=0.7pt,
    rounded corners=2pt, minimum height=3.8mm},
  ed/.style={draw=gray!60, thin}]
\node[fr]  (m0) at (0.00, 0.00) {SC$\Rightarrow$BN/W};
\node[lf2, fill=okBlue!15] (m1) at (0.45,-0.46)
  {BN/W$\to$SC \textsc{weth}};
\node[cn,  fill=okGreen!10] (m2) at (0.45,-0.92)
  {SC$\to$BN/W \emph{3 paths merged}};
\draw[ed] ([xshift=1.6mm]m0.south west) |- (m1.west);
\draw[ed] ([xshift=1.6mm]m0.south west) |- (m2.west);
\draw[-latex, gray!70] (2.1,-1.24) -- (2.1,-1.72);
\node[anchor=west] at (2.20,-1.48) {R6 closes the cycle};
\node[cn, fill=okBlue!8] (m3) at (0.45,-2.04)
  {SC$\to$SC \emph{chaining}};
\draw[-latex, gray!70] (2.1,-2.36) -- (2.1,-2.84);
\node[anchor=west] at (2.20,-2.60) {R14 annotates};
\node[cn, line width=0.8pt, fill=okBlue!8] (m4) at (0.45,-3.16)
  {SC$\to$SC \emph{arbitrage} $\Delta={+}0.0023$~\textsc{weth}};
\end{tikzpicture}
\caption{After the structural fixpoint's merge: R6
  closes the aggregate, R14 annotates it, and
  validation confirms a positive balance at the
  origin.}
\label{fig:ast:b}
\end{subfigure}
\caption{Reduction of~$\mathit{tx}_R$, a callback
  cascade in which lifting creates every chaining
  opportunity.  Double borders mark chains.}
\label{fig:ast}
\end{figure*}

Before the stages begin, the AST of
Definition~\ref{def:cft} is built from the decoded
trace: one $\mathsf{Tree}$ node per call frame, one
$\mathsf{Leaf}$ per value-movement event
(\texttt{Transfer}, \texttt{Transfer\-Single},
\texttt{Transfer\-Batch}, \texttt{Mint}, \texttt{Burn},
\texttt{Deposit}, \texttt{Withdrawal}, native~ETH),
with $\sigma$ taken from the enclosing frame as in
\S\ref{sec:bg:exec}.  Non-value events such as
\texttt{Approval} are discarded.  Each decoded transfer
becomes exactly one leaf, so the tree carries all of
them and no others, and the nesting is the call
hierarchy.  That the trace admits this reading at all
rests on a design invariant of blockchain virtual
machines:

\begin{property}[Deterministic sequential execution]\label{prop:dse}
Let $\mathit{tx}$ be a transaction executed on a
blockchain in state~$S$.  The execution trace
$\mathcal{T}(\mathit{tx}, S)$ satisfies:
\begin{enumerate}
  \item \textbf{Determinism.}
    $\mathcal{T}$ is a function: the trace is
    uniquely determined by $(\mathit{tx}, S)$.
  \item \textbf{Sequential order.}
    Within each call frame, instructions execute
    in program order.  No concurrent or
    out-of-order execution occurs within a frame.
  \item \textbf{Tree structure.}
    Nested calls (\texttt{CALL},
    \texttt{DELEGATE\-CALL},
    \texttt{STATIC\-CALL}) create child frames.
    The resulting call hierarchy is a rooted,
    ordered tree.
\end{enumerate}
\end{property}

\noindent
None of the three is an empirical observation:
consensus requires deterministic state transitions,
sequential order within a frame follows from the
stack-machine architecture, and the tree structure from
the call/return discipline.  A machine violating any of
them could still reach consensus, as parallel-execution
VMs do by scheduling conflict-free frames concurrently,
but not under the ordered-tree view we exploit.
EVM-compatible chains satisfy
Property~\ref{prop:dse} by construction, and it is the
only assumption our results make about the execution
environment; we validate on two further EVM chains
(Appendix~\ref{sec:arbitrum}).  Non-EVM machines are
future work.

\paragraph{Trimming}
Trimming then removes structure that carries no
transfer.  A $\mathsf{Tree}$ node whose children are all
non-transfer events is pruned; a node with exactly one
child is collapsed into that child; and any subtree
whose recursive pruning yields no $\mathsf{Leaf}$ is
removed.  A second cleanup drops degenerate leaves,
transfers of zero amount with $\sigma = s(\ell)$, which
move no value and would otherwise block chaining at the
shared address.
Construction and trimming together produce the tree
$T_0$ on which the stages operate.  Every guarantee in
\S\ref{sec:algo:formal} is stated from $T_0$: the
theorems say what the rewriting does to the transfers it
is given, not whether those transfers faithfully reflect
the value that moved: that fidelity is the decode
layer's, and belongs to the trusted base
(\S\ref{sec:limits}).
For~$\mathit{tx}_R$ the raw trace holds fifty-one call
frames, which trimming collapses to twelve.  The shape
survives: each pool calls back into the SC before
settling, so the frames nest twelve deep and each of the
ten route transfers sits alone in its own frame, none
adjacent to another (Figure~\ref{fig:ast}(a)).  The two
dashed transfers of Figure~\ref{fig:txR} are native-ETH
costs and never enter $T_0$; the profit query accounts
for them after the fixpoint
(\S\ref{sec:algo:leftover}).  Panel~(b) follows the
reduction from there.

\subsection{Stage 1: Leaf Manipulation}
\label{sec:algo:leaves}

\begin{algorithm}[t]
\small
\begin{tabbing}
\hspace{4mm}\=\hspace{4mm}\=\hspace{4mm}\=%
  \hspace{4mm}\=\kill
\textbf{function} \textsc{Manipulate-Leaves}($T$):\\
\> \textbf{for each} level $\ell$ from deepest
  to root \textbf{do}\\
\>\> \textbf{repeat}\\
\>\>\> \textbf{for each} sibling pair
  $(\ell_i, \ell_{i+1})$ at level $\ell$\\
\>\>\>\> \textbf{if}
  $d(\ell_i) = s(\ell_{i+1})$ and
  $\tau(\ell_i) \neq \tau(\ell_{i+1})$\\
\>\>\>\> \textbf{then} replace
  $(\ell_i, \ell_{i+1})$ with
  \textsc{Chain}$(\ell_i, \ell_{i+1})$\\
\>\>\> \textbf{for each}
  $\mathsf{Tree}(c, [t_1 \ldots t_m])$
  at level $\ell$\\
\>\>\>\> \textbf{if} all $t_j$ are chains
  or leaves\\
\>\>\>\> \textbf{then} lift $t_1 \ldots t_m$
  to parent\\
\>\> \textbf{until} no change at level $\ell$\\
\> \textbf{return} $T$
\end{tabbing}
\caption{Leaf manipulation.
  \textsc{Chain} constructs a transfer chain
  from two adjacent transfers.  Lifting dissolves fully-reduced
  $\mathsf{Tree}$ nodes.}
\label{alg:leaves}
\end{algorithm}

\begin{table}[t]
\caption{Rewriting rules.  $\ell$:~leaf;
  $C_{\textit{label}}$:~labelled chain.
  $^\dagger$Optional deployment hint.}
\label{tab:labels}
\centering
{\footnotesize
\begin{tabular}{@{}@{\hspace{2pt}}c@{\hspace{4pt}}>{\raggedright\arraybackslash}p{5.6cm}@{\hspace{3pt}}l@{}}
\toprule
\textcolor{darkred}{\#} & Premise & \textcolor{darkred}{Conclusion} \\
\midrule
\multicolumn{3}{@{}l}{\textbf{Leaf manipulation}
  (within a call-frame node)} \\[2pt]
\textcolor{darkred}{R1}
  & $d(\ell_1){=}s(\ell_2)$,
    $\tau(\ell_1){\neq}\tau(\ell_2)$,
    $\mathit{burn}(\ell_1){\Leftrightarrow}\mathit{mint}(\ell_2)$
  & \textcolor{darkred}{$(\ell_1, \ell_2) \to C_{\textit{chain}}$} \\
\textcolor{darkred}{R2}
  & $\ell_b$ burn, $\ell$ not mint,
    $d(\ell_b){=}s(\ell)$
  & \textcolor{darkred}{$(\ell_b, \ell) \to C_{\textit{burn}}$} \\
\textcolor{darkred}{R3}
  & $\ell_m$ mint, $\ell$ not burn,
    $d(\ell){=}s(\ell_m)$
  & \textcolor{darkred}{$(\ell, \ell_m) \to C_{\textit{mint}}$} \\
\textcolor{darkred}{R4}
  & $d(\ell_1){=}s(\ell_2)$, $d(\ell_2){=}s(\ell_1)$,
    $\tau(\ell_1){=}\tau(\ell_2)$,
    $d(\ell_1) \notin \mathit{Routers}$,
    $\exists\, i,\, \sigma(\ell_i){\neq}d(\ell_1)$
  & \textcolor{darkred}{$(\ell_1, \ell_2) \to C_{\textit{chain}}$} \\
\textcolor{darkred}{R5}$^\dagger$
  & $d(\ell_1){=}s(\ell_2)$, $\tau(\ell_1){=}\tau(\ell_2)$,
    $d(\ell_1) \in \mathit{Routers}$,
    $\mathit{burn}(\ell_1){\Leftrightarrow}\mathit{mint}(\ell_2)$
  & \textcolor{darkred}{$(\ell_1, \ell_2) \to C_{\textit{chain}}$} \\
\textcolor{darkred}{R6}
  & $d(C){=}s(\ell) \vee d(\ell){=}s(C)$
  & \textcolor{darkred}{$(\ell, C) \to C'_{\textit{chain}}$} \\
\textcolor{darkred}{R7}
  & $s(C_1){=}s(C_2)$, $d(C_1){=}d(C_2)$, $s{\neq}d$,
    $\tau_{\mathrm{out}}(C_1){=}\tau_{\mathrm{out}}(C_2)$,
    $\tau_{\mathrm{mid}}(C_1){\neq}\tau_{\mathrm{mid}}(C_2)$
  & \textcolor{darkred}{$(C_1, C_2) \to C_{\textit{merge}}$} \\
\textcolor{darkred}{R8}
  & $s(C_1){=}s(C_2){=}d(C_1){=}d(C_2)$,
    $[\tau_{\mathrm{in}}(C_1){=}\tau_{\mathrm{in}}(C_2)
    \wedge \tau_{\mathrm{mid}}(C_1){=}\tau_{\mathrm{mid}}(C_2)
    \wedge \tau_{\mathrm{out}}(C_1){=}\tau_{\mathrm{out}}(C_2)]
    \vee \mathit{BalCont}(s)$
  & \textcolor{darkred}{$(C_1, C_2) \to C_{\textit{merge}}$} \\
\textcolor{darkred}{R9}
  & $s(C_1){=}d(C_1){=}s(C_2){=}d(C_2)$,
    $\tau_{\mathrm{in}}(C_1){=}\tau_{\mathrm{in}}(C_2)$,
    $\tau_{\mathrm{out}}(C_1){=}\tau_{\mathrm{out}}(C_2)$
  & \textcolor{darkred}{$(C_1, C_2) \to C_{\textit{merge}}$} \\
\textcolor{darkred}{R10}
  & $d(C_1){=}s(C_2)$,
    $[\tau_{\mathrm{out}}(C_1){=}\tau_{\mathrm{in}}(C_2)]
    \vee \mathit{BalCont}(d(C_1))$
  & \textcolor{darkred}{$(C_1, C_2) \to C_{\textit{chain}}$} \\
\midrule
\multicolumn{3}{@{}l}{\textbf{Node manipulation}
  (across call frames)} \\[2pt]
\textcolor{darkred}{R11}
  & $d(\ell_1){=}s(\ell_2)$,
    $\tau(\ell_1){=}\tau(\ell_2)$,
    $d(\ell_2){\neq}s(\ell_1)$,
    $d(\ell_1) \notin \mathit{Routers}$,
    $\sigma(\ell_1) \neq d(\ell_1)$
  & \textcolor{darkred}{$(\ell_1, \ell_2) \to C_{\textit{chain}}$} \\
\textcolor{darkred}{R12}
  & $[d(C){=}s(\ell) \wedge \tau_{\mathrm{out}}(C){=}\tau(\ell)]
    \vee [d(\ell){=}s(C) \wedge \tau(\ell){=}\tau_{\mathrm{in}}(C)]$
  & \textcolor{darkred}{$(\ell, C) \to C'_{\textit{chain}}$} \\
\textcolor{darkred}{R13}
  & R7's premises, and
    $\tau_{\mathrm{in}}(C_1){=}\tau_{\mathrm{in}}(C_2)$
  & \textcolor{darkred}{$(C_1, C_2) \to C_{\textit{merge}}$} \\
\midrule
\multicolumn{3}{@{}l}{\textbf{Annotation}
  (both leaf and node stages)} \\[2pt]
\textcolor{darkred}{R14}
  & $s(C){=}d(C)$, $\tau_{\mathrm{in}}(C){=_\tau}\tau_{\mathrm{out}}(C)$,
    $[\sigma(C){\notin}C \vee s(C){=}\sigma(C)]$,
    $\neg\mathit{WrapUnwrap}(C)$
  & \textcolor{darkred}{$C_{\textit{chain}} \to C_{\textit{arb}}$} \\
\textcolor{darkred}{R15}
  & $s(C){=}d(C)$,
    $\sigma(C) \in C$
  & \textcolor{darkred}{$C_{\textit{chain}} \to C_{\textit{cycle}}$} \\
\midrule
\multicolumn{3}{@{}l}{\textbf{Validation}
  (post-annotation)} \\[2pt]
\textcolor{darkred}{R16}
  & $\delta_C[s(C), \tau_{\mathrm{in}}(C)] \leq 0$
  & \textcolor{darkred}{$C_{\textit{arb}} \to C_{\textit{cycle}}$} \\
\bottomrule
\end{tabular}}
\end{table}

Leaf manipulation, the first rewriting stage
(R1--R10), operates on sibling children within
each $\mathsf{Tree}$ node, bottom-up from the deepest
subtrees to the root.  Algorithm~\ref{alg:leaves} gives
the pseudocode, Table~\ref{tab:labels} the rules.

\paragraph{Chaining}
Two sibling leaves are \textit{chainable} when
$d(\ell_1) = s(\ell_2)$ and
$\tau(\ell_1) \neq \tau(\ell_2)$ (R1).  The address
match locates a shared intermediary, typically a pool;
the token mismatch is what a swap looks like from
outside, one asset in and another out.  The pair
becomes a chain with origin~$s(\ell_1)$,
destination~$d(\ell_2)$ and middleman~$\{d(\ell_1)\}$.
The rule extends transitively, to a chain and an
adjacent leaf (R6) and to two chains that meet (R10);
R10 also fires under balance continuity at the
junction (Definition~\ref{def:chain}), which admits
balance-preserving routing such as a
WETH$\leftrightarrow$USDC relay.

Chains that run in parallel rather than in sequence are
merged instead.  R7 takes two chains with the same
origin and destination that give back the same asset
but pass through different intermediaries: the premise
$\tau_{\mathrm{mid}}(C_1) \neq \tau_{\mathrm{mid}}(C_2)$
says exactly that, and without it the rule would merge
a route with itself.  R8 and R9 are its closed forms,
for chains that already return to their origin, R8
requiring the intermediate token to agree as well so
that merging is summation.  A merged node is labeled
\textit{merging} rather than \textit{chaining}: a
chained node extends one path, a merged node holds
several.  Either way the delta maps add,
$\delta_{C_m}[v, \tau] = \delta_{C_1}[v, \tau] +
\delta_{C_2}[v, \tau]$ (Definition~\ref{def:balance}).

Chaining a round trip needs one further decision.  When
two sibling leaves return to where they started through
an address~$B$ (R4), $B$ is either swapping, in which
case the pair is a chain, or merely relaying, in which
case chaining it would short-circuit the path.  We read
this off the \textit{sender} field: when $\sigma \neq B$
on either leaf, an external contract is driving the
interaction \textit{through} $B$, the structural
signature of a swap.  The check needs no external
metadata, only the call hierarchy already in the trace.
R4 yields a plain chain; whether the round trip
deserves a name is decided later, by annotation.
Deployments may optionally declare router addresses
(R5, the only protocol-specific rule) so that
same-token transfers through an aggregator chain
anyway; disabling the hint costs recall on
universal-router transactions and adds no false
positives, since removing a rule only removes
reductions.  Ours holds one address, the Uniswap~V4
universal router.

\paragraph{Mint- and burn-aware chaining}
A transfer out of the null address
(\texttt{0x0000\ldots0}) is a \textit{mint} and one
into it a \textit{burn}; the decode layer normalizes
both to name the emitting contract instead.  Since the
null address is not a participant, neither can chain on
its own, so each pairs with an adjacent ordinary
transfer that completes the flow.  A burn followed by a
transfer out of the same contract is a \textit{token
burn} (R2), value leaving as the token is destroyed,
which also covers Withdrawal\,+\,NativeTransfer
(unwrap); a transfer followed by a mint is the mirror
(R3), covering Deposit\,+\,NativeTransfer (wrap).  The
two exclude each other, so neither fires on a burn
feeding a mint, a token migration rather than either,
which R1 chains plainly.  Token burns and mints
then take part in cycle connection on the same footing
as \textit{cycle}-labeled chains, so the fixpoint
reaches arbitrages routed through wrapped-token
contracts and vault deposits.  The normalization
belongs to the decode layer; the algorithm treats the
label as opaque.

\paragraph{Lifting}
Once an internal $\mathsf{Tree}$ node holds only chains
and leaves, its children are \textit{lifted} into the
parent, becoming siblings of the frame that produced
them.  This exposes transfers hidden inside nested
calls for chaining one level up.  Only internal frames
are lifted; the root is never dissolved.

\paragraph{Bottom-up iteration}
Chaining and lifting run as a fixpoint at each level,
deepest first.  In $\mathit{tx}_R$ lifting is not
housekeeping but the precondition for everything else:
in Figure~\ref{fig:ast}(a) no two of the ten transfers
are siblings,
so no rule can fire at all until frames are dissolved.
The \textsc{WETH} leaf $\mathit{SC} \to \mathit{C/W}$
sits beside the frame holding
$\mathit{C/W} \to \mathit{C/BN}$ in \textsc{CHAD}; one
lift makes the two leaves siblings and R1 chains them
into $\mathit{SC} \to \mathit{C/BN}$
(mark~\textcircled{\scriptsize 1}).  A second lift
brings the \textsc{BEAN} leaf
$\mathit{C/BN} \to \mathit{BN/W}$ alongside, and R6
extends the chain to $\mathit{SC} \to \mathit{BN/W}$
(mark~\textcircled{\scriptsize 2}).
Twelve levels of this leave
three chains agreeing at both boundaries but differing
in the middle ($\tau_{\mathrm{mid}} =$ \textsc{CHAD},
\textsc{DUCK}, \textsc{MOON}) beside the return leaf
$\mathit{BN/W} \to \mathit{SC}$, ready for the merge.

\subsection{Stage 2: Node Manipulation}
\label{sec:algo:nodes}

\begin{algorithm}[t]
\small
\begin{tabbing}
\hspace{4mm}\=\hspace{4mm}\=\hspace{4mm}\=%
  \hspace{4mm}\=\kill
\textbf{function} \textsc{Annotate-and-Reduce}%
  ($T$, $\mathit{from}$, $\mathit{to}$):\\[2pt]
\> \textrm{// Label closed chains
  (Def.~\ref{def:cycle})}\\
\> $T' \leftarrow$ \textsc{Annotate-Cycles}%
  ($T$, $\mathit{from}$)\\[2pt]
\> \textrm{// Merge complementary open chains}\\
\> $T'' \leftarrow$ \textsc{Connect-Cycles}%
  ($T'$, $\mathit{from}$, $\mathit{to}$)\\[2pt]
\> \textbf{if} $T'' = T$ \textbf{then return} $T$\\
\> \textbf{else return}
  \textsc{Annotate-and-Reduce}%
  ($T''$, $\mathit{from}$, $\mathit{to}$)
\end{tabbing}
\caption{Annotation and connection loop; runs after
  the structural fixpoint.}
\label{alg:fixpoint}
\end{algorithm}

After leaf manipulation, the root
$\mathsf{Tree}$ node's children are a mixture of
transfer chains (Definition~\ref{def:chain}) and
residual $\mathsf{Leaf}$ nodes.
We write $a \in C$ for $a \in M(C)$.
The node-level pass first chains residual elements
using a \emph{same-token} rule~(R11), the dual of
the leaf-level different-token rule: it captures
the same asset flowing across contract
boundaries.  Likewise, a leaf and a chain are
chained when address-adjacent and
token-compatible~(R12).
Each address absorbed this way joins the chain's
middleman set~$M$; in practice these are routers,
aggregators and proxies, recognized structurally from
the pass-through pattern rather than from any list.
Parallel chains are merged as at the leaf level, by
R13: R7's premises with the entry tokens pinned as
well, which the node level can require because both
origins are frame actors and their entry tokens are
therefore determined.
The annotation rules ($R_{14}$, $R_{15}$) then label
the chains that close.
Algorithm~\ref{alg:fixpoint} gives the pseudocode
and Table~\ref{tab:labels} the rewriting rules.

\paragraph{Cycle annotation}
Cycle annotation labels each closed chain
($s(C) = d(C)$) with one of two labels (R14--R15).
Let $\sigma(C)$ be the $\sigma$ of the leaf at
$s(C)$; when ambiguous, $\sigma(C)$ is taken from
the eldest ancestor frame covering the operands,
so that $\sigma$ tracks the call frame that
authored the chain rather than the bottom-up
trace order.
The two rows part on $\sigma(C)$: R14 wants it outside
the middleman set (a bot orchestrating swaps) or equal
to the origin (an EOA-initiated cycle), R15 finds it
inside, which is routing rather than profit extraction.
R14's remaining guard excludes pure native/wrapped
roundtrips: $\mathit{WrapUnwrap}(C)$ holds when $M(C)$
contains a contract of
$\tau_{\mathrm{in}}/\tau_{\mathrm{out}}$ and both
sub-chains are leaves or both satisfy it.
Annotation does not inspect the delta map; the
economic check is $R_{16}$'s
(\S\ref{sec:algo:leftover}).

\paragraph{Cycle connection}
Not all cycles form within a single subtree: a
transaction may leave two labeled chains that close
only when taken together, one carrying the outward leg
and one the return.
The \textit{cycle connection} rule identifies
pairs of already-labeled chains $C_1$, $C_2$
(labeled \textit{cycle}, \textit{token burn},
or \textit{token mint}) at the same tree level
such that
(i)~$s(C_1) = s(C_2) = \mathit{to}$
(same origin, equal to the transaction recipient),
(ii)~$\tau_{\mathrm{out}}(C_1) =
\tau_{\mathrm{in}}(C_2)$ (complementary token
flows), and
(iii)~$s(C_1 \cdot C_2) = d(C_1 \cdot C_2)$
(the concatenation forms a cycle).
An additional filter excludes pairs where either
chain's middleman set is reduced to the sender
alone ($M(C_i) = \{\sigma(C_i)\}$): a routing
pattern (cf.\ R15).
When such a pair is found, the two chains are merged
into a single cycle node.  Connection uses strict
token equality; annotation uses $=_\tau$
(Definition~\ref{def:tokeq}) to label the resulting
cycle.  This separation is deliberate: connection
matches exact token flows, while annotation
recognizes that a native-asset transfer followed by
a wrapped-asset chain (or vice versa) closes the
same economic cycle.  $=_\tau$ is a
representation-level fact about the chain, not about
any specific protocol.

\paragraph{Termination}
Each pass of \textsc{Annotate-and-Reduce} either
(i)~labels at least one chain as \textit{arbitrage}
or \textit{cycle}, or (ii)~merges two distinct chains
into one, strictly reducing the child count of some
$\mathsf{Tree}$ node.  Since labels are monotonic
(a chain is never re-labeled from \textit{arbitrage}
back to \textit{chaining}) and the node count is
finite, the loop terminates in at most
$3n - 2$~passes, where $n$ is the initial
number of $\mathsf{Leaf}$ nodes
(Theorem~\ref{thm:term}).  That bounds
\emph{passes}; what a pass costs is an
implementation matter, priced and measured in
\S\ref{sec:eval:perf}.  Deep, narrow trees converge
fast because each subtree is small; wide, flat ones
approach the pass bound.

In~$\mathit{tx}_R$ the three chains left by the leaf
stage share origin~$\mathit{SC}$ and
destination~$\mathit{BN/W}$ and agree in both boundary
tokens, so R13 merges them into one aggregate, and the
return leaf $\mathit{BN/W} \to \mathit{SC}$ closes that
aggregate into a cycle by R6.  Both fire in the
structural fixpoint, which runs the rules of \emph{both}
loci to exhaustion (a merge creates new adjacencies,
and chaining consumes them) before any label is
assigned.  Algorithm~\ref{alg:fixpoint} is the loop that
follows: here it only names the cycle
(Figure~\ref{fig:ast}(b)).

\subsection{After the Fixpoint: Validation and Profit}
\label{sec:algo:leftover}

\begin{algorithm}[t]
\small
\begin{tabbing}
\hspace{4mm}\=\hspace{4mm}\=\hspace{4mm}\=\kill
\textbf{function} \textsc{Detect-Arbitrage}%
  ($\mathit{trace}$, $\mathit{from}$,
  $\mathit{to}$, $\mathit{meta}$):\\[2pt]
\> \textrm{// Build (\S\ref{sec:algo:ast})}\\
\> $T \leftarrow$ \textsc{Trim}%
  (\textsc{Build-AST}($\mathit{trace}$))\\[2pt]
\> \textrm{// Both loci (\S\ref{sec:algo:leaves},
  \S\ref{sec:algo:nodes}):
  $R_1$--$R_{13}$}\\
\> $T \leftarrow$
  \textsc{Structural-Fixpoint}($T$)\\[2pt]
\> \textrm{// Annotation (\S\ref{sec:algo:nodes}):
  $R_{14}$, $R_{15}$, connection}\\
\> $T \leftarrow$
  \textsc{Annotate-and-Reduce}%
  ($T$, $\mathit{from}$, $\mathit{to}$)\\[2pt]
\> \textrm{// Validation (\S\ref{sec:algo:leftover}):
  $R_{16}$}\\
\> $T \leftarrow$
  \textsc{Validate-Deltas}($T$, $\mathit{to}$)\\[2pt]
\> \textrm{// Profit (\S\ref{sec:algo:leftover}),
  verdict (\S\ref{sec:classify})}\\
\> $(C, L) \leftarrow$
  \textsc{Extract-and-Recover}($T$,
  $\mathit{to}$)\\
\> $\Delta \leftarrow$
  \textsc{Profit}($C$, $L$, $\mathit{to}$,
  $\mathit{meta}$)\\
\> \textbf{return}
  \textsc{Classify}($C$, $L$, $\Delta$)
\end{tabbing}
\caption{Complete detection pipeline.}
\label{alg:detect}
\end{algorithm}

\begin{algorithm}[t]
\small
\begin{tabbing}
\hspace{4mm}\=\hspace{4mm}\=\hspace{4mm}\=%
  \hspace{4mm}\=\kill
\textbf{function} \textsc{Classify}%
  ($C$, $L$, $\Delta$):\\[2pt]
\> $R \leftarrow \emptyset$
  \hfill\textrm{// reason accumulator}\\
\> \textbf{if} $C = \emptyset$
  \textbf{then} $R \leftarrow R \cup
  \{\textsc{no\_cycles}\}$\\
\> \textbf{if} $\Delta_{\mathrm{gross}}$
  is mixed \textbf{then}
  $R \leftarrow R \cup
  \{\textsc{balance\_mixed}\}$\\
\> \textbf{if} net profit $\leq 0$
  \textbf{then}
  $R \leftarrow R \cup
  \{\textsc{neg\_profit}\}$\\
\> \textbf{if} $\Delta_{\mathrm{net}}$
  is mixed \textbf{then}
  $R \leftarrow R \cup
  \{\textsc{final\_mixed}\}$\\
\> \textbf{if} $\Delta_{\mathrm{net}}$
  is negative \textbf{then}
  $R \leftarrow R \cup
  \{\textsc{final\_neg}\}$\\
\> \textbf{if} $L \neq \emptyset$
  \textbf{then}
  $R \leftarrow R \cup
  \{\textsc{leftovers}\}$\\[4pt]
\> \textrm{// Verdict (strict priority)}\\
\> \textbf{if}
  $\textsc{no\_cycles} \in R$
  \textbf{then return}
  (\textit{None}, $R$)\\
\> \textbf{if}
  $\textsc{leftovers} \in R$
  \textbf{then return}
  (\textit{Warning}, $R$)\\
\> \textbf{if}
  $\textsc{final\_neg} \in R$
  \textbf{then return}
  (\textit{Warning}, $R$)\\
\> \textbf{if}
  $\textsc{final\_mixed} \in R$
  \textbf{then return}
  (\textit{Warning}, $R$)\\
\> \textbf{return}
  (\textit{Arbitrage}, $R$)
\end{tabbing}
\caption{Classification logic.}
\label{alg:classify}
\end{algorithm}

Once the fixpoint terminates, the reduced AST
contains annotated cycles, open transfer chains
that did not close,
and residual $\mathsf{Leaf}$ nodes, the transfers the
reduction could not place, which we call
\textit{leftovers}.  Validation ($R_{16}$) runs once on that
result, after which a profit figure and a verdict are
read off.
Algorithm~\ref{alg:detect} gives the complete
pipeline, writing \textsc{Structural-Fixpoint} for the
reduction of \S\ref{sec:algo:leaves}
and \S\ref{sec:algo:nodes}, which iterates
Algorithm~\ref{alg:leaves} and the node rules of
Table~\ref{tab:labels} until neither locus applies;
Algorithm~\ref{alg:classify} details the
classification logic, with
$\Delta_{\mathrm{gross}}[\tau]{=}\sum_{C}\delta_C[s(C),\tau]$
and $\Delta_{\mathrm{net}}{=}\Delta_{\mathrm{gross}}{-}
\mathit{gas}$, the costs charged against the component
they are denominated in; ``mixed'' = opposite-sign
components.  The two negative conditions of
Algorithm~\ref{alg:classify} differ in scope:
\textsc{neg\_profit} tests that one component,
\textsc{final\_neg} tests them all.

\paragraph{Delta validation}
\textsc{Validate-Deltas} applies $R_{16}$ to every chain
labeled \textit{arbitrage}: it reads
$\delta_C[s(C), \tau_{\mathrm{in}}(C)]$ and downgrades
the label to \textit{cycle} when that entry is not
positive, so that no structurally valid but
unprofitable cycle reaches the profit calculation as an
arbitrage.  It runs here rather than inside annotation
because a chain's delta is complete only once every
sub-chain connected to it has contributed.
When the transaction sender and receiver coincide
(self-call), \textsc{Validate-Deltas} is skipped:
the contract is both origin and intermediary of every
cycle, making per-cycle delta attribution unreliable.

\paragraph{Leftover transfers}
\textsc{Extract-and-Recover} first extracts all
transfers from the reduced AST that were not
incorporated into an arbitrage cycle or a transfer
chain.  These
\textit{leftovers} indicate that the reduction
could not reconstruct a complete flow path for
those transfers.  The two dominant causes are
missing transfer
decodings (non-standard mechanisms invisible to
the decoder) and complex routing that pairwise
chaining cannot resolve.
Leftovers are recorded alongside the detected
cycles; their presence signals that the profit
calculation may be incomplete.

\paragraph{Leftover cycle recovery}
Before computing the final profit, a two-pass
recovery step attempts to extract additional
cycles from the leftovers.

First, it identifies \textit{specular pairs}:
two leftover transfers
$(A, B, a, \tau)$ and $(B, A, a, \tau)$ with
matching token, matching amount, and reversed
endpoints.  These are exact round-trips, characteristic of
flash-loan borrow/repay and other
same-amount refund patterns.  Specular pairs are
removed from the leftover list and grouped into
\textit{leftover cycles}.  Their net delta for
the loan token is zero (the exact borrow amount
is returned); the arbitrageur's profit is
captured separately by the AST cycles that used
the borrowed funds.

Second, remaining leftovers are scanned for
\textit{complementary transfers}: two transfers
involving the same token~$\tau$ where either
endpoint connects
($A = B'$ or $B = A'$), regardless of amount.
These capture partial round-trips where the
repayment differs from the borrowed amount
(e.g., a flash loan with a fee).  Complementary
groups are merged into leftover cycles and their
net delta is added to the accumulator.
Remaining unmatched transfers stay as leftovers
in the final output.

This two-pass recovery is where lending-based
arbitrages are captured: the borrower takes a
flash loan on one side of the call tree and
repays it on the other, too far apart for the
chaining rules to connect.  Specular and
complementary recovery reconstruct these
round-trips from the leftovers.

\paragraph{Profit calculation}
\textsc{Profit} computes the net token balance of
address~$\mathit{to}$ from the cycles and
leftovers produced by
\textsc{Extract-and-Recover}.
Each chain carries the balance $\delta_C$ of
Definition~\ref{def:balance}, maintained as it is
assembled; for a cycle rooted at~$\mathit{to}$, the
entry $\delta_C[\mathit{to}, \tau]$ is the gross profit
in~$\tau$ before costs.  Any remaining leftover
transfer whose destination is~$\mathit{to}$ is
also added to the balance, capturing direct
incoming value that was not part of a cycle.

Three cost components are then deducted:
(i)~the \textit{base fee}
$\mathit{gas\_used} \times \mathit{base\_fee}$
(burned by the network);
(ii)~the \textit{priority fee}
$\mathit{gas\_used} \times
(\mathit{effective\_gas\_price} -
\mathit{base\_fee})$ (tip to the block builder);
and (iii)~an optional \textit{direct builder
payment}
$\mathsf{Leaf}(\mathit{to}, \mathit{builder},
a, \textsc{ETH})$ to the builder's coinbase,
common in MEV strategies.
The first two are read from the transaction's own
metadata, not from the tree, which is why the
native-ETH cost edges of Figure~\ref{fig:txR} never had
to enter~$T_0$.  The third is an ordinary leaf, and the
walk that collects it sets it aside rather than
counting it as profit or as a leftover, so paying the
builder does not by itself make a transaction
inconclusive.
The result is a set of per-token net
balances~$\Delta$ after all costs.

\paragraph{Why profit is attributed at \textit{to}}
The rewriting is indifferent to who profits: it closes
a cycle wherever one closes, and a cycle rooted at an
inner contract is annotated like any other.  The cost
side is what does not generalise.  Gas is paid once, by
the sender, for the transaction as a whole, and no
principled share of it belongs to an inner address that
happens to net positive; without that share there is
nothing to weigh a positive inner balance against, and
a verdict there would report a gross figure as though
it were net.  We therefore attribute only where the
accounting closes, and leave cycles at inner addresses
detected but unclassified.  The choice costs recall and
never soundness, and \S\ref{sec:eval} measures what it
leaves behind.

\subsection{Classification}\label{sec:classify}

Classification inspects~$\Delta$ along sign and
token diversity, yielding four cases:
\textit{positive single-token}
($|\Delta|{=}1$, $a{>}0$), \textit{positive
multi-token} ($|\Delta|{>}1$, all $a_i{>}0$),
\textit{mixed} ($|\Delta|{>}1$ with differing
signs, e.g.\ $+0.5$\,\textsc{weth},
$-200$\,\textsc{usdt}), and \textit{negative}
($|\Delta|{=}1$ with $a{\leq}0$, or
$\Delta{=}\emptyset$).
This classification is computed twice: once on the
gross balance (before costs) and once on the net
balance (after costs).  \textsc{Classify}
(Algorithm~\ref{alg:classify}) turns the six resulting
conditions into a verdict accompanied by an ordered
list of \textit{reasons}, diagnostic labels that
explain the decision to the analyst.  Only four of the
six gate the verdict: a transaction is classified
\textit{Arbitrage} when a cycle survived validation, no
leftovers remain, and the \emph{net} balance is
positive in every component.  When any of the four is
uncertain, the verdict conservatively falls back to
\textit{Warning}, and the gross-balance and net-profit
conditions report without deciding.

\paragraph{Token equivalence}
R14 tests $=_\tau$ (Definition~\ref{def:tokeq}), so a
cycle that leaves in a native asset and returns in its
wrapped form is labeled \textit{arbitrage} and is
covered by Theorem~\ref{thm:sound}.  The mechanization
leaves $=_\tau$ an arbitrary boolean relation on
tokens: no proof invokes any algebraic property of it,
so soundness holds for every instantiation, and the
reflexivity and symmetry a deployment will want are
stated as an obligation on that deployment rather than
assumed anywhere.

For~$\mathit{tx}_R$, the three parallel paths
(Figure~\ref{fig:txR}) reduce to the single arbitrage
cycle of Figure~\ref{fig:ast}(b), spanning all seven
pools, with no leftovers and
$\Delta_{\mathrm{gross}} =
\{(\textsc{weth},\, {+}0.0023)\}$.  Costs leave
$\Delta_{\mathrm{net}} =
\{(\textsc{eth},\, {+}5.5 \times 10^{-5})\}$, reported
in the native asset because the deployment treats it
and its wrapped form as equivalent
(Definition~\ref{def:tokeq}): one component, positive,
so none of the four gates fires and the verdict is
\textit{Arbitrage}.

A verdict of \textit{Arbitrage} certifies closure and
token match from annotation, positivity from
validation, and that every \emph{decoded} transfer is
accounted for; Theorem~\ref{thm:sound} states the
guarantee precisely.

\paragraph{Warnings as diagnostic signals}
A \textit{Warning} means the system cannot
confirm with full confidence, typically because
leftovers indicate incomplete decoding or because
a mixed-sign~$\Delta$ requires cross-token
valuation we deliberately avoid.
The reasons in~$R$ are split into
\textit{verdict-determining}
(\textsc{no\_cycles}, \textsc{leftovers},
\textsc{final\_neg}, \textsc{final\_mixed})
and \textit{explanatory} (all others).
Only the former gate the decision; the latter
provide a forensic diagnostic picture.

\subsection{Formal Properties}\label{sec:algo:formal}

Five statements say what the rewriting may do to a
trace, and therefore what a verdict is worth.  Each is
stated in English first and formally second, proved
against the sixteen rules of Table~\ref{tab:labels},
and assumes Property~\ref{prop:dse} and nothing further
about the execution environment.  Proofs are in
Appendix~\ref{app:proofs}, which opens by separating
what is mechanized from what is not
(Table~\ref{tab:guarantees}).

The reduction re-brackets the transfers it is given.
It drops none, duplicates none, and invents none, so
every balance stays exactly where the trace put it,
whichever order the rules fire in.

\begin{theorem}[Preservation]\label{thm:preserve}
Let $T_0$ be the initial AST and $T_f$ any tree
reachable from it.  Every chain of $T_f$ tracks walks
in $G$ (Definition~\ref{def:tg}); every transfer of
$T_f$ is a transfer of $T_0$; and $T_0$ and $T_f$ carry
the same multiset of transfers.
\end{theorem}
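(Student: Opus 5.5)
The plan is an induction on the length of the rewrite sequence $T_0 \to T_1 \to \cdots \to T_f$, with one invariant strong enough to carry all three claims. Let $\mathsf{tr}(T)$ be the multiset of transfers of a tree. It is defined structurally: $\mathsf{tr}(\mathsf{Leaf}(t)) = \{t\}$, $\mathsf{tr}(\mathsf{Chain}(C))$ is the multiset of entries of $C$, where a chain is stored as a binary tree and we take its in-order traversal, and $\mathsf{tr}(\mathsf{Tree}(v,[T_1,\ldots,T_m])) = \biguplus_i \mathsf{tr}(T_i)$. The invariant is $\mathsf{tr}(T_k) = \mathsf{tr}(T_0)$.

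The third claim is then immediate at $T_f$. The second claim follows from it, since any transfer of $T_f$ lies in $\mathsf{tr}(T_f) = \mathsf{tr}(T_0)$. Because construction of $T_0$ turns each decoded transfer into exactly one leaf, $\mathsf{tr}(T_0) = E$. This identification is what ties everything to the transfer graph $G$ of Definition~\ref{def:tg}.

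The inductive step is a case analysis over the ways a step can fire: the sixteen rules of Table~\ref{tab:labels}, lifting, and cycle connection. Every rewrite is local: it replaces a contiguous pair of operands among one node's children, or a single chain, by one new node. So it suffices to prove a \emph{local} multiset equation for each rule, and then a congruence lemma saying that replacing a subterm by one with the same $\mathsf{tr}$ preserves $\mathsf{tr}$ of the whole tree. The congruence lemma is a routine induction on the context.

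The local equations fall into three groups:
\begin{itemize}
  \item \emph{Combining rules.} R1--R13 and connection build $\mathsf{Chain}(x \cdot y)$ from operands $x, y$, so $\mathsf{tr}$ of the result is $\mathsf{tr}(x) \uplus \mathsf{tr}(y)$. This holds because the chain constructor is a binary node over its two operands, whether the operation is chaining or merging. The guards, including $\mathit{BalCont}$, $\sigma$ and $\mathit{Routers}$, are irrelevant here; only the shape of the conclusion matters.
  \item \emph{Relabelling rules.} Annotation (R14, R15) and validation (R16) change only the label and leave the sequence untouched.
  \item \emph{Lifting.} Lifting replaces $\mathsf{Tree}(c,[t_1,\ldots,t_m])$ among its parent's children by $t_1,\ldots,t_m$. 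This preserves the $\biguplus$ by associativity of multiset union.
\end{itemize}
Because each local equation holds regardless of which redex is chosen, the result is independent of firing order, as the statement requires.

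The first claim, that every chain ``tracks walks in $G$'', needs a precise reading, and I expect it to be the main obstacle. Chains built under balance continuity (R10, R8), and merges of parallel legs (R7, R13), are \emph{not} connected chains. So the claim cannot mean that a chain is a single walk. I would formalize ``tracks walks'' as follows: the in-order sequence of a chain decomposes into maximal connected runs, each run is a walk in $G$, and the whole sequence is a sub-multiset of $E$. This matches conditions 1 and 3 of Definition~\ref{def:cycle}.

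Under this reading, the first claim follows from the invariant together with two observations:
\begin{enumerate}
  \item Each maximal connected run is by definition a sequence of edges of $G$ with $d(t_i) = s(t_{i+1})$, hence a walk.
  \item Sub-multiset inclusion holds because $\mathsf{tr}(\mathsf{Chain}(C)) \subseteq \mathsf{tr}(T_f) = E$, using that disjoint subterms have disjoint multisets, so no transfer is reused.
\end{enumerate}
The delicate point is to make sure that no rule, in particular merge summation in R8 or cycle connection, reorders or duplicates entries in the binary-tree representation. I would check this constructor by constructor against the mechanized definitions.
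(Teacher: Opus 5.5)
Your conservation argument matches the paper's. Each rule's conclusion carries exactly its operands' transfers, concatenated. Lifting and the congruence case preserve the multiset sum. Induction over the step relation and its reflexive--transitive closure then gives clause~3, from which clause~2 follows.

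\textbf{The gap is clause~1.} In the paper, a chain $C$ \emph{tracks} walks in $G$ when two things hold: its transfer sequence splits into connected runs that are walks over $E$, \emph{and} its stored endpoints $s(C)$, $d(C)$ are the source of its first transfer and the destination of its last. Soundness uses the same invariant for the stored boundary tokens $\tau_{\mathrm{in}}(C)$ and $\tau_{\mathrm{out}}(C)$. You drop the second conjunct, and what remains has no content. Any finite sequence of edges splits into maximal connected runs, each trivially a walk, so your clause~1 is clause~3 restated. The dropped conjunct is the one that matters. R14 tests $s(C)=d(C)$ and $\tau_{\mathrm{in}}(C) =_\tau \tau_{\mathrm{out}}(C)$ on the chain's stored fields. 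Theorem~\ref{thm:sound} must turn that test into closure and token match of the underlying edge sequence, and only this agreement lets it do so.

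This conjunct does not follow from the multiset invariant. It holds at $T_0$, where each leaf is one edge, and must be re-established rule by rule. There the guards and the orientation of concatenation matter, contrary to your remark that only the shape of the conclusion does.
\begin{itemize}
\item R6 must place $\ell$ before or after $C$ according to which disjunct of $d(C)=s(\ell) \vee d(\ell)=s(C)$ holds.
\item R10's concatenation starts at $s(C_1)$ and ends at $d(C_2)$. It fuses two walks into one only because $d(C_1)=s(C_2)$.
\item A merge $C_1 \cdot C_2$ under R7--R9 or R13 may be given the endpoints $(s,d)$ only because the guard forces $s(C_1)=s(C_2)$ and $d(C_1)=d(C_2)$.
\end{itemize}
The ordering inside the binary tree, which you flag as delicate, is irrelevant to multisets. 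It is exactly what this invariant constrains. If $s(C)$ and $d(C)$ were computed from the in-order sequence, the conjunct would hold by definition, but the paper's chains store these fields.

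Two smaller corrections:
\begin{itemize}
\item Balance continuity does not break connectedness, which in Definition~\ref{def:chain} concerns addresses only. R10 always requires $d(C_1)=s(C_2)$. R8 and R9 join two chains closed at the same address, so those junctions also meet. Only R7 and R13, with $s\neq d$, create junctions that fail to meet.
\item $\mathsf{tr}(T_0)=E$ is too strong. Trimming drops zero-amount degenerate leaves, and cost transfers such as the dashed edges of Figure~\ref{fig:txR} never enter $T_0$. You only have $\mathsf{tr}(T_0)\subseteq E$, which is all your argument needs.
\end{itemize}
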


The reduction also stops, and stops soon.  Two counters
fall as it runs: the chains still unlabelled, $u$, and
the total number of children, $c$.  In the
annotation-and-connection loop neither rises and every
non-trivial pass consumes a unit of their sum, which
on a fully lifted non-empty input takes the arithmetic
form $3n - 2$, the bound on passes.  The unrestricted
relation ranks trees by $(c, u)$ instead, which every
rule decreases (Appendix~\ref{app:proofs}).  Lifting
is not needed to terminate, only to give the budget
its arithmetic form.

\begin{theorem}[Termination]\label{thm:term}
The deterministic annotation-and-connection fixpoint
reaches a normal form from every reduced AST.  If its
input is fully lifted and non-empty with $n$
transfers, it takes at most $3n - 2$ passes.  The full
rewriting relation is well-founded under arbitrary
rule orders.
\end{theorem}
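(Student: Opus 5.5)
The plan is to prove the three claims of Theorem~\ref{thm:term} separately, each with its own measure. We start with the unrestricted relation, since the deterministic loop is an instance of it. To each tree $T$ we attach the pair $(c(T), u(T))$, ordered lexicographically. Here $c(T)$ is the total number of children summed over all $\mathsf{Tree}$ and chain nodes, counting each chain as one child of its parent. The second component $u(T)$ is the number of chains still carrying an unannotated label (\textsc{chaining}, \textsc{merging}, \textsc{burn}, \textsc{mint}). We then check the rules of Table~\ref{tab:labels} one at a time against this measure.

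\begin{itemize}
\item Every combining rule (R1--R13 and cycle connection) replaces two siblings by one chain. The parent's child count therefore drops by one. The new chain's internal sequence is built from operands already counted, so it adds nothing, and $c$ strictly decreases whatever happens to $u$.
\item Lifting a $\mathsf{Tree}(v,[t_1,\dots,t_m])$ into its parent removes one child (the frame itself) and adds $m$ children at the parent while deleting the frame's own $m$. This is a net decrease of one, because the frame node disappears.
\item The annotation rules R14 and R15 leave $c$ fixed. They move a chain from an unannotated label to \textsc{arb} or \textsc{cycle}. No rule ever sends a chain back to an unannotated label, since R16 only maps \textsc{arb} to \textsc{cycle}, so $u$ strictly decreases.
\item R16 is applied once, outside the iterated relation. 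If it is included anyway, we use a third component counting \textsc{arb} labels, which R16 decreases while leaving $c$ and $u$ fixed.
\end{itemize}

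Every step thus decreases a tuple in $\mathbb{N}^k$ under the lexicographic order, which is well-founded. This gives well-foundedness under arbitrary rule orders. Preservation (Theorem~\ref{thm:preserve}) guarantees that no step creates new transfers, so $c(T_0)$ is a finite starting bound.

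For the deterministic fixpoint, termination follows at once: \textsc{Annotate-and-Reduce} recurses only when $T''\neq T$. A changed tree means at least one annotation or connection step fired, so the measure has strictly decreased. A strictly decreasing sequence in a well-founded order cannot continue forever, so the loop stops, and the stopping tree is by definition one on which no step of the loop applies, that is, a normal form.

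For the arithmetic bound we switch to the additive potential $\Phi=c+u$. Neither annotation nor connection increases either counter:
\begin{itemize}
\item a connection removes one child, and it turns two already-labelled chains into one, so $u$ does not rise;
\item an annotation lowers $u$ by one and leaves $c$ unchanged.
\end{itemize}
Every non-trivial pass therefore lowers $\Phi$ by at least one. To bound the initial value of $\Phi$ on a fully lifted input with $n$ transfers, we use three facts:
\begin{itemize}
\item since the input is fully lifted, all elements sit as children of the root, so $c\le n$;
\item a chain contains at least two transfers (leaves that remain are not chains), so the number of chains, and hence $u$, is at most $\lfloor n/2\rfloor$;
\item any combination of $n$ transfers into nested chains creates at most $n-1$ internal chain nodes.
\end{itemize}
The step I expect to be the main obstacle is making this count land exactly on $3n-2$ rather than something close to it. I would first try counting nodes of the chain binary trees: the $n$ leaves plus at most $n-1$ internal nodes give at most $2n-1$ nodes. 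The remaining $n-1$ would come from the unannotated count bounded by the number of internal nodes. The passes then number at most $(2n-1)+(n-1)=3n-2$. Non-emptiness ($n\ge1$) is needed so that the bound is meaningful and the final non-changing pass is accounted for. If this attribution turns out to be off by one, I would instead count the terminating pass separately and adjust the potential accordingly.
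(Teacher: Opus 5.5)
Your well-foundedness and loop-termination arguments match the paper's. Both use the lexicographic measure $(c,u)$: chaining, merging and lifting lower $c$, and annotation lowers $u$ while $c$ stays fixed. The paper orders the pair $(u,c)$ for the loop, which makes no difference because neither component rises there. One repair is needed: $c$ must count children of $\mathsf{Tree}$ nodes only, with each chain treated as an atom. As you define it (``summed over all $\mathsf{Tree}$ and chain nodes''), R1 turns two children of the parent into one child plus two entries of the new chain, so $c$ rises, and R6 does not lower it. Your remark that the chain's interior ``adds nothing'' holds only if interiors are not counted.

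The genuine gap is the $3n-2$ bound. Your closing count $(2n-1)+(n-1)$ tallies nodes of chain binary trees, not the children and unlabelled chains that make up $\Phi=c+u$. So it bounds a quantity you never showed decreases, and it reaches $3n-2$ only by coincidence.

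The paper splits the bound as $3n-2=n+(2n-2)$. First, $u\le n$ because chains are built only from leaves. Second, $c\le 2n-2$ by a handshake count: on a fully lifted tree every internal node has at least two children, so there are at most $n-1$ internal nodes and at most $2n-2$ parent--child edges. ``Fully lifted'' is used in that sense (no unary frames), which still allows nested frames, so your $c\le n$ is not available in general. For example, two frames of two leaves each under the root give $c=6>n=4$. The handshake lemma is the missing ingredient.

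If you instead adopt your stronger reading (everything directly under the root), there is nothing to hit exactly, since $3n-2$ is only an upper bound. Your own facts already give $\Phi_0\le n+(n-1)=2n-1\le 3n-2$.

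As for the off-by-one worry: preservation keeps the transfer multiset non-empty and the root is never dissolved, so $c\ge 1$ and hence $\Phi\ge 1$ throughout. At most $\Phi_0-1$ passes change the tree, and the final unchanged pass brings the total to at most $\Phi_0$.
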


Soundness is the statement the rest of the paper is
written to support, so it is worth being exact about
what it does and does not say.  It concludes about the
transaction, not about our labels: when the pipeline
answers \textit{Arbitrage}, the transaction's own
transfer graph contains an arbitrage in the sense of
Definition~\ref{def:arb}, a definition that names
neither the rewriting, nor the normal form, nor any
label the reduction assigns.

\begin{theorem}[Soundness]\label{thm:sound}
Let $T_0$ be a decoded trace and $T_f$ any tree
reachable from it.  If \textsc{Classify} returns
\textit{Arbitrage} on $T_f$ and at least one cycle
survives $R_{16}$ and the net-positive gate, then
there are an address $v$, a
token $\tau$ and a bundle of walks over the transfers
of $T_0$ that together form an arbitrage at $v$ in
$\tau$ (Definition~\ref{def:arb}).
\end{theorem}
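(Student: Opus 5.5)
The plan is to read the witness straight off the surviving cycle and let Theorem~\ref{thm:preserve} carry it back to $T_0$. The verdict \textit{Arbitrage} requires $C \neq \emptyset$ in Algorithm~\ref{alg:classify}, and the hypothesis supplies a chain node $C^\ast$ of $T_f$ labelled \textsc{arb} that survived $R_{16}$. First I establish an invariant on labels: the \textsc{arb} label is introduced only by $R_{14}$, and only $R_{16}$ removes it. Any node carrying \textsc{arb} in $T_f$ therefore satisfied $R_{14}$'s premises when labelled, namely $s(C^\ast)=d(C^\ast)$ and $\tau_{\mathrm{in}}(C^\ast) =_\tau \tau_{\mathrm{out}}(C^\ast)$. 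I need to check that these are not invalidated afterwards. Connection merges only chains labelled \textit{cycle}, \textit{burn} or \textit{mint}, and no structural rule fires after Algorithm~\ref{alg:fixpoint} begins. The node is therefore frozen once labelled, so the premises still hold in $T_f$. Survival of $R_{16}$ gives $\delta_{C^\ast}[s(C^\ast),\tau_{\mathrm{in}}(C^\ast)]>0$. Set $v := s(C^\ast)$ and $\tau := \tau_{\mathrm{in}}(C^\ast)$.

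Next comes the walk bundle. By the first clause of Theorem~\ref{thm:preserve}, the in-order sequence of $C^\ast$ tracks walks in $G$. I cut this sequence at every index $i$ with $d(t_i)\neq s(t_{i+1})$, which produces maximal connected runs $C_1,\dots,C_m$. This is exactly condition~(maximal) of Definition~\ref{def:cycle}, and it is the natural source of the $m-1$ junctions introduced by the balance-continuity variants of $R_8$ and $R_{10}$ and by merges. Condition~(closed) follows from $s(C)=s(t_1)=s(C^\ast)=v$ and $d(C)=d(t_k)=v$, since $R_{14}$ reads endpoints off the first and last transfer. Condition~(drawn from the transaction) follows from the third clause of Theorem~\ref{thm:preserve}. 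The transfers of $C^\ast$ form a sub-multiset of those of $T_f$, because distinct nodes hold disjoint occurrences (rules re-bracket but never copy). The transfers of $T_f$ equal those of $T_0$, and by construction of $T_0$ each decoded transfer is one edge of $E$.

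The arbitrage conditions of Definition~\ref{def:arb} then follow. Entry token holds by the choice of $\tau$. Round trip holds from $R_{14}$'s premise, since $\tau_{\mathrm{in}},\tau_{\mathrm{out}}$ are the tokens of $t_1,t_k$. Profit holds because the balance maintained at the node equals the Definition~\ref{def:balance} balance of its multiset of transfers; this requires an induction over the rules showing that every constructor sets $\delta$ to the sum of its operands' balances. Order independence then gives $\delta_C[v,\tau]=\delta_{C^\ast}[v,\tau]>0$. No algebraic property of $=_\tau$ is used, matching the remark that the proof is uniform in that relation.

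The main obstacle is the invariant that labels and endpoints are stable. Merges ($R_7$--$R_9$, $R_{13}$, connection) build sequences whose first and last transfers need not be the ones whose endpoints the premise checked, for example when $s(C_1)=s(C_2)$ is concatenated. I would therefore first fix, for each rule, the in-order layout of the constructed node and prove that $s,d,\tau_{\mathrm{in}},\tau_{\mathrm{out}}$ of the result equal those of the first and last leaf. Only after that would I argue the freezing property. If the implementation's merges do not satisfy this, I would fall back to defining the witness cycle as the concatenation of the operands ordered so that they close at $v$. This is legitimate because Definition~\ref{def:cycle} allows junctions between pieces.
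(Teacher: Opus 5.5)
Your proposal is correct and follows the paper's proof step for step. It inverts the verdict to an \textsc{arb}-labelled chain that survived $R_{16}$, bridges through Theorem~\ref{thm:preserve} to a sub-multiset of $T_0$'s transfers, and reads off closure, token match, the maximal connected-run bundle and positivity (via $\delta_C$ as a fold of per-transfer balances). The endpoint invariant you flag as the main obstacle is what the paper takes from the walk-tracking clause of preservation (stored endpoints and boundary tokens are those of the first and last transfer), so your fallback is never needed. One repair: $T_f$ ranges over trees reachable under any rule order, so do not rest the freezing of an \textsc{arb} node on the pipeline's phase order. Rest it on the fact that every constructive rule replaces its operands by a freshly labelled node, so after $R_{14}$ a node is changed in place only by relabelling.
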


Its two structural hypotheses hold of any decoded
trace.  The third is validation itself, stronger than
Definition~\ref{def:arb} needs
(Appendix~\ref{app:proofs}); the fixpoint path
discharges it by construction, since validation runs
before the cycles are read off.  The
one path that can report \textit{Arbitrage} without
a surviving cycle is the post-fixpoint promotion of
\S\ref{sec:eval}: $0.8\%$ of detections, the only
component outside the mechanized core.

\begin{proof}[Sketch]
Three steps, of which only the second has content.
(i)~Reading the verdict backwards is mere inversion:
\textit{Arbitrage} requires a chain of $T_f$ carrying
the \textsc{arb} label that $R_{16}$ did not revoke.
(ii)~By Theorem~\ref{thm:preserve} the reduction can
neither manufacture nor duplicate a transfer, so that
chain's transfers are transfers of $T_0$, counted with
multiplicity.  This is the bridge, and what makes the
conclusion a claim about the transaction rather than a
restatement of how we labelled it.
(iii)~Closure, token match and a positive balance at
the origin are read off the chain and transfer
verbatim to $G$, and its maximal decomposition into
connected runs is the bundle
Definition~\ref{def:arb} asks for.
\end{proof}

Two consequences follow.  Every balance the system
reports is a balance of the input trace, so a profit
figure cannot be an artefact of the reduction.  And
Definition~\ref{def:arb} is not satisfied by anything
merely circular: a two-leg swap leaving in one asset
and returning in another meets every other clause and
is rejected on the round-trip condition alone, which
we check as a mechanized counterexample.

Convergence is where the execution model earns its
keep.  Left free, the rules can be applied in many
orders; the kernel fixes one, scanning the EVM's trace
order left to right, and under it the answer is unique.

\begin{theorem}[Uniqueness]\label{thm:confluence}
Under the deterministic kernel over the execution's
trace order, every AST has a unique normal form.  Under \emph{every} order,
any two reachable trees agree on the balance
$\delta[v, \tau]$ at every address $v$ and token
$\tau$.  On well-formed trees, two one-step
divergences are joinable up to reassociation of
parallel merges, the same branches grouped
differently.
\end{theorem}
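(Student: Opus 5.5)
The theorem makes three claims, and I would prove them separately, in increasing order of difficulty.

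\emph{Unique normal form under the kernel.} The deterministic kernel is a function from trees to trees. It scans the children of each $\mathsf{Tree}$ node in the EVM trace order fixed by Property~\ref{prop:dse}. At the leftmost position where some rule applies, it fires the lowest-numbered applicable rule. Given a tree, this step has exactly one outcome, so the kernel's reduction sequence from $T_0$ is a single path. Theorem~\ref{thm:term} says every rule strictly decreases the rank $(c,u)$, so this path is finite. Its last tree is a normal form, and since the path is unique, that normal form is unique. The same argument covers the annotation loop of Algorithm~\ref{alg:fixpoint}, because it is a deterministic recursion that halts when $T''=T$.

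\emph{Balance agreement under every order.} Here I would not attempt any confluence argument; Theorem~\ref{thm:preserve} does the work. Any two trees $T_1, T_2$ reachable from $T_0$ carry the same multiset of transfers as $T_0$. By Definition~\ref{def:balance}, the balance $\delta[v,\tau]$ is a sum over that multiset, with one term per transfer depending only on $s$, $d$, $a$ and $\tau$. It does not depend on bracketing, labels, or which node holds a transfer, so $\delta_{T_1}[v,\tau]=\delta_{T_0}[v,\tau]=\delta_{T_2}[v,\tau]$. The only thing to check is that every rule computes a merged or chained node's delta map as the sum of its operands' maps, which the text asserts for merging and chaining. Annotation and $R_{16}$ change labels only, not transfers.

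\emph{Local joinability up to reassociation.} This is the main obstacle. I would define $\approx$ as the congruence generated by associativity and commutativity of $C_{\textit{merge}}$ nodes, and define well-formedness as the invariants that hold of every tree reachable from a trimmed $T_0$: chains are non-empty and connected where their label requires it, and $s$, $d$, $\tau_{\mathrm{in}}$, $\tau_{\mathrm{out}}$ are consistent with the chain's contents. I would then carry out a critical-pair analysis over the sixteen rules. Two redexes at disjoint positions, or in different $\mathsf{Tree}$ nodes, commute trivially. Overlapping redexes share an operand, and they fall into three cases.

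\begin{enumerate}
\item Chain--chain overlaps (R1/R6/R10/R11/R12 sharing a middle element) resolve by associativity of concatenation. The implementation's binary-tree chains are compared by in-order traversal, so both bracketings give the same sequence.
\item Merge--merge overlaps (R7/R8/R9/R13 on three parallel chains) resolve by regrouping the merges, which is exactly the reassociation the statement allows.
\item Merge--chain overlaps need the guards. After the merge, the merged chain keeps $s$, $d$ and $\tau_{\mathrm{out}}$, so a chaining rule that applied to one branch still applies to the aggregate. Joining then needs a distributivity step that pushes the chained element through the merge.
\end{enumerate}

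The third case is the one I expect to be hardest. There are side conditions that may not be preserved across the two paths: $\sigma(C)$, $\tau_{\mathrm{mid}}$, and $\mathit{BalCont}$, which is stated over deltas and so does survive by the additivity argument above. If joinability fails for some such pair, I would restrict well-formedness so that the offending configuration is excluded. I would also discharge the case enumeration mechanically in Rocq, since the number of overlapping pairs among the sixteen rules is the bulk of the work.
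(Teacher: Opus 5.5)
Your first two clauses follow the paper's argument. A total deterministic step plus well-foundedness gives a single finite path, hence one normal form. Balance agreement is pure conservation, from Theorem~\ref{thm:preserve} and $\delta$ being a fold over the transfer multiset, with no confluence needed.

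The gap is in your third clause, at case~3, and it starts with your definition of well-formedness. The overlap between a chaining rule (R6, R10, R12) and a parallel merge (R7, R13) on a shared chain is not joinable up to regrouping of merges. The paper names it as one of the two critical pairs that matter (``firing one rule destroys the other's redex'') and does not join it; it \emph{excludes} it. Its well-formedness is not just a reachability invariant. Besides linearity (pairwise-distinct trace positions), it imposes a frame condition at the divergence site: no chain that a chaining rule would consume may have a parallel-merge twin or a pending annotation in its frame.

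With your definition, the invariants true of every tree reachable from $T_0$, clause~3 is false, because the overlap is reachable: it is $\mathit{tx}_R$ after the leaf stage. Write $C_1, C_2, C_3$ for the \textsc{chad}, \textsc{duck} and \textsc{moon} chains from $\mathit{SC}$ to $\mathit{BN/W}$, and $\ell$ for the return leaf $\mathit{BN/W}\to\mathit{SC}$. R13 can merge $C_1$ with $C_2$, or R6 can close $C_1$ with $\ell$ alone. No rule splits a merge, and regrouping keeps an operand an operand, so on the first side $C_1$ stays a merge operand forever. On the second side, $C_1\cdot\ell$ is fused by chaining first, so $C_1$ alone is never a merge operand there. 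The two sides therefore never meet up to reassociation. Detection sees the difference too. The first side can reach the closed cycle $((C_1\parallel C_2)\parallel C_3)\cdot\ell$. On the second, no chain holding all ten transfers can close at $\mathit{SC}$: $\ell$, the only transfer into $\mathit{SC}$, is already spent, and anything appended after $C_1\cdot\ell$ ends at $\mathit{BN/W}$ (e.g.\ $(C_1\cdot\ell)\cdot(C_2\parallel C_3)$ by R10).

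Your distributivity step cannot repair this. Pushing $\ell$ through the merge into both branches duplicates a transfer, which contradicts Theorem~\ref{thm:preserve}. Any law identifying $(C_1\parallel C_2)\cdot\ell$ with $C_1\cdot\ell$ placed beside $C_2$ is also not a regrouping of the same parallel branches. Adopting it would prove joinability modulo a coarser relation than the one stated.

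Your fallback of restricting well-formedness ``if joinability fails'' is therefore not a contingency but the substance of the clause. The restriction must be a local condition at the divergence site, as in the paper. It must also cover a pending annotation in the frame of a chain a chaining rule would consume, an overlap family your three cases do not list. Once those are excluded, the argument reduces to your cases~1 and~2, as in the paper. Using an AC congruence for case~2, instead of the paper's flatten-and-sort canonical map, is fine for this clause. The paper's map, whose sort by trace position is strict thanks to linearity, is what Theorem~\ref{thm:decidable} reuses to decide the coarser equivalence.
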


The three clauses answer three different questions, and
only the first is about syntax.  Well-formedness is
linearity, pairwise-distinct trace positions, which
decoded trees have by construction and every rule
preserves, plus a frame condition at the divergence
site (Appendix~\ref{app:proofs}).  Whether local joinability extends to a global
statement is work in progress; every result in this
paper uses the deterministic order and is unaffected
either way.

\begin{theorem}[Decidability]\label{thm:decidable}
Under the deterministic kernel, two ASTs are joinable
if and only if their kernel normal forms are equal,
and equality of those normal forms is decidable.  The
coarser equivalence that identifies trees differing
only by reassociation of parallel merges is decidable
as well.
\end{theorem}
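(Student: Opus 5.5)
The plan is to reduce everything to Theorem~\ref{thm:confluence} (Uniqueness) and Theorem~\ref{thm:term} (Termination), and then show that the normal forms live in a type with decidable equality. Write $\mathrm{nf}(T)$ for the kernel normal form of $T$. Since the kernel is a deterministic function scanning the trace order, it defines a single-valued reduction. Theorem~\ref{thm:term} guarantees that this reduction reaches a normal form from every AST, and Theorem~\ref{thm:confluence} guarantees that the normal form is unique. So $\mathrm{nf}$ is a total computable function, obtained by iterating the kernel step until it no longer changes the tree. Each step is decidable because every rule guard in Table~\ref{tab:labels} is a boolean combination of decidable tests: address equality, token equality, $=_\tau$ (a boolean relation by assumption), membership in finite sets such as $M(C)$ and $\mathit{Routers}$, and comparisons of finite integer sums for $\mathit{BalCont}$ and $R_{16}$.

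For the "iff", I read joinability of $T_1,T_2$ under the kernel as the existence of some $T$ with $T_1 \to^* T$ and $T_2 \to^* T$ along kernel steps. For ($\Leftarrow$), if $\mathrm{nf}(T_1)=\mathrm{nf}(T_2)$, that common tree is itself the witness. For ($\Rightarrow$), suppose $T_1\to^* T$ and $T_2\to^* T$. Kernel reduction is deterministic, so the kernel path from $T_1$ is a prefix-extension that passes through $T$ and continues to $\mathrm{nf}(T)$. Hence $\mathrm{nf}(T_1)=\mathrm{nf}(T)$, and symmetrically $\mathrm{nf}(T_2)=\mathrm{nf}(T)$. Equality of normal forms is then syntactic equality on the inductive CFT type with labels, addresses, tokens, amounts (integers) and $\sigma$ fields, all of which have decidable equality. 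In Rocq this is a derived \texttt{decide equality} instance, so the first clause follows by composing $\mathrm{nf}$ with the equality test.

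For the coarser equivalence, I would define a canonicalisation $\kappa$ on normal forms. Inside every \textsc{merging} node, $\kappa$ flattens nested merges into a list of branches and sorts that list by a fixed total order. The natural order is the minimal trace position of each branch, which is well defined because well-formedness gives pairwise-distinct trace positions. The proof then has two parts. First, if $T \approx T'$ up to reassociation of parallel merges, then $\kappa(T)=\kappa(T')$; this goes by induction on the generating reassociation step, showing that flattening erases the bracketing and sorting erases the order. Second, $\kappa(T)\approx T$, by induction on $T$. Together these show that $\approx$ coincides with $\kappa(T)=\kappa(T')$, which is decidable by the same equality test.

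The main obstacle is making $\kappa$ respect what "reassociation" means in Theorem~\ref{thm:confluence}. The worry is that regrouping a merge might also change derived fields such as $\tau_{\mathrm{mid}}$, the labels, or the choice of $\sigma(C)$ taken from the eldest frame. If it does, flattening alone will not identify the trees. I would first try to fix $\approx$ as the congruence generated by associativity and commutativity of the merge constructor over the labelled binary-tree representation, treating label and metadata as functions of the flattened multiset. If some field depends on the bracketing, I would include it explicitly in the sort key, or restrict $\approx$ accordingly.
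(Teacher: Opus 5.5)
Your proposal is correct and follows the paper's own argument. Kernel determinism together with termination and uniqueness gives a computable normal-form function, and joinability reduces to equality of normal forms because any path through a common reduct continues to the same normal form. The coarser equivalence is decided, as in the paper, by comparing images under the canonical map that flattens nested merges and sorts operands by trace position. Your two-direction argument spells out what the paper compresses into ``computable and idempotent'': reassociated trees have equal images, and $\kappa(T)$ is itself equivalent to $T$.
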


Preservation and termination follow the two counters
above by induction on rule applications.  Uniqueness
follows because the step function is total, so at
most one rewrite fires per tree.  Decidability
follows from uniqueness, a terminating computation of
normal forms, and structural comparison.  For a terminating confluent
rewrite system the word problem is classically
decidable~\cite{huet1980,bookotto1993,baadernipkow1998,terese2003};
here it follows instead from termination, the
deterministic kernel, and structural comparison of
normal forms.

\paragraph{What is certified, and what is trusted}
The five theorems are mechanized in Rocq: $13\,501$
lines, 432 lemmas, 13 corollaries and one worked
counterexample, with no admitted obligation and no
added axiom.  The development declares eleven
parameters, among them the address and token types,
token equivalence, the burn and mint predicates, and
the router hint of R5.  They are declared and never
defined, and no proof assumes anything about them
beyond their types, so the theorems hold for every
instantiation a deployment chooses.  Two
well-formedness conditions are stated but deliberately
never assumed, the reflexivity and symmetry of $=_\tau$
among them; they are obligations on the deployment, not
hypotheses of the results.  Appendix~\ref{app:proofs}
maps each theorem to its mechanized statement.

\paragraph{Specification and implementation}
The system in production is OCaml, and the mechanized
rules mirror it one for one: each rule of
Table~\ref{tab:labels} is a constructor, and the
mechanized kernel implements the same priority policy
as the production module.  That \emph{kernel} is
extractable, and the extracted module compiles and
runs with placeholder realizers for the declared
parameters, so the specification is executable rather
than merely readable.  What we have not done is
differential testing of the two at trace scale; we
commit to it, and until then their correspondence is a
matter of construction and review, not measurement.
Property~\ref{prop:dse} is not assumed anywhere: it is
encoded in the data structure, since the tree of
Definition~\ref{def:cft} has ordered children, which is
exactly what the property guarantees.

\subsection{Implications of Decidability}
\label{sec:algo:implications}

Theorem~\ref{thm:decidable} makes the normal form a
substrate rather than an answer: any decidable
predicate over normal forms is a query, and detection
is only the first of them.  Exact structural
equivalence is the second, deciding whether two
transactions reduce to the same form; a coarser
\emph{family} equivalence is the third, composing
that test with a skeleton map $\mathsf{sk}$ that
erases path lengths, arities and token identities,
so a three-arm and a five-arm star fall together.
Each composes normalization with a decidable test
on trees, so each is decidable and linear in the
size of the normal form:
under $\mathsf{sk}$, $\mathit{tx}_R$ becomes
$\mathsf{Star}(\star,\, \mathsf{Swap}(\star,\star),\,
\star)$, as does any multi-arm star with two-hop arms,
whatever its arity or tokens.
Appendix~\ref{sec:equivalence} develops the query
language and a context-free grammar schema whose
languages are the strategy families, and
Appendix~\ref{sec:grammar:flashloan} extends the
grammar to flash loans.

\section{Evaluation}\label{sec:eval}

The theorems say what a verdict certifies; this section
measures what the system finds.  We evaluate the
production OCaml module, deployed as the analysis layer
of a blockchain forensic platform, against EigenPhi on
220\,000 Ethereum blocks and against ArbiNet on a
shared 1\,000-block subset.

\subsection{Setup and Baselines}\label{sec:eval:dataset}

We run the system on an arbitrary contiguous range
of 220\,000 Ethereum mainnet blocks
(23\,699\,751 through 23\,919\,750,
31~October--1~December 2025), taken within the
period for which EigenPhi labels were available; a
separate arbitrary 1\,000-block range
(24\,100\,000--24\,100\,999) is used for the
three-way comparison.  Neither range was
selected with any knowledge of what it contained,
and every transaction in both is analysed.
For each block, we obtain the full execution trace
of every transaction via
\texttt{debug\_\allowbreak traceTransaction} on an
archive node.  The decode layer extracts transfers
using the standard ERC token ABIs and WETH,
a fixed set shared across all token contracts.  The
detection algorithm then receives only typed
transfer tuples and the call hierarchy; it has no
knowledge of Uniswap, Curve, Aave, or any other
protocol.  Under these conditions the system flags
790\,593 transactions, across the four tiers defined
below, as exhibiting arbitrage cycle structure.  We release the per-transaction verdicts
(hash, block, verdict, reasons, cycle count, and
per-transaction latency) as a public dataset,
enabling independent reproduction of the
evaluation without archive-node access.

\paragraph{Baselines}
No public, independently annotated ground truth
exists for per-transaction arbitrage detection;
prior work relies on self-collected
labels~\cite{defirangertdsc,chi2024} or on
third-party MEV data~\cite{qin2022quantifying}.
We therefore compare against the only two systems
in this domain whose labels or code we could
reuse: \textit{EigenPhi}~\cite{eigenphi}, a
production MEV platform that flags 649\,790
transactions in the same range, and
\textit{ArbiNet}~\cite{arbinet2024}, a GNN
classifier that, like our analysis layer, needs no
protocol-specific ABI catalog.
The EigenPhi comparison is deliberately asymmetric,
since EigenPhi leverages protocol knowledge our
system lacks, so we frame all comparisons as
inter-system agreement and analyse the
disagreements in both directions.

\paragraph{Two vocabularies}
Two orthogonal partitions organize what follows.
\textit{Tiers} grade our verdicts by the reasons
of Algorithm~\ref{alg:classify}:
\textit{confirmed} (clean cycles, positive net, no
leftovers), \textit{probable} (leftovers remain),
\textit{attempted} (cycles whose gas exceeds the
profit), and \textit{uncertain} (mixed-sign
balances).  \textit{Categories} classify
agreement with EigenPhi: both flag
(Category~1), only we flag, split by tier into
confirmed (Category~2) and warning (Category~3);
only EigenPhi flags (Category~4).  Tiers are properties of
one system's output; categories compare two.

Two scope choices inherited from deployment affect
the numbers below.  Profit is attributed only at
the transaction recipient, for the cost-attribution
reason of \S\ref{sec:algo:leftover}, so cycles at
inner contract addresses are detected but not
classified; the disagreement analyses measure what
this leaves behind.  And the
post-fixpoint promotion path scoped by
Theorem~\ref{thm:sound} accounts for 0.8\% of
detections (6\,147).  No sampled transaction in
\S\ref{sec:eval:manual} came from it, so the
statistical validation does not cover it.  We
uniformly sampled 10 promoted detections and
inspected their raw traces; all 10 were manually
adjudicated as arbitrages.  Otherwise the case rests on the
firing conditions,
$\tau_{\mathrm{in}} =_\tau \tau_{\mathrm{out}}$ with
positive delta and no leftovers.

\subsection{Agreement with EigenPhi}\label{sec:eval:accuracy}

The diagnostic reasons partition the 790\,593
detections into the four tiers, each with a
distinct EigenPhi agreement rate.
The \textit{confirmed} tier (469\,801 detections,
59.4\%) is independently flagged by EigenPhi at
87.2\%; \textit{probable} (5.0\%) at 70.8\%;
\textit{attempted} (245\,497, 31.1\%) at 39.5\%,
despite the negative net; \textit{uncertain}
(4.5\%) at 21.0\%.  This monotonic gradient was not
engineered: it emerged from
Algorithm~\ref{alg:classify} and is an independent
sanity check, in that the more confident our
reasons, the more often the baseline agrees.
Overall agreement is 83.5\% (542\,279 of
649\,790; EigenPhi covers profitable and
unprofitable cycles), splitting 75.5/5.2/17.9/1.4
across our tiers.  The remaining 107\,511
EigenPhi-only transactions are analysed in
\S\ref{sec:eval:cat4}.

In the other direction, 248\,314 transactions are
flagged by our system and not by EigenPhi:
60\,199 confirmed (12.8\% of all detections),
148\,422 attempted, 11\,612 probable, and 28\,081
uncertain.  This volume, invisible to a profit-based
detector, is consistent with Li et
al.~\cite{li2023actlifter} and Chi et al.~\cite{chi2024}.

\subsection{Manual Validation}\label{sec:eval:manual}

We complement the large-scale comparison with
transaction-level inspection by the authors: we
uniformly sample 100~transactions from each of
Categories 1--3 (stratified by tier), resolve
addresses via Blockscout, and verify each cycle
against the raw trace, the tree reductions, and
the economic intent.  Every verdict is
\emph{structurally verifiable}: the reduced AST,
transfer chains, and delta map accompanying each
detection constitute an independently checkable
certificate, so an auditor can confirm a verdict
by inspecting the reduced tree alone, without
re-running the reduction
(Appendix~\ref{sec:casestudy} demonstrates this
on three forensic examples).

\paragraph{Category~1: both confirmed (100~txs)}
All sampled transactions are genuine arbitrages
through identifiable DEX pools (7 AMM protocols
represented, led by Uniswap~V3/V2/V4 and Curve).
91\% are single-cycle arbitrages, 9\% multi-cycle
batches; cycles span 2 to 31 transfer leaves,
median 5.  2\% involve flash loans.
\textbf{Zero false positives.}

\paragraph{Category~2: ours-only confirmed (100~txs)}
All 100 sampled exclusive detections are genuine
arbitrages.  They are larger and more
flash-loan-heavy than Category~1: cycle hops
median 8 (vs 5), and 51\% involve flash loans
(vs.\ 2\%), detected without lending-specific
logic via leftover recovery's structural pairing.
Appendix~\ref{sec:casestudy:fp} walks through one
such case: a three-hop flash-loan arbitrage that
reduces to two closed WETH cycles
(17{,}912~WETH profit, EigenPhi-missed).
\textbf{Zero false positives.}

Across the 200 confirmed-tier samples of
Categories 1 and~2, manual review finds no false
positive; a sample of this size bounds the true
rate below 1.5\% with 95\%
confidence.\footnote{Clopper-Pearson one-sided
upper limit for 0 of 200: 1.49\%.}
All 200 confirmed-tier samples in Categories~1 and~2
came from the fixpoint, so this bound covers the
mechanized core and not the 0.8\% promotion path.

\paragraph{Category~3: ours-only warnings (100~txs)}
All sampled warning-tier transactions exhibit
structurally valid cycles through identifiable DEX
pools.  Manual review labels 78\% attempted
arbitrages, 13\% uncertain, and 9\%
probable.  21\% are multi-cycle batches (vs 9\%
in Category~1), 18\% involve flash
loans, and 38\% settle through CoW Protocol
(intent-based solver flows whose cyclic structure
does not reduce to single-actor arbitrage).  The
warning tier surfaces these without committing to
an arbitrage verdict.

\subsection{Where We Disagree, and Who Adjudicates}
\label{sec:eval:cat4}

Category~4 is the opposite question: what backs
the 107\,511 transactions EigenPhi flags and we do
not?  We sample 200 (a larger sample, for tighter
forensic statistics), rerun them with full
diagnostic output, and inspect the resulting
normal forms.  All 200
receive a non-arbitrage classification, in three
groups.  127 (63.5\%) exhibit no arbitrage cycle
in normal form: simple transfers, wrap/unwrap
operations, RWA-token mints, or yield harvests
(Appendix~\ref{sec:casestudy:tn} inspects one
directly).
A further 55 (27.5\%) contain structural cycles
($s(C) = d(C)$) whose entry and exit tokens
differ: cross-token routing paths, not arbitrages.
The remaining 18 (9.0\%) contain genuine arbitrage
cycles at an inner contract address (flash loan
executor, router delegate), not at the transaction
recipient: the classification declines to surface
them.  We adjudicate these 18 as true misses.

This adjudication has a circularity we do not wish
to hide: the instrument that classifies the
disagreement is our own normal form.  The
first group is the one a skeptic should press on,
and it admits an independent check that never
invokes the rewriting: enumerate cycles directly
in the decoded transfer graph with a textbook
algorithm~\cite{johnson1975}, test closure and
token equivalence per Definition~\ref{def:arb},
and compare the verdicts.  We commit to this
oracle in the artifact; on the cases already
inspected by hand (Appendix~\ref{sec:casestudy}),
the trace itself settles the question, since no
cycle exists to find.

\subsection{Three-Way Comparison}\label{sec:eval:arbinet}

ArbiNet was trained on blocks
15\,540\,000--15\,585\,000 (September~2022); we run
the pretrained model unmodified on the 1\,000-block
range.  The three-year gap stress-tests temporal
generalization, which a structural detector does not
need, though the model's age may disadvantage
ArbiNet.  Table~\ref{tab:threeway} reports the eight
cells.  Of 3\,193 transactions flagged by at least
one system, all three agree on 1\,158; our system
covers 81\% of ArbiNet's detections and 92\% of
EigenPhi's.  634 are exclusive to our system: 148
confirmed, 411 attempted, 34 probable, 41 uncertain.
For the 432 flagged by a baseline and not classified
by us, the gap analysis
(rightmost column) repeats the Category-4
methodology: ArbiNet-exclusive detections split into
39.1\% inner-address cycles, 45.7\% cross-token, and
15.2\% with no cyclic structure; EigenPhi-exclusive
into 24.7\%, 72.7\%, and 2.6\%.

\begin{table}[t]
\caption{Three-way comparison on 1\,000 blocks.
  Eig. = EigenPhi; inner = arbitrage cycle at an
  inner address; x-tok = cross-token; NC = no cycle
  in normal form.}
\label{tab:threeway}
\centering
\footnotesize
\setlength{\tabcolsep}{3pt}
\begin{tabular}{@{}lccl@{}}
\toprule
& Eig.+ & Eig.$-$ & Gap analysis \\
\midrule
Ours+ ArbiNet+ & 1{,}158 & 342 & --- \\
Ours+ ArbiNet$-$ & 627 & 634 & --- \\
Ours$-$ ArbiNet+ & 74 & 279
  & 39.1\% inner, 45.7\% x-tok, 15.2\% NC \\
Ours$-$ ArbiNet$-$ & 79 & ---
  & 24.7\% inner, 72.7\% x-tok, 2.6\% NC \\
\bottomrule
\end{tabular}
\end{table}

A per-block case study
(Appendix~\ref{sec:arbinet:case}) reveals two
qualitative differences.
ArbiNet produces false positives on batch
settlement protocols (CoW Protocol): the GNN sees
cyclic graph topology but does not verify that the
tokens match at cycle boundaries
(Definition~\ref{def:arb}).  Conversely, ArbiNet
misses arbitrages that both our system and
EigenPhi confirm, in transactions whose routing
contracts postdate its 2022 training data.  A
structural detector requires no retraining because
it operates on transfer events, not on learned
representations of specific contracts.

\subsection{Topology, and a Second Query}
\label{sec:eval:queries}

The reduction finds 940\,760 individual cycles
across the 790\,593 flagged transactions.  After
structural merging, 89.1\% of transactions reduce
to a single cycle, 6.7\% to two, and 4.1\% to
three or more (max 82); parallel arbitrages
sharing an origin merge into one cycle via
R7--R9.  Within each cycle, even lengths dominate
(an AMM swap produces two transfers), and odd
lengths arise from single-token relays.  28.5\% of
warnings contain leftover cycles recovered by the
two-pass procedure of \S\ref{sec:algo:leftover},
and 51\% of exclusive confirmed detections involve
flash-loan round-trips discovered by its
structural pairing.
The equivalence queries also run on this data: a
second confirmed detection~$\mathit{tx}_S$, built
from different tokens through twice the paths,
reduces to the same skeleton as $\mathit{tx}_R$
(Appendix~\ref{sec:equiv:example}).

\begin{figure}[t]
\centering
\includegraphics[width=\columnwidth]%
  {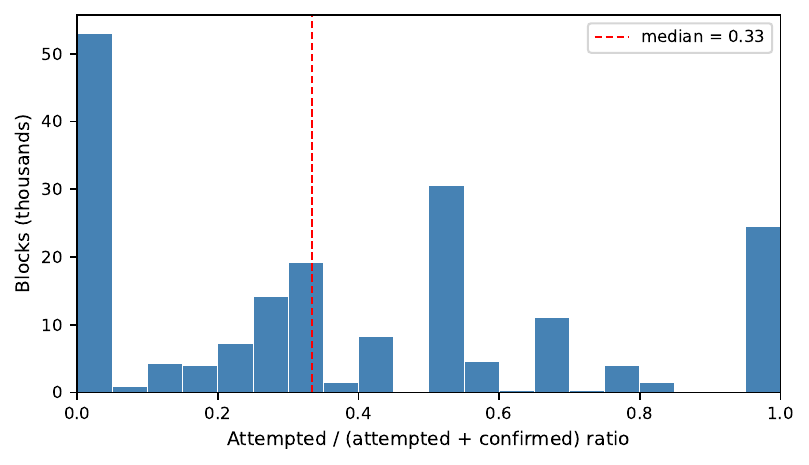}
\caption{Per-block attempted / total arbitrage
  ratio (median = 0.33).}
\label{fig:attempted}
\end{figure}

\paragraph{Attempted arbitrages}
The \textit{attempted} tier is a further query in
the sense of \S\ref{sec:algo:implications}: the same
normal form, asked a different question.  Closure
and token match are kept, profit positivity is
dropped, and what answers is the population of
cycles that closed structurally but lost money
after gas: 245\,497 transactions, roughly half the
confirmed count, consistent with a bot that built
the same multi-pool cycle as a profitable arbitrage
but lost the race, overpaid gas, or encountered
unfavourable slippage.  Profit-based detectors
discard them by construction.
At the per-block level, the attempted-to-total
ratio clusters at low-denominator fractions
(Figure~\ref{fig:attempted}): 27.9\% of blocks at
ratio~0 (only profitable cycles), 16.0\% at $1/2$,
and 12.9\% at 1, blocks whose every cycle lost.
The pattern fits small-$k$ competition (a handful
of cycles per block) rather than smooth many-bot
racing.  The ratio-0 mode reflects private
orderflow or narrow opportunity
windows~\cite{qin2022quantifying}; the ratio-1
mode is not predicted by the gas-auction model of
Daian~et~al.~\cite{daian2020flashboys}, which
assumes a winner, and is consistent with capture
via sandwich or off-chain
routing~\cite{chi2024,heimbach2024nonatomic,capponi2025mev}.
Profit-based detectors see only the winners;
structural detection sees both sides.

\subsection{Performance}\label{sec:eval:perf}

The system handles arbitrary token counts: observed
inputs range from 3 distinct tokens at the median
(31.8\% triangular) to 106 in the tail.
Theorem~\ref{thm:term} bounds \emph{passes} at
$3n - 2$ and says nothing about the cost of one.
As implemented, a pass looks for a partner for each
child by walking its remaining siblings, so a pass
is quadratic in the worst case and the guaranteed
total cubic; in practice every successful rewrite
shrinks the child list, and the latencies below
stay far from the bound.  An index keyed on
junction address and token would make the lookup
constant-time and the total quadratic; the
measurements have not asked for it.
The rewriting runs in 0.07\,ms median
and 0.47\,ms P95, under a third of per-transaction
cost; the rest is decoding.
Including decoding but excluding trace retrieval,
the full pipeline completes in 0.25\,ms (median),
2.24\,ms (P95), and 6.15\,ms (P99).
0.02\% of transactions containing an arbitrage
cycle exceed 100\,ms; these are high-transfer
transactions (median 181 transfers, max 2008),
where the per-pass sibling walk dominates.

\section{Related Work}\label{sec:relw}

The study of value extraction on Ethereum began with
Daian et al.~\cite{daian2020flashboys}, who coined
\textit{miner extractable value} (MEV) and documented
the priority-gas auctions among arbitrage bots.
Qin et al.~\cite{qin2022quantifying} provided an
early systematic measurement (\$540M over 32~months),
but acknowledged it as a lower bound: only
transactions matching known protocol signatures were
counted. The SoK of Zhou et al.~\cite{zhou2023sokdefi}
generalizes this gap beyond arbitrage: across 181
DeFi incidents (\$3.24B), cross-protocol composability
is the axis existing tools consistently miss.

A first wave of detectors builds aggregate token
graphs across blocks.
McLaughlin et al.~\cite{mclaughlin2023} apply
Johnson's algorithm to \texttt{Transfer} events
and uncover 3.8M arbitrages.
Zhang et al.~\cite{zhang2024} search the pool
graph itself with a Bellman-Ford variant, finding
opportunities rather than executions, and
Wang et al.~\cite{wang2022cyclic} characterize
cyclic arbitrages across DEXs.

These methods gain scale but lose the execution
context of individual transactions: a forensic
analyst cannot reconstruct \textit{how} a profit
was realized from an aggregate cycle count.
More fundamentally, cycles in the flat transfer
graph do not reliably indicate arbitrage: a yield
harvest that withdraws and redeposits the same
token produces a cyclic flow with matching tokens
and positive balance, yet involves no price
exploitation (Appendix~\ref{sec:casestudy:tn}).
Distinguishing such patterns requires the call
hierarchy that only per-trace analysis preserves,
refining these cycles
(Theorem~\ref{thm:decidable}).

DeFiRanger~\cite{defirangertdsc} and
ActLifter~\cite{li2023actlifter} identify DeFi
actions via protocol-specific event catalogs.
DeFiRanger uses a fixed pattern catalog for price
manipulation.  ActLifter curates 88~events across
10 DeFi actions and clusters bundles to discover
new MEV strategies, complementing our
per-transaction scope.
Chi et al.~\cite{chi2024} reduce fragility with
exchange-rate checks but still need 44~hand-crafted
patterns.
None of these provide formal guarantees, and each
requires per-protocol catalog extension.
Our rewriting system normalizes structurally: one
rule covers all protocols sharing a pattern,
recovering multi-cycle split-and-merge
and lending-based
arbitrages without manual extension.
Richer trace representations~\cite{txspector2020,qin2025clue,etrace2025}
target vulnerability classification, not arbitrage.

Catalog-free machine-learning approaches pursue
generality through learned representations rather
than algebraic structure.
Park et al.~\cite{arbinet2024} propose ArbiNet,
a GNN-based classifier that, like our analysis
layer, requires no protocol-specific ABI catalog.
ArbiNet produces binary labels from training data.
Our algorithm reconstructs transfer chains with
profit calculations and diagnostic reasons, and
offers mechanized formal guarantees.
Section~\ref{sec:eval:arbinet} provides a
quantitative comparison on 1\,000 shared blocks.
Jin et al.~\cite{jin2022} and
Niedermayer et al.~\cite{niedermayer2024} classify
arbitrage \textit{addresses} and financial bots,
answering \textit{who} performs arbitrage but not
\textit{how} a given transaction realizes its
profit.

Formal-methods approaches split a priori from
a posteriori: Clockwork
Finance~\cite{babel2023clockwork} enumerates
profitable attacks from protocol specifications, while
the Imitation Game~\cite{qin2023imitation} replays
observed strategies.  Our normal forms
(Section~\ref{sec:algo:implications}) partition
realized transactions. A formal bridge to
Clockwork's symbolic state space is open future
work.
KEVM~\cite{kevm2018} formalizes EVM bytecode
semantics. We abstract above bytecode, on the
typed-transfer view the decoder produces.

MEV strategies compose across
transactions~\cite{linkingmev2025,torres2021frontrunner}
and across
chains~\cite{sokmev2026,torres2024rolling}.
Cao et al.~\cite{cao2026peb} show that
transfer-graph abstractions are structurally
incomplete for beneficiary attribution and
advocate execution-semantic analysis. Our approach
already operates at that level.

Materwala et al.~\cite{mevsurvey2024} note that existing detectors
trade off protocol coverage, per-transaction granularity,
and explainability. Ours makes arbitrage one query on a
normal-form substrate with mechanized structural soundness
and decidable equivalence.

\section{Limitations}\label{sec:limits}

\paragraph{Trusted base and threat model}
We assume an EVM-compatible chain whose traces
faithfully reflect consensus semantics: tracing,
client consistency, execution visibility and
consensus form the trusted base.
The theorems apply to the
transfers the decoder emits, and an adversary acts
on that boundary.  The algorithm ignores
protocol-specific events, so renaming or omitting
them, proxying, or obfuscating control flow gains
nothing.  \emph{Suppressing} the standard
value-movement events, bypassing them through
internal balances, or \emph{fabricating} events
that move no value: these all split by where the
hidden value sits.  A transfer missing from a cycle
breaks the chain and costs recall, while a hidden cost outside
the cycle leaves the structural verdict standing
and overstates the reported net.  Verdicts are
sound about decoded transfers, not about value;
cross-checking them against the transaction's state
diff would discharge event fidelity, a check we
commit to alongside the differential testing of
\S\ref{sec:algo:formal}.

\paragraph{Scope}
Completeness is not claimed: missing or
netted-away transfers silently cost recall, so the
recall figures of \S\ref{sec:eval} are measured,
not guaranteed.  The system reads one
transaction's trace and cannot see strategies
spanning multiple transactions, bridges, or
non-atomic flows, which Heimbach et
al.~\cite{heimbach2024nonatomic} place above 25\%
of DEX volume; lifting to block level is the
natural next step, as a block is a sequence of
normal forms and no per-transaction theorem
changes.  The evaluation covers a single 30-day
window; no public historical MEV label sets exist.

\section{Conclusion}\label{sec:conc}

The rewriting system is not a detector with formal
properties: it is an algebra whose simplest query
is detection.  Every transaction admits a normal
form under the deterministic kernel over its
execution-fixed trace order, and the structural
equivalence this induces is decidable
(Theorem~\ref{thm:decidable}), with arbitrage
detection as its witness.  Cycles emerge from the
fixpoint over the call hierarchy and the sender
field~$\sigma$ rather than being searched for.  The
five properties are mechanized in Rocq with no
admitted obligation, so the 99.2\% of detections
that come from the fixpoint are sound over the
decoded transfers, and the normal form makes every
verdict auditable.  On 220\,000 blocks it matches a
production platform's coverage while surfacing
60\,199 exclusive detections and 245\,497 attempted
arbitrages, a tier profit-based detectors discard.

The same binary runs on Arbitrum and BNB Smart
Chain with a configuration
change (Appendix~\ref{sec:arbitrum}).  New
annotation predicates should expose liquidations,
oracle manipulation, and, with block-level
composition, sandwich attacks.  Equivalence classes
of normal forms partition fund flows into a
taxonomy with no manual labeling.  The normal forms
also suggest an abstract domain for fund flows and a
route toward gas-bounded abstract interpretation of
smart contracts into that domain.

For now: if it walks like an arbitrage, the normal
form confirms it.  If it wraps like a sandwich,
burns like a liquidation, or \textit{\ldots}, the
normal form will carry the answer in its shape.

\section*{Ethics Considerations}

\paragraph{Data}
Everything we analyze is public: transactions,
execution traces and transfer events recorded on a
permissionless blockchain and readable by anyone with
an archive node.  The system only reads, and every
measurement replays history that had already been
confirmed, so the experiments cannot have influenced
the market they measure.  No IRB review was required.
That exemption is not the interesting question,
though.  Addresses belong to people and programs
those people wrote, and we publish judgements about
what they did, so the risks worth discussing are
about the judgements.

\paragraph{Dual-use risk}
The detection system produces, for each flagged
transaction, the complete transfer chains, profit
calculation, and pool identifiers.  In principle,
this output could help a searcher reverse-engineer
a competitor's arbitrage strategy.  We note three
mitigating factors.
First, the system operates on \emph{confirmed}
transactions: the opportunities it describes have
already been executed and are no longer available.
Second, the same information is already accessible
through commercial platforms (EigenPhi, Flashbots
Transparency Dashboard) and on-chain explorers
(Etherscan, Blockscout).  Our contribution is a
\emph{structural representation}, not new data
access.
Third, operational MEV extraction requires mempool
monitoring, gas-price optimization, and
sub-second execution infrastructure, none of which
the paper provides or discusses.

\paragraph{Privacy and de-anonymization}
The decode layer enriches addresses with OSINT
metadata (contract names, token labels, entity tags)
obtained from public sources (Sourcify, 4byte
directory, Blockscout).  The \emph{analysis layer
does not use} these labels: detection operates
exclusively on transfer tuples and the call
hierarchy (Section~\ref{sec:arch}).  The published
dataset contains only transaction hashes, block
numbers, structural verdicts, and timing
measurements.  No OSINT labels, wallet balances, or
entity identifiers are included.
Transactions appear by hash, as they do on chain; we
name no person or organization and do not try to link
an address to anyone.
Verdicts correlated with address activity could still
help de-anonymize a pseudonymous actor, given other
data.  That is true of any public-chain analysis and
not special to us, but it did shape what we release:
the dataset is per transaction and aggregates nothing
by address, so it describes trades rather than
traders.

\paragraph{What a verdict means, and what it does not}
The system answers one structural question: did value
leave an address and come back, in the same asset and
in greater quantity.  Intent, legality and harm are
different questions, and the algorithm has no access
to them.  The gap matters in both directions.  A
confirmed arbitrage is often routine market-making
that corrects a price difference between pools.  And
a cycle can close through an address that is not a
trading venue at all, in which case the shape is an
accident of how the value moved rather than a strategy
anyone executed.  We saw both while building the
evaluation.

Three properties of the design keep that gap from
becoming a false accusation.  Soundness is claimed for
the structural property alone, over the transaction's
own transfers (Theorem~\ref{thm:sound}).  The tiers
abstain: a reconstruction with leftovers or an
ambiguous balance returns \textit{Warning} instead of
a verdict, which is why only 59.4\% of detections are
called arbitrages at all.  And every verdict carries the
reduced tree, the chains and the balances that
produced it, so it can be checked rather than
believed.  A verdict is a place to start looking, not
a conclusion about a person, and reading it as the
latter, in an enforcement setting and without a human
in the loop, is the misuse we would warn against.

\paragraph{The legal cases we cite}
Section~\ref{sec:intro} opens on live proceedings
because they show what turns on the question, not
because we analyze them.  We describe them from public
records, analyze no transaction connected to any of
them, and take no view on any party.  The system has
not been validated for evidentiary use and we do not
offer it as such.

\bibliographystyle{IEEEtran}
\bibliography{references}

\appendices

\section{Generative AI Usage}\label{app:ai}
Anthropic Claude (Opus~4.8, Opus~5, Fable~5) was used as a
writing and engineering assistant: to edit and condense
author-written text, to check the paper for internal
consistency and against the artifact, and, in the Rocq
development, for tactic scripts and refactoring under
the authors' direction, where every obligation is
machine-checked.  It also drafted the artifact's
documentation files, and raised and drafted much of
the Ethics Considerations section, which the authors
then revised and adopted as their own.
All content, structure, claims,
theorem statements, proof strategies, the algorithm and
its implementation, the evaluation design, and every
experimental number are the authors' own work, and the
authors take responsibility for all material in the
paper.

\section{Open Science}\label{app:openscience}
We release the following artifacts, sufficient
to evaluate every core contribution of the paper.
Item~1 reproduces the formal proofs;
items~2--5 reproduce the empirical evaluation.

\begin{enumerate}
\item \textbf{Rocq formalization.}
  The file \texttt{Arbitrage.v}\footnote{\url{https://anonymous.4open.science/r/artifact-submission-2128/rocq/Arbitrage.v}}
  (5~theorems, 432~lemmas, 13~corollaries,
  13\,501~lines, no admitted obligation, no added
  axiom, eleven declared parameters) is
  self-contained and compiles with
  \texttt{rocq compile Arbitrage.v}.
  It verifies all five theorems.
  \texttt{rewrite\_step} has 17 constructors: the 15
  rules R1--R15 of Table~\ref{tab:labels}, plus
  \texttt{RS\_lift} and the congruence
  \texttt{RS\_under}.  R16 is modeled by
  \texttt{validated\_arbitrage}, which conjoins its
  gross-delta test with the deployment's cost model
  \texttt{net\_positive}, so the mechanized premise
  bundles R16 with the net-profit gate that
  \S\ref{sec:classify} applies afterwards; a cycle
  passing the first and failing the second is the
  \textit{attempted} tier
  (\texttt{attempted\_arbitrage\_characterization}).
  The step
  function is computable, making it amenable to
  Rocq's \texttt{Extraction} mechanism.
  A long-form proof companion\footnote{\url{https://anonymous.4open.science/r/artifact-submission-2128/proofs/proofs.pdf}}
  expands Appendix~\ref{app:proofs} and develops the
  structural-equivalence material of
  Appendix~\ref{sec:equivalence} at greater length.

\item \textbf{Evaluation dataset.}
  The evaluation rests on three data files.
  (i)~A per-transaction CSV recording the verdict,
  diagnostic reasons, cycle count, and latency for
  every flagged transaction across the 220K-block
  range.  (ii)~The EigenPhi arbitrage labels for
  the same range, enabling the cross-reference
  analysis.  (iii)~The ArbiNet predictions on the
  shared 1K-block range, enabling the three-way
  comparison.
  Together, these files allow the evaluation
  pipeline to reproduce every statistic, figure,
  and table in Section~\ref{sec:eval} without
  archive-node access.

\item \textbf{Evaluation pipeline.}
  25~Python scripts orchestrated by
  \texttt{run\_all.py} and documented in
  \texttt{METHODOLOGY.md}.  Given the evaluation
  dataset above, the pipeline produces all
  figures, tables, and summary statistics.
  The scripts also automate the manual-validation
  sampling, the Category~4 forensic analysis,
  and the three-way gap analysis.

\item \textbf{Detection tool.}
  A Docker image containing five compiled
  binaries (no source code):
  \texttt{inspect\_tx\_offline} and
  \texttt{batch\_analyse\_offline} reproduce
  detection verdicts from pre-exported traces
  (single-transaction and block-range CSV
  outputs respectively);
  \texttt{inspect\_tx} and
  \texttt{batch\_analyse} perform the same two
  roles against any EVM-compatible RPC endpoint
  (reviewers can verify any transaction in the
  paper or test on Arbitrum, BSC, or other EVM
  chains by changing the RPC URL and wrapped-token
  address);
  \texttt{export\_data} exports raw traces from
  an RPC node for subsequent offline analysis.

\item \textbf{Pre-exported traces.}
  The full 220K-block evaluation produces over
  500\,GB of raw execution traces, which exceeds
  practical artifact hosting limits.  We therefore
  provide raw traces for 1\,000~blocks from each
  evaluation range: the first 1\,000 blocks of the
  main evaluation (23\,699\,751--23\,700\,750) and
  the full 1\,000-block ArbiNet comparison range
  (24\,100\,000--24\,100\,999).
  These traces let the reviewer run the offline
  binaries, inspect individual transactions, and
  verify that the detection tool produces the same
  verdicts recorded in the evaluation CSV.
  Reproducing the full 220K-block dataset requires
  an Ethereum archive node with
  \texttt{debug\_\allowbreak traceTransaction};
  the online binaries and the export binary support
  this directly.
\end{enumerate}

\noindent
The detection algorithm is an OCaml module inside
a larger open-source blockchain analysis platform
and cannot be cleanly factored out of its dependencies
(decode layer, OSINT enrichment, database layer)
for a self-contained release.
In place of a source bundle, reviewers can validate
every claim through three independent routes: the
Rocq formalization (the algorithm's mechanically
verified specification), the compiled binaries
(which reproduce every verdict in the paper), and
the per-transaction output dataset (which
reproduces every figure and table in
Section~\ref{sec:eval}).

During review, all artifacts are hosted at
\url{https://anonymous.4open.science/r/artifact-submission-2128}.
Once anonymity is lifted, a link to the public
source repository will be added, pinned at the
commit matching the submitted version.

\section{AMM Pricing and Arbitrage Condition}
\label{app:amm}

The detector is invariant-agnostic.  Across the
220\,000-block range, every AMM family we
encountered,
constant-product (Uniswap~v2~\cite{uniswap}),
concentrated liquidity (Uniswap~v3~\cite{uniswapv3}
and~v4), StableSwap (Curve~\cite{curve}), weighted
pools (Balancer~\cite{balancer}), proactive market
making (DODO), and adaptive-fee designs (Algebra,
Trader Joe's Liquidity Book), exposes the same
external interface to the EVM trace: a pair of
standard ERC-20 \texttt{Transfer} events within a
call frame, one for each leg of the swap.
Rule~$R_1$ chains the in-leg with the out-leg;
rule~$R_5$ handles the Uniswap~v4
\texttt{PoolManager}, whose singleton design emits
\texttt{Transfer} events at the manager boundary rather
than per pool, so that a pass-through is chained rather
than mistaken for a two-cycle.
Multi-token pools (Curve $n$-pools, Balancer
8-token pools) are irrelevant at this layer: the
pool's internal $n$-ary structure collapses to a
binary in/out at the trace boundary.
The arbitrage condition, different effective
exchange rates between two pools holding the same
token pair, generalizes uniformly across
invariants: it is the existence of a
profit-preserving cycle in the realized transfer
graph, which our algorithm detects structurally
without ever consulting the underlying pricing
function.

\section{Proofs}\label{app:proofs}

The five theorems of \S\ref{sec:algo:formal} are
mechanized in Rocq~$\geq$\,9.0.  The development is
one self-contained file of $13\,501$ lines with no
external libraries, comprising the five
\texttt{Theorem}s, 432 lemmas, 13 corollaries and
one worked counterexample, and closing with no
\texttt{Admitted} obligation and no \texttt{Axiom};
\texttt{Print Assumptions} on each theorem lists
only the eleven declared \texttt{Parameter}s
described below.  The artifact is the authoritative
reference.\footnote{\url{https://anonymous.4open.science/r/artifact-submission-2128/rocq/Arbitrage.v}}
This appendix maps each paper statement to its
mechanized counterpart, gives compact proofs whose
structure follows the mechanization, and describes
the development itself; the artifact also carries a
long-form companion that expands these
proofs.\footnote{\url{https://anonymous.4open.science/r/artifact-submission-2128/proofs/proofs.pdf}}

\subsection{What Is Guaranteed, and What Is Not}
\label{app:guarantees}

The paper's claims have different standings, and
Table~\ref{tab:guarantees} separates them.  The
distinction that matters most is the last group: the
theorems are about the transfers the decoder produces,
so decoder fidelity, the correspondence between the
production implementation and the mechanized kernel,
and any claim about economic intent are outside what
is proved.

\begin{table}[h]
\centering
\footnotesize
\caption{Standing of each claim.}
\label{tab:guarantees}
\begin{tabular}{@{}>{\raggedright\arraybackslash}p{5.3cm}>{\raggedright\arraybackslash}p{2.7cm}@{}}
\toprule
Claim & Standing \\
\midrule
No rule invents, drops or duplicates a transfer
  & mechanized \\
Reduction terminates under any rule order
  & mechanized \\
$3n-2$ pass bound on fully lifted trees
  & derived from mechanized budget and step decrease \\
Balances agree under every rule order
  & mechanized \\
Unique normal form under the deterministic kernel
  & mechanized \\
Joinable iff equal normal forms; equivalence up to
reassociation decidable
  & mechanized \\
\midrule
Verdict yields a Definition~\ref{def:arb} witness over
$T_0$'s transfers
  & mechanized; hypotheses hold at $T_0$ \\
Local joinability up to reassociation
  & mechanized on well-formed trees \\
\midrule
Decoded transfers faithfully reflect value movement
  & assumed (\S\ref{sec:limits}) \\
Production module agrees with the extracted kernel
  & argued, not measured \\
Promotion path (0.8\% of detections)
  & outside the mechanized core \\
Economic intent behind a detected cycle
  & not addressed \\
Recall
  & measured, not guaranteed \\
\bottomrule
\end{tabular}
\end{table}

\subsection{Theorem Map}\label{app:thmmap}

Table~\ref{tab:thmmap} pairs each numbered result
with the artifact statement that carries it, together
with what the mechanization adds beyond the paper's
phrasing.  The remaining rows are the supporting
results the text names.

\begin{table}[h]
\caption{Paper results and their mechanized
  counterparts.}
\label{tab:thmmap}
\centering
\footnotesize
\setlength{\tabcolsep}{3pt}
\begin{tabular}{@{}l>{\raggedright\arraybackslash}p{4.6cm}@{}}
\toprule
Paper & Artifact statement, and what it adds \\
\midrule
Thm~\ref{thm:preserve}
  & \texttt{theorem\_1\_preservation}: walk
    tracking, transfer inclusion, and multiset
    conservation \\
Thm~\ref{thm:term}
  & \texttt{theorem\_2\_termination}: kernel
    normal form; the $3n-2$ bound
    (\texttt{termination\_bound}); well-foundedness
    under every rule order
    (\texttt{rewrite\_step\_wf}) \\
Thm~\ref{thm:sound}
  & \texttt{theorem\_3\_soundness}: verdict
    \textit{Arbitrage} yields a
    Definition~\ref{def:arb} witness over $T_0$'s
    own transfers (\texttt{def5\_witness}) \\
Thm~\ref{thm:confluence}
  & \texttt{theorem\_4\_confluence}: unique kernel
    normal form.  Clause 2 is
    \texttt{observable\_confluence\_delta}; clause 3
    is \texttt{local\_confluent\_mod\_kappa\_holds} \\
Thm~\ref{thm:decidable}
  & \texttt{theorem\_5\_decidable\_equivalence}:
    joinable iff equal normal forms.  The coarser
    equivalence is \texttt{struct\_equiv\_dec} \\
\midrule
Def.~\ref{def:arb}
  & \texttt{def5\_witness}: the definition in
    transfer vocabulary only; its bundle size is
    pinned by \texttt{walk\_decomposition\_count} \\
no fabrication
  & \texttt{no\_fabricated\_transfers}: a chain's
    transfers are a sub-multiset of $T_0$'s \\
reported profits
  & \texttt{reported\_deltas\_are\_input\_deltas}:
    every reported balance is a balance of the
    input trace \\
plain swap
  & \texttt{plain\_swap\_not\_def5}: a two-leg swap
    fails Definition~\ref{def:arb} on the
    round-trip clause alone \\
\bottomrule
\end{tabular}
\end{table}

\subsection{Preservation}

A chain $C$ \emph{tracks} walks in $G$ when its
transfer sequence decomposes into connected runs,
each a walk over edges of $E$, and the stored
endpoints of $C$ are the source of its first
transfer and the destination of its last.  The
invariant holds at $T_0$, where every leaf is one
edge, and is preserved by every rule.  Chaining
(R1--R6, R10--R12) concatenates two operands whose
junction addresses match, so two walks join into
one at a shared vertex.  Lifting repositions
subtrees and touches no chain.  Merging (R7--R9,
R13) records the union of its operands' walks; the
merged chain's transfer list is the concatenation
of the operands' lists, so its decomposition into
connected runs is the disjoint union of theirs.
Annotation (R14, R15) relabels and moves nothing.
For conservation, observe that every rule's
conclusion carries exactly the transfers of its
operands, concatenated: no rule introduces,
drops, or duplicates a transfer, so the leaf
multiset of any reachable tree is a permutation of
$T_0$'s.  The mechanization proves the three
clauses by one induction over
\texttt{rewrite\_step} and its reflexive-transitive
closure. \hfill$\square$

\subsection{Termination}

Let $u(T)$ count the chains of $T$ not yet
labeled and $c(T)$ the total number of children
across $\mathsf{Tree}$ nodes.  Labels
are monotonic: annotation only adds them, a merge
labels its result at once, and the single
downgrade (R16) runs after the fixpoint.  The two
arguments order the pair differently.  The
annotation-and-connection loop uses
$(u(T), c(T))$.  The unrestricted relation uses
$(c(T), u(T))$: chaining and merging replace two
children by one and lifting removes an internal
node, each decreasing~$c$, while annotation holds
$c$ fixed and decreases~$u$.  Every rule decreases
that measure, which gives well-foundedness under
every rule order
(\texttt{rewrite\_step\_decreases}).  For the
bound, chains arise only from
leaves, so $u(T_0) \leq n$; on a fully lifted,
non-empty tree every internal node has at least
two children, so by handshaking
$c(T_0) \leq 2n - 2$.  Each non-trivial pass
consumes at least one unit of the budget $u + c$,
hence at most $3n - 2$ passes.  What is mechanized
is the budget $u + c \leq 3n - 2$
(\texttt{termination\_bound}, composing
\texttt{unlabeled\_le\_transfers} with
\texttt{cc\_plus2\_le\_twice\_ct}, the handshake on
fully lifted trees) and the loop's progress
(\texttt{fixpoint\_\allowbreak step\_\allowbreak
det\_\allowbreak decreases}); together
with the component monotonicity these yield the pass
bound.  Theorem~\ref{thm:term} states the budget, not
the derivation length directly. \hfill$\square$

\subsection{Soundness}

The proof has the three steps announced in
\S\ref{sec:algo:formal}: an inversion, a bridge,
and a transfer.

\emph{Inversion.}  The cascade of
Algorithm~\ref{alg:classify} returns
\textit{Arbitrage} only when no verdict gate
fired, so the reduced tree holds at least one
chain $C$ labeled \textsc{arb}, no leftovers
remain, and by hypothesis $C$ survived $R_{16}$:
$\delta_C[s(C), \tau_{\mathrm{in}}(C)] > 0$.
The theorem's two structural hypotheses on $T_0$ need
no separate argument: the decoder emits leaves and
call nodes only (Definition~\ref{def:cft}), so $T_0$
carries no chain, and the requirement that its chains
decompose into walks is then vacuous.  Both therefore
hold of every decoded trace, and the theorem's reach
is limited only by the $R_{16}$ premise discussed in
\S\ref{sec:algo:formal}.

\emph{Bridge.}  By preservation, the transfers of
$C$ are transfers of $T_0$, counted with
multiplicity
(\texttt{no\_fabricated\_transfers}).  This is
what makes the remaining steps statements about
the transaction: every property of $C$'s leaves
read off below is a property of edges of $G$.

\emph{Transfer.}  Four facts about $C$ descend to
those edges.  First, closure: R14 fires only on
$s(C) = d(C)$, and the invariant that a chain's
stored endpoints are its first source and last
destination makes the underlying edge sequence
close at $s(C)$.  Second, the token condition: R14
requires
$\tau_{\mathrm{in}}(C) =_\tau
\tau_{\mathrm{out}}(C)$ on the stored boundary
tokens, and the same invariant, for tokens, makes
these the tokens of the first and last edge.
Third, the bundle: the maximal decomposition of
$C$'s edge sequence into connected runs is the
witness Definition~\ref{def:arb} asks for, and its
size is exactly one more than the number of
parallel-merge junctions in $C$
(\texttt{walk\_decomposition\_count}), so
merge-free cycles yield the classical single
closed walk.  Fourth, profit: $\delta_C$ is the
fold of per-transfer balances over $C$'s leaves,
so the surviving $R_{16}$ check is the strict
positivity of Definition~\ref{def:arb}'s third
clause at $v = s(C)$, $\tau =
\tau_{\mathrm{in}}(C)$. \hfill$\square$

Two supporting facts complete the picture.  When
no leftovers remain, the cycles partition the
non-cost transfers, so the aggregate gross balance
is the sum of the per-cycle deltas validated by
$R_{16}$, and the net verdict conditions of
Algorithm~\ref{alg:classify} then bound the
aggregate.  And when
$\tau_{\mathrm{in}} \neq \tau_{\mathrm{out}}$ but
the two are $=_\tau$-equivalent, the economically
correct check sums the delta over the
$=_\tau$-neighborhood of the entry token; this
two-element sum is what the post-fixpoint
promotion path computes, which is why it lives
outside the mechanized core and why
Theorem~\ref{thm:sound} carries its scope
hypothesis.  The converse of the theorem does not
hold, by design: a genuine arbitrage with
leftovers or a mixed balance is reported as
\textit{Warning}, so soundness excludes false
positives and says nothing about false negatives.

\subsection{Uniqueness}

\emph{Clause 1.}  Property~\ref{prop:dse} orders
the children of every node by trace position, so
the kernel's step is a total function: on any
tree it either fires exactly one rewrite, chosen
by a fixed scan order, or reports a normal form.
A deterministic step admits at most one reduction
sequence from any tree; with termination, the
normal form exists and is unique.

\emph{Clause 2.}  Every rule permutes the
transfer multiset (preservation), and the balance
of Definition~\ref{def:balance} is a fold over
that multiset, so any two trees reachable from
$T_0$, under any rule orders, agree on
$\delta[v, \tau]$ at every address and token.
This is a conservation argument, not a confluence
argument, which is why it holds unconditionally.

\emph{Clause 3.}  Freed from the kernel's order,
the relation has genuine critical pairs.  Two
matter.  A bundle of three mutually mergeable
chains can associate as
$(C_1 \| C_2) \| C_3$ or $C_1 \| (C_2 \| C_3)$:
syntactically distinct trees that no further
rewriting reconciles.  And a chain can be at once
an extension operand (R6, R10, R12) and a
parallel-merge operand (R7, R13); firing one rule
destroys the other's redex.  A canonical map
resolves the first family: it flattens every
nest of merges to its operand list and sorts by
trace position, which Property~\ref{prop:dse}
makes a strict order, so both associations have
the same image, and the map is idempotent.  A tree
is \emph{well-formed} when its transfers carry
pairwise-distinct trace positions (linearity, true
of every decoded tree by construction and preserved
by every rule) and no chain that a chaining rule
would consume has a parallel-merge twin or a
pending annotation in its frame.  On such trees,
where the second family's overlap is excluded, any
two one-step reducts of the same tree rejoin up to
this map
(\texttt{local\_confluent\_mod\_kappa\_holds}).
Whether the local statement extends to a global
one is open; every result in the paper uses the
kernel's order and is unaffected.  The companion
treats the critical pairs of the unrestricted
relation, and the exact scope of this theorem and
of Theorem~\ref{thm:decidable}, at
length.\footnote{\url{https://anonymous.4open.science/r/artifact-submission-2128/proofs/proofs.pdf}}
\hfill$\square$

\subsection{Decidability}

If $T \equiv_R T'$, both reduce to a common $U$,
which reduces further to the unique normal forms
of each; uniqueness gives
$T{\downarrow} = T'{\downarrow}$.  Conversely,
equal normal forms witness joinability.
Computing $T{\downarrow}$ terminates
(Theorem~\ref{thm:term}) and tree equality is
decidable, so the word problem is decidable.  For
the coarser equivalence, the canonical map of
clause~3 is computable and idempotent, so
comparing images decides equality up to
reassociation of parallel merges
(\texttt{struct\_equiv\_dec}). \hfill$\square$

\subsection{The Mechanized Development}
\label{app:trust}

The file reads as one narrative in six parts:
the model and the sixteen rules, with R1--R15 as
constructors of one inductive relation and R16 as
the predicate \texttt{validated\_arbitrage}; the
foundational invariants (the measure, preservation,
walk correspondence); the deterministic kernel, a
computable step function mirroring the
implementation's scan order, with its soundness
bridge into the relation; soundness into the
transfer graph, where the chain invariants meet
\texttt{def5\_witness}; the five theorems, stated
together; and the algebra beyond the paper, where
the canonical map, its associativity and
commutativity laws, conservation, and
the decidable structural equivalence live.

The trust base is eleven \texttt{Parameter}s,
declared and never defined; no proof assumes
anything of them beyond their types.  They are the
opaque \texttt{address} and \texttt{token} types
with decidable equalities; the token equivalence
$=_\tau$; the predicates \texttt{is\_burn},
\texttt{is\_mint}, \texttt{is\_singleton\_router}
and \texttt{is\_token\_contract}; the cost model
\texttt{net\_positive}, which makes the theorems
parametric in the deployment's gas accounting (the
attempted tier of \S\ref{sec:eval} is formally the
Definition~\ref{def:arb} cycles on which it
returns false,
\texttt{attempted\_arbitrage\_characterization});
and the trace-order key \texttt{trace\_key}
realizing Property~\ref{prop:dse}'s sibling
order, which a deployment discharges with the
trace's own event indices, unique within a
transaction.  Two well-formedness conditions are stated
and deliberately never assumed, so they are
obligations on a deployment rather than hypotheses
of any theorem: reflexivity and symmetry of
$=_\tau$ (transitivity is not wanted; bridged and
pegged assets do not chain), and injectivity of
the trace key.

The kernel, the verdict cascade, the canonical map
and the decidable equivalence extract to OCaml;
the emitted module compiles and runs against
placeholder realizers, and the extraction
directives (machine integers for the trace key,
native lists and options) are erasure-level and
carry no logical content.  At deployment the
parameters are realized by the decoder's concrete
types, standard event-signature matching for burn
and mint, and the optional router set of R5.

\section{Structural Equivalence in Practice}
\label{sec:equivalence}

\subsection{Structural Equivalence by Example}
\label{sec:equiv:example}

We illustrate structural equivalence on two
confirmed arbitrages from the evaluation dataset;
the companion develops this material, including the
query language and the family grammar, at greater
length.\footnote{\url{https://anonymous.4open.science/r/artifact-submission-2128/proofs/proofs.pdf}}
The two transactions involve entirely different
tokens, different pool contracts, and a different
number of internal swaps, yet their reductions
converge to the same topological shape.

\paragraph{$\mathit{tx}_R$.}
The running example from Section~\ref{sec:algo}
(Figure~\ref{fig:txR}).  A smart contract receives
$0.0575$~WETH and splits it into three parallel
swap paths: one through a \textsc{weth/moon} pool,
one through \textsc{chad/weth}, and one through
\textsc{duck/weth}.  Each path converts its
intermediate token (\textsc{moon}, \textsc{chad},
\textsc{duck}) into \textsc{bean} at a dedicated
second-hop pool.  All three \textsc{bean} streams
converge on a single \textsc{bean/weth} pool, which
returns $0.0598$~WETH to the smart contract.  The
raw trace contains 10~transfer events across 6~pool
interactions and produces 16~reduction steps.

\paragraph{$\mathit{tx}_S$.}%
\footnote{\texttt{0x\seqsplit{aa4b2aec9c7a571f96062576135632ece91bde142075963219b26232404dfc5c}}}%
A different smart contract belongs to the same
structural family with more paths.  $\mathit{tx}_S$ has the
same star topology as $\mathit{tx}_R$
(Figure~\ref{fig:txR}), with six parallel paths
through twelve pools instead of three through six:
WETH is split across pools \textsc{puppies},
\textsc{Terminus}, \textsc{tomoe}, \textsc{yfi},
\textsc{cpool}, and \textsc{zeta}; each path
converts its intermediate token to \textsc{kabosu}
at a dedicated pool, and all six \textsc{kabosu}
streams converge on a single \textsc{kabosu/weth}
pool.  The raw trace contains 19~transfer events
across 12~pool interactions.

\paragraph{Reduction}
Despite their different sizes, both transactions
undergo the same reduction sequence.
First, leaf chaining ($R_1$) pairs the inbound and
outbound transfers within each pool call frame.
After lifting, chain extension ($R_6$) connects
per-pool chains into $N$ multi-hop paths
($N{=}3$ for $\mathit{tx}_R$, $N{=}6$ for
$\mathit{tx}_S$).
Then, merge ($R_7$) combines the $N$ paths that
share the same origin and destination into a single
compound chain, and a final chaining ($R_6$)
appends the return leg.
Annotation ($R_{14}$) recognizes the closed cycle
and labels it as an arbitrage.

The normal forms differ only in arity
(3~vs.~6 sub-chains) and in the concrete token
names.  Abstracting over both,
$\mathit{tx}_R{\downarrow}$ and
$\mathit{tx}_S{\downarrow}$ instantiate the same
parametric shape:
\[
  \textsc{weth}
  \;\xrightarrow{\;N\text{ paths}\;}
  \tau_i
  \;\to\;
  \textsc{target}
  \;\to\;
  \textsc{weth}
\]
Their exact normal forms differ, so comparing
normal forms directly separates the two.  What
they share is the skeleton: applying $\mathsf{sk}$
(\S\ref{sec:algo:implications}) to both and
comparing the results decides their common family,
and \S\ref{sec:querylang} casts the two tests as
queries.  Theorem~\ref{thm:decidable} is what makes
either comparison decidable.  The example is chosen
so the shared shape is visually obvious; the same
mechanical procedure decides it for transactions
whose raw traces hold hundreds of transfers, where
visual assessment is impossible.

The equivalence also induces a natural
\emph{taxonomy}.  At the finest level, two
transactions are equivalent iff their normal forms
are syntactically identical; $\mathit{tx}_R$
($N{=}3$) and $\mathit{tx}_S$ ($N{=}6$) fall in
different classes.  Quotienting over arity and
token identities yields the \emph{parametric
families} in which both are one strategy,
star-convergent arbitrage with variable~$N$; the
next subsection gives the families a finite
syntax.

\subsection{A Context-Free Grammar for Strategy
Families}
\label{sec:equiv:grammar}

Write $\mathsf{Star}(\alpha,L,\alpha)$ for an
R14-annotated cycle, $\mathsf{Swap}(\tau,\tau')$
for a single swap, $\circ$ for sequential and
$\parallel$ for parallel composition.  For each
pair $(\alpha, \tau')$ of entry and convergence
token, define
$G_{\alpha,\tau'} =
 (\{S, L, C, H\}, \Sigma, P_{\alpha,\tau'}, S)$
with productions
\begin{align*}
  S &\;\to\; \mathsf{Star}(\alpha,\; L,\;
             \alpha) \\
  L &\;\to\; C \parallel L \;\mid\; C \\
  C &\;\to\; \mathsf{Swap}(\alpha,\tau_1)
             \circ H_{\tau_1} \\
  H_{\tau} &\;\to\; \mathsf{Swap}(\tau,\tau')
             \;\mid\; \mathsf{Swap}(\tau,\tau_k)
             \circ H_{\tau_k}
\end{align*}
Every arm ends at the convergence token~$\tau'$;
the closing $\tau' \to \alpha$ swap is the cycle's
return, recorded by the $\mathsf{Star}$, not part
of any arm.  The \emph{star-convergent family} is
$\mathcal{L}_{\mathsf{star}} =
\bigcup_{(\alpha,\tau')} L(G_{\alpha,\tau'})$.

\paragraph{Why context-free}
The production
$C \to \mathsf{Swap}(\alpha,\tau_1) \circ
 H_{\tau_1}$ binds the convergence token~$\tau'$
across all branches: every arm in a $\mathsf{Star}$
must end at the same intermediate token before
returning to~$\alpha$.  A bottom-up ranked tree
automaton does not enforce this, since sibling
states are computed independently.  Because the
nonterminals $H_\tau$ are indexed by token, what
we give is a grammar \emph{schema}: any finite
token vocabulary instantiates it to a finite
context-free grammar, in which membership is
decidable.  In practice $\Sigma$ is
finite and every concrete word is bounded in depth
(Theorem~\ref{thm:term}) and width (the gas
limit).  New strategy families require only new
productions in~$S$, not changes to the rewriting
system.

\begin{proposition}[Grammar characterization]\label{prop:grammar}
Let $T{\downarrow}$ be a normal form containing an
arbitrage cycle (Definition~\ref{def:arb}) with
hub address~$h$, entry/exit token~$\alpha$, and
one or more parallel swap branches converging at
a common intermediate token~$\tau'$.  Then
$T{\downarrow} \in L(G_{\alpha,\tau'})$.
Conversely, every word in $L(G_{\alpha,\tau'})$ is
the structural image of such a star-convergent
arbitrage.
\end{proposition}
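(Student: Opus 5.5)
We would prove both directions by structural induction, reading ``$T{\downarrow} \in L(G_{\alpha,\tau'})$'' as membership of the \emph{structural image} of the annotated cycle, i.e.\ of the term obtained by mapping each $\mathsf{Chain}$ node of the normal form to $\circ$, each \textsc{merging} node to $\parallel$, each two-leaf R1 chain through a pool to $\mathsf{Swap}(\tau(\ell_1),\tau(\ell_2))$, and the R14-labelled root cycle to $\mathsf{Star}$. We first fix this image map and note that, by the canonical map of Theorem~\ref{thm:confluence}, clause~3, nests of parallel merges flatten to an operand list, so the right-recursive production $L \to C \parallel L \mid C$ suffices regardless of how the kernel associated them.

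\emph{Forward direction.} Start from the annotated cycle at hub~$h$. Since it is labelled \textsc{arb} by R14, $s(C)=d(C)=h$ and $\tau_{\mathrm{in}}(C) =_\tau \tau_{\mathrm{out}}(C)=\alpha$, which gives the outer $\mathsf{Star}(\alpha,\cdot,\alpha)$. Inverting the last rule that built $C$ (by the star-convergent hypothesis, a chaining R6/R12 that appended the return leg $\tau'\to\alpha$ to an aggregate $A$ with $d(A)$ the convergence pool), we are left to show that the image of $A$ derives from $L$.

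We argue by induction on the derivation of $A$. If $A$ is a merge (R7 or R13), then its operands share origin $h$, destination, and $\tau_{\mathrm{out}}=\tau'$. Each operand is either a further merge, which is absorbed by flattening, or a single arm, giving $C \parallel L$. A single arm was built only by R1 and R6/R10/R12 chaining, and is therefore a sequence of swaps whose consecutive token boundaries agree by token continuity. Its first swap leaves in $\alpha$ (merges preserve $\tau_{\mathrm{in}}$ under R13, and for R7 we use the hypothesis that branches leave the hub in $\alpha$). Its last swap ends in $\tau'$ because R7/R13 require equal $\tau_{\mathrm{out}}$. A secondary induction on arm length then yields $C \to \mathsf{Swap}(\alpha,\tau_1)\circ H_{\tau_1}$ with $H$ terminating in $\mathsf{Swap}(\tau,\tau')$.

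\emph{Converse.} Given a word, we build a transaction by recursion on its derivation. Each $\mathsf{Swap}(\tau,\tau'')$ becomes a fresh pool frame emitting the in-leg and out-leg leaves. Each arm becomes a nested callback cascade as in Figure~\ref{fig:ast}(a). The arms are placed as siblings under the hub frame, and a return pool sends $\alpha$ back to $h$, with amounts chosen so that $\delta[h,\alpha]>0$. We then run the kernel and check that it fires R1, lift, R6, R13, R6 and R14 in that order, reproducing the word. Theorem~\ref{thm:confluence} guarantees that this is \emph{the} normal form, and Theorem~\ref{thm:sound} confirms that the result is an arbitrage in the sense of Definition~\ref{def:arb}.

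\emph{Main obstacle.} The hardest step is the inversion in the forward direction. Balance-continuity variants of R8 and R10 can join chains whose boundary tokens differ, and R4/R11 can insert same-token relay hops, so an arm need not literally be a swap sequence. We would first try to strengthen the hypothesis ``parallel swap branches'' to exclude these rules, and to treat relay hops as invisible under the image map, since they leave no token change. If that fails, we would restrict the proposition to normal forms whose derivation uses only R1, R6, R12, R13 and R14, which is exactly the path the $\mathit{tx}_R$ and $\mathit{tx}_S$ reductions follow.
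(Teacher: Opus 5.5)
Your forward direction has the paper's shape: $R_{14}$ gives $\mathsf{Star}$, parallel merges give $\parallel$, and each branch gives a swap sequence. Flattening merges through the canonical map of Theorem~\ref{thm:confluence} improves on the sketch. However, the image map you fix at the outset is wrong, and your induction inherits the error. In $\mathit{tx}_R$ each arm is $R_6(R_1(\ell_{\textsc{weth}},\ell_{\textsc{chad}}),\ell_{\textsc{bean}})$ (Figure~\ref{fig:ast}(a)). Its second swap (\textsc{chad} in, \textsc{bean} out at the \textsc{chad/bean} pool) straddles the $R_1$/$R_6$ boundary. The leaf that $R_6$ appends has no image under a map that sends only two-leaf $R_1$ chains to $\mathsf{Swap}$. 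Your claim that arm boundaries ``agree by token continuity'' fails at exactly that junction: $R_6$ has no token premise, and there $\tau_{\mathrm{out}} = \textsc{chad} \neq \textsc{bean}$. So your fallback restriction to $R_1, R_6, R_{12}, R_{13}, R_{14}$ does not rescue the argument.

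The paper's sketch reads the swaps off the branch's successive tokens, $\mathsf{Swap}(\alpha,\tau_1)\circ\cdots\circ\mathsf{Swap}(\tau_{k-1},\tau')$. Made precise, a connected run $t_1,\dots,t_k$ contributes one $\mathsf{Swap}(\tau(t_i),\tau(t_{i+1}))$ per middleman $d(t_i)$. Consecutive swaps then share a transfer and agree on their boundary by construction. Under that reading most of your main obstacle dissolves. A balance-continuity junction of $R_{10}$ is just a swap at the junction address. A same-token relay of $R_4$ or $R_{11}$ is $\mathsf{Swap}(\tau,\tau)$, which $H_\tau \to \mathsf{Swap}(\tau,\tau_k)\circ H_{\tau_k}$ admits. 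Your secondary induction still lacks a base case: $C$ forces at least two swaps per arm. The lone arm $\mathsf{Swap}(\textsc{weth},\textsc{usdt})$ of the two-pool arbitrage in Figure~\ref{fig:arbidef}(a) is therefore not derivable, so arms with at least two middlemen must be read into the hypothesis.

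Your converse breaks where the kernel is supposed to fire $R_{13}$. $R_7$, and hence $R_{13}$, requires $\tau_{\mathrm{mid}}(C_1) \neq \tau_{\mathrm{mid}}(C_2)$, i.e.\ distinct tokens entering the two arms' last swaps. The grammar imposes nothing of the kind: it generates $C \parallel C$ with the same $C = \mathsf{Swap}(\alpha,X)\circ\mathsf{Swap}(X,\tau')$ twice. On the transaction you build for that word, no rule joins the two open arms: $R_8$ and $R_9$ need closed chains, and $R_{10}$ needs the arms to meet. The kernel then closes one arm with the return leaf, and the other stays open or is chained onto that cycle by $R_{10}$. The normal form is not the word.

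Even with distinct penultimate tokens, you owe an argument that the merge precedes the closure. An arm lying beside the return leaf is at once an $R_6$ operand and an $R_{13}$ operand, which is the second critical pair in the proof of Theorem~\ref{thm:confluence}. Nesting the arms inside the return pool's callback, as $\mathit{tx}_R$ does, is the natural way to force that order. Under the per-middleman reading, you should also route each arm pool to pool so the hub does not appear as an extra relay $\mathsf{Swap}(\tau,\tau)$.

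The paper never routes the converse through a decoded trace. It takes the derivation directly as a CFT whose parallel node is already built, which nothing rewrites further, and lets $R_{14}$ annotate it and $R_{16}$ confirm positivity for suitably chosen amounts. Your realizability version is stronger and worth keeping, but it holds only for words whose arms have pairwise distinct penultimate tokens. For the remaining words, fall back on the paper's direct reading.
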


\noindent
\emph{Sketch.} ($\Rightarrow$) By termination and
uniqueness (Theorems~\ref{thm:term}
and~\ref{thm:confluence}), the chaining and
lifting rules (R1, R6) collapse each branch into
a swap sequence
$\mathsf{Swap}(\alpha,\tau_1) \circ \cdots \circ
\mathsf{Swap}(\tau_{k-1},\tau')$, matching
$C \to \mathsf{Swap}(\alpha,\tau_1) \circ
H_{\tau_1}$ with $H$ unfolded as needed (finite by
termination).  Parallel sibling branches at the
hub compose via $L \to C \parallel L$.  The outer
$\mathsf{Star}(\alpha,L,\alpha)$ records that
branches return to hub~$h$'s account in
token~$\alpha$, the closure witnessed by
\textsc{Annotate-Cycles} (R14).
($\Leftarrow$) Any derivation in $G_{\alpha,\tau'}$
exhibits a $\mathsf{Star}$ node whose branches
route tokens through $\tau'$ and return to
$\alpha$; this is a valid CFT normal form by
construction, R14 annotates it on structure, and
$R_{16}$ confirms
$\delta_C[s(C), \tau_{\mathrm{in}}(C)] > 0$
afterwards. $\hfill\square$

Both running examples reduce to derivations in
$\mathcal{L}_{\mathsf{star}}$, with the same
$\mathsf{Star}(\alpha,\,L,\,\alpha)$ skeleton:
$\mathit{tx}_R{\downarrow} \in
L(G_{\text{weth},\text{bean}})$ with three swap
arms and
$\mathit{tx}_S{\downarrow} \in
L(G_{\text{weth},\text{kabosu}})$ with six.
The grammar generates an infinite family;
$\mathit{tx}_R$ and $\mathit{tx}_S$ are two
concrete members.

\subsection{Extending the Grammar: Flash-Loan
Wrappers}\label{sec:grammar:flashloan}

Flash-loan-wrapped arbitrages, the family that
contains 51\% of our exclusive confirmed
detections (Category~2, \S\ref{sec:eval:manual}),
extend the star grammar without changing the
rewriting system.  The borrow and repay legs of a
flash loan form a specular pair
(\S\ref{sec:algo:leftover}); leftover recovery
removes them from the leaf list and emits a
\emph{leftover-cycle} node $\mathsf{LC}(\alpha, a)$
recording the loan token and amount.  Leftover
cycles are preserved by every later rewrite rule,
so they survive into the normal form alongside
the star-convergent cycle that uses the borrowed
capital.

A normal form is therefore a pair
$(N, \Lambda)$, where $N$ is a star derivation
(\S\ref{sec:equiv:grammar}) and $\Lambda$ is a
(possibly empty) sequence of $\mathsf{LC}(\alpha,
\cdot)$ nodes whose token coincides with the
star's hub.  Adding the production $\Lambda \to
\varepsilon \mid \mathsf{LC}(\alpha, a) \cdot
\Lambda$ to $G_{\alpha,\tau'}$ yields the
extended grammar $G^F_{\alpha,\tau'}$; when
$\Lambda = \varepsilon$ the extension reduces to
the star grammar, and the flash-loan family
$\mathcal{L}_{\mathsf{flash}} \subset \bigcup
L(G^F_{\alpha,\tau'})$ is the subset with
$\Lambda$ non-empty.  Membership remains
decidable in linear time.  Specular pairs on
tokens unrelated to the cycle (refunds,
meta-transaction relays) do not satisfy any
$G^F_{\alpha,\tau'}$ and fall outside the family.

\paragraph{The 17\,912~WETH arbitrage as a
concrete member.}
We illustrate on Ethereum transaction
\texttt{0x\seqsplit{45388b0f9ff46ffe98a3124c22ab1db2b1764ecb3b61234e29e5c9732b7fd4ab}},
a confirmed arbitrage from the exclusive
detections EigenPhi does not flag (full forensic
analysis in Appendix~\ref{sec:casestudy:fp}).
A bot borrows 14\,175.71~WETH from an aggregator,
executes a three-hop cycle through Bancor, a
BNT/AAVE pool, and an AAVE/WETH pool, and repays
the loan.

After reduction, the normal form is the pair
\begin{align*}
S \;=\;& \mathsf{Swap}(\textsc{weth},\textsc{bnt})
  \circ \mathsf{Swap}(\textsc{bnt},\textsc{aave}),\\[-1pt]
(N,\Lambda) \;=\;& \bigl(\mathsf{Star}(\textsc{weth},
  \,S,\,\textsc{weth}),\\[-1pt]
 &\ \ \mathsf{LC}(\textsc{weth},\,14\,175.71)\bigr).
\end{align*}
The
single arm ends at the convergence
token~\textsc{aave}, and the closing
$\textsc{aave} \to \textsc{weth}$ swap is the
cycle's return, recorded by the $\mathsf{Star}$
exactly as $\textsc{kabosu} \to \textsc{weth}$ is
for~$\mathit{tx}_S$.  The word lies in
$L(G^F_{\textsc{weth},\textsc{aave}})$
and hence in $\mathcal{L}_{\mathsf{flash}}$.
The profit unwrap (a separate star on WETH
under the native/wrapped equivalence) does not
participate in the family-membership test.
Membership is decidable in linear time
(parse $(N,\Lambda)$, check leftover-cycle
tokens), and well-defined under the deterministic
reduction order by Theorem~\ref{thm:confluence}.

\subsection{From Detection to Query Language}
\label{sec:querylang}

We write the three queries of
\S\ref{sec:algo:implications} as
$Q_{\mathsf{arb}}$ (detection), $Q_{\mathsf{eq}}$
(exact structural equivalence) and
$Q_{\mathsf{fam}}$ (family equivalence via the
skeleton map $\mathsf{sk}$), add two further
queries useful for forensic and monitoring work,
and demonstrate $Q_{\mathsf{fam}}$ on the running
example.

\paragraph{$Q_{\mathsf{match}}$: strategy
matching.}
Given a reference transaction
$T_{\mathsf{ref}}$,
$Q_{\mathsf{match}}(T) = \top$ iff
$T{\downarrow} = T_{\mathsf{ref}}{\downarrow}$.
Setting $T_{\mathsf{ref}} = \mathit{tx}_R$ flags
every transaction whose normal form exhibits
the same three-arm star, regardless of pool or
protocol identity.

\paragraph{$Q_{\mathsf{anom}}$: structural
anomaly.}
Given a set $S$ of normal forms previously
observed for a bot address,
$Q_{\mathsf{anom}}(T) = \top$ iff
$T{\downarrow} \notin S$.  A bot whose past
behaviour is captured by $S$ but later executes a
transaction with a different shape (e.g.\ a
four-arm star, a sequential chain) lands outside
$S$ and is flagged.  Both queries are decidable
in polynomial time.

\paragraph{Worked example:
$Q_{\mathsf{fam}}(\mathit{tx}_R, \mathit{tx}_S)
= \top$.}
The skeleton map $\mathsf{sk}$ erases path length,
arity, and token identity
(\S\ref{sec:algo:implications}).  On
$\mathit{tx}_R{\downarrow} = \mathsf{Star}(
\textsc{weth},\, C_1 \parallel C_2 \parallel C_3,\,
\textsc{weth})$, where each $C_i$ is a two-hop
chain through \textsc{bean}: erasing path length
yields $\mathsf{Swap}(\textsc{weth},
\textsc{bean})$ per arm; erasing arity collapses
the three identical arms to one; erasing tokens
gives
$\mathsf{Star}(\star, \mathsf{Swap}(\star,\star),
\star)$.
The same computation on
$\mathit{tx}_S{\downarrow}$ (six two-hop arms
through \textsc{kabosu}) yields the same
skeleton.  Hence
$Q_{\mathsf{fam}}(\mathit{tx}_R, \mathit{tx}_S)
= \top$ despite different arities, tokens, and
pools.  A sequential arbitrage (one cycle
feeding into the next) would have a different
skeleton topology and a different family.

\paragraph{Richer queries}
Combining structural predicates with the per-cycle
balances~$\delta_C$ (computed during normalization)
yields decidable queries such as ``arbitrages
with profit $>v$ through at least three pools''.
The number of cycles in~$T{\downarrow}$ is finite
(Theorem~\ref{thm:term}), so $\delta_C$-aware
queries remain decidable.  The rewriting system
thus underpins a domain-specific language for
fund-flow analysis; this paper instantiates one
query, $Q_{\mathsf{arb}}$.

%% ---------------------------------------------------------
\section{Forensic Case Studies}
\label{sec:casestudy}
%% ---------------------------------------------------------

The reduced AST is not only a detection artifact;
it is a readable record of what happened in a
transaction.  We demonstrate this on three
transactions from the evaluation dataset: a
stablecoin-migration exploit whose profit source
only the tree exposes, a confirmed arbitrage that
EigenPhi misses, and a complex DeFi operation that
\emph{looks} cyclic but is not.  In each case the
AST tells the full story.

\subsection{A Stablecoin Migration Exploit}
\label{sec:casestudy:tp}

Transaction
\texttt{0x\seqsplit{4fe7f35c14e51aa31460c03502928d40c036261234dafba99ef4fa45d0b66904}}.

\paragraph{Transaction profile}
The sender and receiver are the same address
(\addr{0xe834a4106adb7a639af2e12567308238d8b1897b}):
a self-calling MEV bot.
The EVM trace contains 84~internal calls reaching
depth~9.  The decoded trace yields 14~ERC-20
Transfer events across 5~tokens: agEUR, EURe
(Monerium EUR emoney), EUROC, a vault share token,
and WETH.

\paragraph{AST reduction}
The algorithm reduces the initial AST in 10~steps.
The final tree contains 10~chain nodes, all with
$s(C) = d(C)$ (closed loops rooted at the bot
address), plus four leftover leaves.
Two of the ten chains satisfy
$\tau_{\mathrm{in}} =_\tau \tau_{\mathrm{out}}$ and are
annotated as arbitrage cycles; the remaining eight
are individual swaps:

\smallskip
\begin{small}
\begin{tabular}{@{}llll@{}}
\toprule
$\tau_{\mathrm{in}}$ & $\tau_{\mathrm{out}}$
  & Label & Role \\
\midrule
agEUR & agEUR & \textit{arbitrage}
  & Main cycle \\
EURe & EURe & \textit{arbitrage}
  & Profit extraction \\
\midrule
agEUR & EURe & chaining & Swap \\
EURe & EUROC & chaining & Swap \\
EUROC & agEUR & chaining & Swap \\
agEUR & WETH & chaining & Residual sale \\
WETH & ETH & \textit{token burn} & WETH unwrap \\
\bottomrule
\end{tabular}
\end{small}
\smallskip

\noindent The three swap chains reveal the arbitrage
route: agEUR enters a pool that returns EURe, the
EURe enters a second pool that returns EUROC, and
the EUROC enters a third pool that returns agEUR.
The fixpoint loop connected these three chains into
the main cycle (first row).

\paragraph{The token-migration twist}
The dominant profit comes from an unexpected source.
One of the pools involved in the route holds
liquidity in two versions of the same stablecoin:
the original EURe contract\linebreak
(\addr{0x3231cb76718cdef2155fc47b5286d82e6eda273f})
and its post-migration successor\linebreak
(\addr{0x39b8b6385416f4ca36a20319f70d28621895279d}).
When the bot swaps agEUR at this pool, it receives
returns denominated in \emph{both} contracts.
It routes the new-contract tokens onward through
the cycle and keeps the old-contract tokens,
netting 2{,}413~EURe in the process.
The algorithm detects this without knowing that
a token migration occurred: it simply observes that
EURe enters and exits the bot at different amounts,
producing a positive delta.

\paragraph{Flash loan}
The bot funds the operation with a 2{,}413.95~agEUR
flash loan from the Balancer vault.  The borrow
and repay transfers form a specular pair (same
token, same amount, reversed addresses), collapsed
into a single leftover cycle
$\mathsf{LC}(\text{agEUR}, 2{,}413.95)$ by the
recovery pass.

\paragraph{Profit}
The delta map at the bot address:

\smallskip
\begin{small}
\begin{tabular}{@{}lrl@{}}
\toprule
& Amount & Token \\
\midrule
Gross $\delta$ & $+2{,}413.39$ & EURe \\
Gross $\delta$ & $+0.000124$ & agEUR \\
Gross $\delta$ & $+0.000636$ & ETH \\
Gas cost & $-0.000182$ & ETH \\
\midrule
\textbf{Net} & $\mathbf{+2{,}413.39}$ & EURe \\
\textbf{Net} & $+0.000124$ & agEUR \\
\textbf{Net} & $+0.000453$ & ETH \\
\bottomrule
\end{tabular}
\end{small}
\smallskip

\paragraph{Verdict}
Confirmed arbitrage.
Two cycles, zero leftovers, positive final balance.
The AST tells the full story: the arbitrage route,
the funding mechanism (Balancer flash loan), and
the profit source (old vs.\ new EURe contract)
are all readable from the reduced tree, without
any knowledge of the contracts involved.

\subsection{A 17{,}912~WETH Arbitrage That EigenPhi Misses}
\label{sec:casestudy:fp}

This is the flash-loan arbitrage introduced in
\S\ref{sec:grammar:flashloan}
(tx~\texttt{0x\seqsplit{45388b0f9ff46ffe98a3124c22ab1db2b1764ecb3b61234e29e5c9732b7fd4ab}}).
Three anchors: $17{,}912$~WETH gross cycle profit
(17.72 in, 17{,}930 out), $14{,}175.71$~WETH
flash-loan borrow/repay, $13{,}088$~ETH builder
bribe.  EigenPhi misses it; our system confirms.
Here we summarise the reduced AST and the profit
accounting.

\paragraph{Transaction profile}
The EOA
(\addr{0x5884b2faa9ad6f38010831e2290e515af17a7d47})
calls a bot contract
(\addr{0x06cff7088619c7178f5e14f0b119458d08d2f5ef}).
The trace is compact: 42~internal calls reaching
depth~8, with only 6~ERC-20 Transfer events across
4~tokens: WETH, BNT (Bancor Network Token), AAVE,
and native ETH.

\paragraph{AST reduction}
The algorithm reduces the AST to two chains
both labelled \textit{arbitrage} and both
WETH\,$\to$\,WETH closed on the bot address with
zero leftovers: a main cycle and a profit unwrap.
The seven legs of the main cycle, in order, are:
the bot wraps 17.72 WETH at the WETH contract,
sends the resulting 17.72 ETH to Bancor and
receives 122{,}365 BNT, forwards the BNT to a
BNT/AAVE pool which returns 128.6 AAVE, and the
AAVE/WETH pool returns 17{,}930 WETH to the bot.

\paragraph{The price discrepancy}
The AST reveals a striking imbalance:
17.72~ETH buys 122{,}365~BNT at Bancor, which
buys 128.6~AAVE, which is worth 17{,}930~WETH
at the AAVE/WETH pool.  This is a
$1{,}000\times$ return through a three-hop path.
The discrepancy is at the AAVE/WETH pool, which
overvalues AAVE relative to Bancor.  Whether this reflects
a temporary liquidity event, a governance action,
or a pool manipulation is beyond the scope of the
algorithm; the AST records the flows and the
profit, leaving the interpretation to the analyst.

\paragraph{Flash loan and builder payment}
The bot borrows 14{,}175.71~WETH from an
aggregator
(\addr{0xbbbbbbbbbb9cc5e90e3b3af64bdaf62c37eeffcb});
the borrow/repay pair reduces to a single
leftover cycle $\mathsf{LC}(\text{WETH},
14{,}175.71)$.  The bot unwraps the remaining
WETH and forwards 73\% of the gross profit to the
block builder as a priority-inclusion bribe.

\paragraph{Verdict}
Confirmed arbitrage: two cycles, zero leftovers,
gross $+17{,}912$~WETH netting to $+4{,}824$~ETH
to the EOA after the $13{,}088$~ETH builder bribe
and gas.  The entire analysis derives from
6~ERC-20 transfers and 8~ETH value transfers,
with no knowledge of Bancor, AAVE, or any of the
pool contracts involved.

\subsection{A Yield Harvest That Looks Like an Arbitrage}
\label{sec:casestudy:tn}

Transaction
\texttt{0x\seqsplit{5d44c672cce36070dae0419d4cd2852f05bac823637e2eb58e2c403847c080d9}}.

EigenPhi flags this transaction as an arbitrage.
Our system does not.  The disagreement does not
rest on the rewriting: the decoded transfers alone
show that no token returns to its origin in the
denomination it left in, and that the sender's net
balance is negative.  Either fact can be checked
against the trace without running the reduction.
The reduced AST is how we present that conclusion
compactly, not how we establish it.

\paragraph{Transaction profile}
The EVM trace is large: 346~internal calls reaching
depth~11, with 24~ERC-20 Transfer events across
11~distinct tokens.  The tokens include WETH, USDC,
CRV, CVX, crvUSD, frxUSD, and five LP or vault
share tokens.  At first glance, the transaction
resembles a complex multi-venue arbitrage.

\paragraph{AST reduction}
The algorithm constructs the initial AST
(31~leaf nodes) and reduces it in 13~steps.
The final tree contains 8~chain nodes, all with
closed loops ($s(C) = d(C)$), plus leftover
transfers and burns.

\paragraph{Closed-loop analysis}
The reduction yields eight closed loops, but
\emph{none} has
$\tau_{\mathrm{in}} =_\tau \tau_{\mathrm{out}}$:
three are LP unwraps from a single Curve LP
contract (entry token \texttt{ecb0}, exits to
\texttt{c0c17dd0}, \texttt{crvUSD}, and a
CVX$\to$WETH reward sale); three are stable swaps
among \texttt{crvUSD}, \texttt{frxUSD},
\texttt{c0c17dd0}, and USDC; and the remaining
two are a USDC$\to$WETH swap and a CRV reward
sale through the WETH/CRV pool.

\noindent Every row is a swap: tokens enter and exit
the same address, but in different denominations,
so no row satisfies R14 or R15.  USDC, which both
enters and exits the central contract, converges
from multiple pools to the sender rather than
returning to its origin: the pattern is a funnel,
not a cycle.

The overall token flow is linear:
LP\,tokens\,$\to$\,ecb0\,$\to$\,\allowbreak
c0c17dd0/crvUSD\,$\to$\,frxUSD\,$\to$\,\allowbreak
crvUSD\,$\to$\,USDC\,$\to$\,EOA,
with side branches for CRV and CVX reward claims
sold for WETH and then for USDC.  This is a
\emph{yield harvesting} operation: the sender
burns LP tokens from a Curve vault, claims
accrued rewards, converts everything to USDC, and
withdraws.  No token completes a round trip, and
the final balance is negative (gas exceeds the
tiny ETH residual), consistent with a position
unwind rather than a profit-seeking trade.  Not an arbitrage: no chain the reduction builds
returns in the asset it left in, and the final
balance is negative.  The algorithm reaches this
conclusion without knowing what Curve, Convex, or
any of the vault contracts are: 24~transfers,
8~closed loops, no matching boundary tokens.
For completeness we also ran
ArbiNet~\cite{arbinet2024}, the GNN baseline of
Section~\ref{sec:eval}, on this transaction's block:
it does not flag the transaction, so the false
positive is specific to EigenPhi here, not to
flow-based detection as a class.

\paragraph{What a flat detector sees, and why}
Projected onto the transfer graph
(Definition~\ref{def:tg}), this transaction does
contain address-level cycles: three tokens both enter
and leave the central contract.  On this graph,
EigenPhi's documented pipeline~\cite{eigenphi_method}
fires mechanically: every pool swap sends one token out
and receives another back, so the contract and its
pools are mutually reachable and collapse into one
strongly connected component, and the sender's
$19{,}212$~USDC satisfies the positivity test.  Yet the
component is not a traversable trading route: no
sequence of transfers carries value round, because the
token handed on at each junction is not the token
received.  What the aggregate view shows is only that
the same asset came in and went out, in these amounts:

\begin{center}\footnotesize
\begin{tabular}{@{}lrrr@{}}
\toprule
round trip & out & in & balance\\
\midrule
WETH      & $0.0492$      & $0.0492$      & $0$\\
LP token  & $23{,}048.44$ & $23{,}048.44$ & $0$\\
crvUSD    & $30{,}719.00$ & $30{,}716.40$ & $-2.60$\\
\bottomrule
\end{tabular}
\end{center}

\noindent
The WETH row is the one that most resembles an
arbitrage.  It is real in the aggregate: $0.0492$~WETH
leaves the contract for a WETH/USDC pool, and $0.0488$
and $0.0004$~WETH come back from the pools where the
position's CRV and CVX rewards were sold.  The sum is
exact, so the contract gains nothing.

Neither does any of it compose into a chain.  The
outgoing transfer and the two returning ones were
emitted by three different calls, at depths $9$, $8$
and $7$ of the call tree, so they are never siblings
and no chaining rule can join them.  Even setting the
call structure aside, the tokens do not line up: the
contract receives USDC from the pool it sent WETH to,
and what it sends next is CRV, so there is no
continuity to chain on.  The reduction therefore does
not build a cycle here and reject it on its balance;
it builds no such cycle at all.  The sender settles
the question on its own: it holds a single edge in the
whole graph, receiving $19{,}212$~USDC and sending
nothing, so it lies on no cycle.

\section{Cross-Chain Portability}
\label{sec:arbitrum}
%% ---------------------------------------------------------

\begin{figure*}[t]
\centering
\resizebox{0.85\textwidth}{!}{%
\begin{tikzpicture}[font=\scriptsize,
  acct/.style={draw=gray!70, fill=acctgray!30,
    rounded rectangle, minimum height=6mm,
    font=\scriptsize, inner sep=3pt,
    align=center},
  sc/.style={draw=scamber!80, fill=scamber!25,
    rounded rectangle, minimum height=6mm,
    font=\scriptsize, inner sep=3pt,
    align=center,
    double, double distance=0.8pt},
  pool/.style={draw=poolblue!80, fill=poolblue!15,
    rounded corners, minimum height=6mm,
    font=\scriptsize, inner sep=3pt},
  lbl/.style={font=\scriptsize, fill=white,
    inner sep=0.5pt},
  arr/.style={-{Stealth[length=2.5mm]}, semithick},
]
% Bot (center left)
\node[sc] (bot) at (0,2)
  {\textsf{Bot}\;
  \texttt{0xb25055dc3354585115d6a1356e93bc512e5c975c}};

% Pool 1 (top right)
\node[pool] (p1) at (10,3.5)
  {$\mathcal{P}_1$\;
  \texttt{0x7fcdc35463e3770c2fb992716cd070b63540b947}};

% Pool 2 (bottom right)
\node[pool] (p2) at (10,0.5)
  {$\mathcal{P}_2$\;
  \texttt{0x20cdea535dcf2b93bce5e4db8786ac5163e28d81}};

% Profit receiver (below bot)
\node[acct] (recv) at (0,-0.5)
  {\texttt{0xd2be32dbf454ad8ff57851e9762cac12536ebee4}};

% 1. Bot -> P1 (USDC, blue, upper path)
\draw[arr, okBlue]
  (bot.north east) to[out=30, in=170]
  node[lbl, above, pos=0.5]
  {123.21 \textsc{usdc}} (p1.west);

% 2. P1 -> Bot (WETH, vermillion, return path)
\draw[arr, okVermillion]
  (p1.south west) to[out=200, in=10]
  node[lbl, below, pos=0.5]
  {0.0592 \textsc{weth}} (bot.east);

% 3. Bot -> P2 (WETH, vermillion, lower path)
\draw[arr, okVermillion]
  (bot.south east) to[out=-30, in=170]
  node[lbl, below, pos=0.5]
  {0.0592 \textsc{weth}} (p2.west);

% 4. P2 -> Bot (USDC, blue, return path)
\draw[arr, okBlue]
  (p2.north west) to[out=160, in=-10]
  node[lbl, above, pos=0.5]
  {123.23 \textsc{usdc}} (bot.east);

% 5. Surplus: Bot -> external address (dashed)
\draw[arr, densely dashed, okGreen]
  (bot.south) --
  node[lbl, right, pos=0.5]
  {0.024 \textsc{usdc}} (recv.north);

\end{tikzpicture}%
}
\caption{Transfer graph of an Arbitrum arbitrage
  (tx~\texttt{0x1472c663}).
  \textcolor{okBlue}{Blue}: \textsc{usdc}
  (\texttt{0xaf88d065e77c});
  \textcolor{okVermillion}{vermillion}: \textsc{weth}
  (\texttt{0x82aF4944});
  dashed: surplus forwarded to an external address.
  Detected with no code changes; only the
  \textsc{weth} address was configured.}
\label{fig:arbitrum}
\end{figure*}

The rewriting system depends only on the
deterministic sequential execution property shared
by all EVM-compatible chains
(Property~\ref{prop:dse}), not on Ethereum-specific
assumptions.  As a portability smoke-test of the
model (not a cross-chain evaluation), we ran the
detection algorithm on two chains it was never
designed or tested against: Arbitrum One (an
Ethereum Layer-2 rollup) and BNB Smart Chain (an
independent EVM-compatible Layer-1).

\subsection{Arbitrum}

\paragraph{Setup.}
We selected 1\,000~consecutive Arbitrum blocks
(447\,413\,490--447\,414\,489) containing
23\,250~transactions.
The detection binary was compiled from the same
source code used in the main evaluation
(\S\ref{sec:eval}), with a single
configuration change: the wrapped-asset address was
set to the Arbitrum \textsc{weth} contract
(\texttt{0x82aF\allowbreak 4944\allowbreak 7D8a})
via an environment variable.  No rewriting rule, no
threshold, and no annotation predicate was
modified.  The algorithm received raw transaction
traces from an Arbitrum RPC node and processed them
through the same decode--analyse pipeline used for
Ethereum mainnet.

\paragraph{Results.}
Of the 23\,250~transactions, the system classified
50~as confirmed arbitrages and 952~as warnings
(4.3\% flagged overall).  Confirmed arbitrages
average 2.0~cycles per transaction.  The lower
detection rate compared to Ethereum mainnet reflects
Arbitrum's different transaction mix: most L2
activity consists of user-initiated swaps and
transfers rather than MEV bot strategies.

\paragraph{Case study.}
We examine one representative
transaction in detail.\footnote{%
\texttt{0x\seqsplit{1472c66337ca2caca95b7e5c619cd85c4caa5c7d29292b2dc8966b8bd3ed6d42}}}
The bot contract (\texttt{0xb250}) sends
123.21~\textsc{usdc} to
pool~$\mathcal{P}_1$ (\texttt{0x7fcd}), which
returns 0.0592~\textsc{weth}.  The bot forwards
the entire \textsc{weth} amount to
pool~$\mathcal{P}_2$ (\texttt{0x20cd}), which
returns 123.23~\textsc{usdc}
(Figure~\ref{fig:arbitrum}).
The \textsc{usdc} surplus (0.024~\textsc{usdc}) is
forwarded to a separate address, which may be a
profit receiver, a vault, or part of a larger
multi-transaction strategy.
The system classifies this transaction as
\textit{Warning} rather than \textit{Arbitrage}
because the bot's final balance is mixed: it is net
positive in \textsc{usdc} but net negative in ETH
(consumed as gas).  Without a price oracle, the
algorithm cannot determine whether the overall
position is profitable, so it conservatively issues
a warning.

\paragraph{Structural equivalence.}
After rewriting, the reduced form of this Arbitrum
transaction is structurally identical to the
triangular arbitrages in the Ethereum evaluation:
a single annotated cycle with two chained swaps
and a leftover extraction.  The same rules fired
in the same order and produced the same canonical
shape, despite differences in gas pricing, block
timing, and pool contract implementations between
the two chains.

\subsection{BNB Smart Chain}

\paragraph{Setup.}
To test portability to an independent Layer-1,
we ran the same binary on 1\,000 consecutive BNB
Smart Chain (BSC) blocks
(91\,308\,000--91\,308\,999).
The only configuration change was the wrapped-asset
address, set to the BSC \textsc{wbnb} contract
(\texttt{0xbb4C\allowbreak dB9C\allowbreak bd36})
via the same environment variable.
BSC carries 3-second block times and a mix of
legacy (type~0x0) and EIP-1559 (type~0x2)
transactions in the same block; the decode layer
handles EIP-1559 natively and treats legacy
transactions as outside its current scope.  Legacy
transactions are not malformed: they simply use a
pre-1559 fee field layout the decoder does not
unpack.  Extending coverage would be a decoder
patch, not an algorithmic change; we keep them out
of the sample so the figures below are reported
against a single transaction shape.

\paragraph{Results.}
The 1\,000~blocks contain approximately 105\,000
transactions.  Of these, 38\,126 are EIP-1559
(36\%) and the remaining 66\,874 are legacy
(64\%).  The detection rate is reported against
the EIP-1559 denominator: 88~confirmed arbitrages
and 922~warnings out of 38\,126 EIP-1559
transactions (2.6\% flagged overall).  Against the
full 105\,000 the rate is mechanically lower
(0.96\%), but that figure conflates a covered
sub-population with an out-of-scope one and is
not a meaningful comparison with the Ethereum and
Arbitrum baselines, which run on the post-merge
single-format era.
Confirmed arbitrages average 1.2~cycles per
transaction, with a maximum of~3.
Warnings split into attempted arbitrages
(negative profit after gas, 68\%) and real cycles
downgraded by leftover transfers (32\%),
matching the distribution observed on Ethereum
and Arbitrum.

\paragraph{Case study.}
We examine one confirmed
arbitrage in detail.\footnote{%
\texttt{0x\seqsplit{83c2295b1c17312a0598bb6dc6b9f8f12646f48b30a0423397c8acd7e2504d26}}}
The bot contract (\texttt{0xaf70}) sends
\textsc{wbnb} to pool~$\mathcal{P}_1$
(\texttt{0xe3c1}), which returns
token~\texttt{0xf310}.  The bot forwards it to
pool~$\mathcal{P}_2$ (\texttt{0x3cd4}), which
returns token~\texttt{0x570a}.  Finally,
pool~$\mathcal{P}_3$ (\texttt{0xbffe}) converts
back to \textsc{wbnb} with a positive surplus.
The fixpoint labels the composed cycle as an
arbitrage (R14) with a positive \textsc{wbnb}
balance and zero leftovers:
a clean confirmed arbitrage.

\paragraph{Structural equivalence.}
After rewriting, the reduced form of this BSC
transaction is structurally identical to the
triangular arbitrages on Ethereum and Arbitrum:
a single annotated cycle with three chained swaps,
each through a different pool.  The same rules
(R1 for leaf chaining, R14 for annotation) fire
in the same order and produce the same canonical
shape, despite differences in block times,
gas pricing, and pool contract implementations
across the three chains.

\subsection{Implications}

The two case studies confirm three properties of
the approach.
First, the rewriting rules, the sender
enrichment~$\sigma$, and the cycle-annotation logic
transfer without modification to chains with
different consensus mechanisms (Arbitrum's
centralized sequencer vs.\ BSC's Proof of Staked
Authority), block times (sub-second vs.\ 3\,s),
and gas pricing models.  The only chain-specific
parameter is the wrapped-asset address, a
deployment constant.
Second, OSINT labels (pool names, protocol tags)
are unavailable on both Arbitrum and BSC, yet
detection operates correctly without them.  This
confirms the architectural separation between the
decode and analysis layers: the algorithm reasons
over token flows and sender fields, not over
external metadata.
Third, the detected arbitrages on all three chains
exhibit the same structural patterns (triangular,
multi-hop, batched), produced by the same fixpoint
rules.  MEV strategies are chain-agnostic in
structure; only the token addresses and pool
contracts differ.

\paragraph{Beyond the EVM.}
The three validations cover EVM-compatible chains,
where the trace format is identical by construction.
Property~\ref{prop:dse} also holds for non-EVM
virtual machines.  Porting
to Solana's SVM, for instance, would require only a
new decode layer: SPL~Token instructions (not
events) provide the transfer tuples, and
Cross-Program Invocations form a call tree
analogous to nested \texttt{CALL} frames.  The
$\sigma$ field maps to the CPI invoker.  The
rewriting rules, the fixpoint, and all formal
properties would remain unchanged.

%% ---------------------------------------------------------
\section{ArbiNet Comparison: Case Study}
\label{sec:arbinet:case}
%% ---------------------------------------------------------

To understand the qualitative differences between
the structural and GNN-based approaches, we
examine a randomly chosen block 24\,100\,069 in detail.\footnote{%
Full transaction hashes are recoverable from any
Ethereum explorer by matching the short prefix
within the block.}
Of its 311~transactions, ArbiNet flags 5 as
arbitrages and our system flags 5 as arbitrage
and 1 as warning
(Table~\ref{tab:block69}).  The results are
manually checked.

\begin{table}[t]
\caption{Per-transaction verdicts.
  A~=~ArbiNet, O~=~Ours, E~=~EigenPhi.
  TP~=~true positive (confirmed arbitrage);
  FP\textsubscript{A}/FN\textsubscript{A}~=~ArbiNet
  false positive/negative.  Assessment based on
  manual AST inspection.}
\label{tab:block69}
\centering
\footnotesize
\begin{tabular}{@{}lcccl@{}}
\toprule
Tx & A & O & E & Assessment \\
\midrule
\texttt{0x4573} & arb & arb & arb & TP, 3/3 agree \\
\texttt{0x12f4} & arb & warn & --- & TP, mixed final balance \\
\texttt{0xf170} & arb & arb & arb & TP, 3/3 agree \\
\texttt{0x8dc9} & arb & --- & --- & FP\textsubscript{A}: CoW, $\tau_{\mathrm{in}} \neq \tau_{\mathrm{out}}$ \\
\texttt{0x0fdc} & arb & arb & arb & TP, 3/3 agree \\
\texttt{0x4744} & --- & arb & arb & FN\textsubscript{A}: 3-pool, post-2022 \\
\texttt{0x9d71} & --- & arb & arb & FN\textsubscript{A}: V4, post-2022 \\
\bottomrule
\end{tabular}
\end{table}

\paragraph{Three-way agreement.}
Transactions \texttt{0x4573}, \texttt{0xf170}, and
\texttt{0x0fdc} are flagged as arbitrage by all
three systems.  These form a high-confidence
consensus set: three independent approaches
(protocol-specific heuristics, GNN classification,
and structural rewriting) agree on the same
transactions.

\paragraph{Partial agreement.}
Transaction \texttt{0x12f4} is classified as
\textit{arbitrage} by ArbiNet and as
\textit{warning} by our system; EigenPhi does
not flag it.  The
cyclic structure exists and the cycle is
correctly identified, but gas costs exceed the
gross profit, producing a mixed final balance.
This is a correct verdict under the graduated
confidence model
(\S\ref{sec:classify}): the cycle is real,
but the final balance is mixed (positive in one
token, negative in another after gas), making the
economic outcome indeterminate without a price
oracle.  EigenPhi's absence is consistent: the
transaction's profitability is unclear, and
EigenPhi may require a definitive positive
balance.

We now turn to the cases where the systems
disagree fundamentally.

\begin{table}[t]
\caption{\texttt{0x8dc9}: ArbiNet false positive.
  CoW Protocol batch settlement: both cycles have
  $\tau_{\mathrm{in}} \neq \tau_{\mathrm{out}}$
  (USDC $\neq$ USDT).}
\label{tab:fp8dc9}
\centering
\begin{small}
\begin{tabular}{@{}clll@{}}
\toprule
& From $\to$ To & Amount & Token \\
\midrule
\multicolumn{4}{@{}l}{\textit{Cycle~1
  \textcolor{okPurple}{(cycle, not promoted)}}}\\
1 & \textcolor{okVermillion}{\addr{0xe6eb}
  $\to$ CoW Settlement}
  & 1\,022.06 & \textsc{usdc} \\
2 & \textcolor{okGreen}{CoW Settlement
  $\to$ \addr{0xe6eb}}
  & 1\,022.47 & \textsc{usdt} \\
\addlinespace
\multicolumn{4}{@{}l}{\textit{Cycle~2
  \textcolor{okPurple}{(composite, not promoted)}}}\\
\multicolumn{4}{@{}l}{\footnotesize\textit{%
  swap 1: \textsc{usdc} $\to$ \textsc{bold}
  (Balancer)}}\\
1 & \textcolor{okVermillion}{CoW Settlement
  $\to$ Balancer vault}
  & 1\,021.88 & \textsc{usdc} \\
2 & Balancer vault $\to$ CoW Settlement
  & 1\,022.84 & \textsc{bold} \\
\cdashline{1-4}
\multicolumn{4}{@{}l}{\footnotesize\textit{%
  swap 2: \textsc{bold} $\to$ \textsc{usdt}
  (Uniswap V4)}}\\
3 & CoW Settlement $\to$ Uniswap V4
  & 1\,022.84 & \textsc{bold} \\
4 & \textcolor{okGreen}{Uniswap V4
  $\to$ CoW Settlement}
  & 1\,022.50 & \textsc{usdt} \\
\bottomrule
\end{tabular}
\end{small}
\end{table}

\begin{table}[t]
\caption{\texttt{0x4744}: ArbiNet false negative.
  3-pool multi-hop arbitrage:
  $\tau_{\mathrm{in}}$ = WETH $=_\tau$
  $\tau_{\mathrm{out}}$ = ETH.
  Profit: $+2.0 \times 10^{-5}$ WETH.}
\label{tab:fn4744}
\centering
\begin{small}
\begin{tabular}{@{}clll@{}}
\toprule
& From $\to$ To & Amount & Token \\
\midrule
\multicolumn{4}{@{}l}{\textit{Cycle~0
  \textcolor{okOrange}{(arbitrage,
  fixpoint)}}}\\
\multicolumn{4}{@{}l}{\footnotesize\textit{%
  swap 1: \textsc{weth} $\to$ \textsc{mkr}
  (Balancer)}}\\
1 & \textcolor{okVermillion}{Bot $\to$ Balancer}
  & 0.0105 & \textsc{weth} \\
2 & Balancer $\to$ Bot & 0.0205 & \textsc{mkr} \\
\cdashline{1-4}
\multicolumn{4}{@{}l}{\footnotesize\textit{%
  swap 2: \textsc{mkr} $\to$ \textsc{bnt}
  (Bancor)}}\\
3 & Bot $\to$ Bancor & 0.0205 & \textsc{mkr} \\
4 & Bancor $\to$ Bot & 77.47 & \textsc{bnt} \\
\cdashline{1-4}
\multicolumn{4}{@{}l}{\footnotesize\textit{%
  swap 3: \textsc{bnt} $\to$ \textsc{eth}
  (Pool$_3$)}}\\
5 & Bot $\to$ Pool$_3$ & 77.47 & \textsc{bnt} \\
6 & \textcolor{okGreen}{Pool$_3$ $\to$ Bot}
  & 0.0105 & \textsc{eth} \\
\addlinespace
\multicolumn{4}{@{}l}{\textit{Cycle~1
  \textcolor{okSkyBlue}{(token mint,
  not promoted)}}}\\
1 & Bot $\to$ WETH & 0.01049 & \textsc{eth} \\
2 & WETH $\to$ Bot & 0.01049 & \textsc{weth} \\
\bottomrule
\end{tabular}
\end{small}
\end{table}

\begin{table}[t]
\caption{\texttt{0x9d71}: ArbiNet false negative.
  V4 triangular arbitrage:
  $\tau_{\mathrm{in}}$ = ETH $=_\tau$
  $\tau_{\mathrm{out}}$ = WETH.
  Profit: $+7.2 \times 10^{-6}$ WETH.}
\label{tab:fn9d71}
\centering
\begin{small}
\begin{tabular}{@{}clll@{}}
\toprule
& From $\to$ To & Amount & Token \\
\midrule
\multicolumn{4}{@{}l}{\textit{Cycle~0
  \textcolor{okOrange}{(arbitrage,
  fixpoint)}}}\\
\multicolumn{4}{@{}l}{\footnotesize\textit{%
  swap 1: \textsc{eth} $\to$ \texttt{0xaf04}
  (V4)}}\\
1 & \textcolor{okVermillion}{Bot $\to$ V4}
  & 0.00666 & \textsc{eth} \\
2 & V4 $\to$ Pool & 596.56 & \texttt{0xaf04} \\
\cdashline{1-4}
\multicolumn{4}{@{}l}{\footnotesize\textit{%
  swap 2: \texttt{0xaf04} $\to$ \textsc{weth}
  (Pool)}}\\
3 & \textcolor{okGreen}{Pool $\to$ Bot}
  & 0.00666 & \textsc{weth} \\
\addlinespace
\multicolumn{4}{@{}l}{\textit{Cycle~1
  \textcolor{okBlue}{(token burn,
  not promoted)}}}\\
1 & Bot $\to$ WETH & 0.00666 & \textsc{weth} \\
2 & WETH $\to$ Bot & 0.00666 & \textsc{eth} \\
\bottomrule
\end{tabular}
\end{small}
\end{table}

\paragraph{ArbiNet false positive (\texttt{0x8dc9}).}
This transaction is a CoW Protocol batch settlement
routing through Balancer and Uniswap~V4.
ArbiNet classifies it as an arbitrage; our system
disagrees.  Table~\ref{tab:fp8dc9} shows the
reduced AST: two closed-loop chains, both labeled
\textit{cycle}.  In Cycle~1, a user sends
USDC and receives USDT; in Cycle~2, the settlement
routes USDC through an intermediate token
(\textsc{bold}) and returns USDT.  Both cycles
have $\tau_{\mathrm{in}} = \text{USDC} \neq
\tau_{\mathrm{out}} = \text{USDT}$: the token
equivalence $=_\tau$ identifies only the
native/wrapped pair (ETH/WETH), not stablecoins.
Our system correctly reports zero arbitrage cycles.
One possible explanation is that ArbiNet's GNN
treats stablecoins as interchangeable: both are
pegged to \$1, so a USDC$\to$USDT flow looks
economically equivalent to a same-token cycle.
But structurally it is not: the arbitrage condition
(Definition~\ref{def:arb}) requires token-equivalent
boundaries, not price equivalence.

\paragraph{ArbiNet false negative: \texttt{0x4744}
(multi-hop, 3 pools).}
Both our system and EigenPhi confirm this as an
arbitrage; ArbiNet misses it.
Table~\ref{tab:fn4744} shows the reduced AST:
the bot routes WETH through Balancer
($\to$~MKR), Bancor ($\to$~BNT), and a third pool
($\to$~ETH).  The fixpoint labels the composed
cycle as arbitrage:
$\tau_{\mathrm{in}}$ = WETH $=_\tau$
$\tau_{\mathrm{out}}$ = ETH, $s(C) = d(C)$ = Bot.
A second cycle (token mint) wraps the
profit from ETH to WETH.
ArbiNet misses this transaction because the routing
contracts are absent from its 2022 training data.

\paragraph{ArbiNet false negative: \texttt{0x9d71}
(Uniswap~V4, triangular).}
The pattern repeats with a different topology.
Table~\ref{tab:fn9d71} shows the reduced AST:
the bot sends ETH through the Uniswap~V4
singleton router ($\to$~\texttt{0xaf04}), then a
second pool converts back to WETH.  The fixpoint
labels the cycle as arbitrage:
$\tau_{\mathrm{in}}$ = ETH $=_\tau$
$\tau_{\mathrm{out}}$ = WETH.
A second cycle (token burn) unwraps the
settlement.  Again, ArbiNet misses it because the
V4 singleton postdates its training data.

\paragraph{Emergent swap identification.}
ERC-20 mandates \texttt{Transfer} events but not
swap-specific events such as \texttt{Swap} or
\texttt{TokenExchange}.  Any pair of
\texttt{Transfer} events with complementary
directions and the same~$\sigma$ is structurally
a swap, regardless of declared events.  For a
Curve/Uniswap-V3 transaction the algorithm
recovers 10~swaps from transfers alone, where the
contracts emit only 4 \texttt{Swap} events,
revealing pool interactions invisible to
event-based decoding.  Swap structure is not an
input to the algorithm; it is an output.

\paragraph{Summary.}
The case study illustrates three structural
advantages of the rewriting approach over
GNN-based classification.
First, the token-identity condition at cycle
boundaries filters false positives that
graph-topology classifiers cannot distinguish.
Second, the fixpoint produces formally guaranteed
detections on contracts absent from any training
set, eliminating temporal degradation.
Third, the reduced AST is a readable decomposition
of fund flow into constituent swaps, providing the
analyst with an explainable structural account.

\end{document}